\newtcolorbox{mybox}[2][]
{enhanced, attach boxed title to top center={yshift=-2mm},
	colback=white,
	coltitle=black,
	colbacktitle=white,
	title=#2}
\def\BibTeX{{\rm B\kern-.05em{\sc i\kern-.025em b}\kern-.08em
    T\kern-.1667em\lower.7ex\hbox{E}\kern-.125emX}}
\newcommand{\eat}[1]{}
\newcommand{\eg}{{\textrm{e.g.}}\xspace}
\newcommand{\etc}{{\textrm{etc.}}\xspace}
\newcommand{\ie}{{\textrm{i.e.}}\xspace}
\newcommand{\etal}{{\textrm{et al.}}\xspace}
\newcommand{\resp}{{\textrm{resp.}}\xspace}
\newcommand{\calD}{\mathcal{D}}
\newcommand{\calF}{\mathcal{F}}
\newcommand{\simplex}{\mathbb{S}}
\newcommand{\real}{\mathbb{R}\xspace}
\newtheorem{theorem}{Theorem}
\newtheorem{lemma}{Lemma}
\newtheorem{example}{Example}
\newtheorem{problem}{Problem}
\newcommand{\removelatexerror}{\let\@latex@error\@gobble}
\begin{document}

\title{Computing All Restricted Skyline Probabilities \\on Uncertain Datasets\\
\thanks{This work is supported by the National Natural Science Foundation of China (NSFC) Grant NOs. 61832003.}
}

\author{\IEEEauthorblockN{Xiangyu Gao}
\IEEEauthorblockA{
\textit{Harbin Institute of Technology}\\
Harbin, China \\
gaoxy@hit.edu.cn}
\and
\IEEEauthorblockN{Jianzhong Li} 
\IEEEauthorblockA{
\textit{Shenzhen Institute of Advanced Technology} \\
\textit{Chinese Academy of Sciences}\\
Shenzhen, China \\
lijzh@siat.ac.cn}
\and
\IEEEauthorblockN{Dongjing Miao}
\IEEEauthorblockA{
\textit{Harbin Institute of Technology}\\
Harbin, China \\
miaodongjing@hit.edu.cn}
}

\maketitle

\begin{abstract}
	Restricted skyline (rskyline) query is widely used in multi-criteria decision making.
    It generalizes the skyline query by additionally considering a set of personalized scoring functions $\calF$.
    Since uncertainty is inherent in datasets for multi-criteria decision making, we study rskyline queries on uncertain datasets from both complexity and algorithm perspective.
    We formalize the problem of computing rskyline probabilities of all data items and show that no algorithm can solve this problem in truly subquadratic-time, unless the orthogonal vectors conjecture fails.
    Considering that linear scoring functions are widely used in practical applications, we propose two efficient algorithms for the case where $\calF$ is a set of linear scoring functions whose weights are described by linear constraints, one with near-optimal time complexity and the other with better expected time complexity.
	For special linear constraints involving a series of weight ratios, we further devise an algorithm with sublinear query time and polynomial preprocessing time.
	Extensive experiments demonstrate the effectiveness, efficiency, scalability, and usefulness of our proposed algorithms.
\end{abstract}

\begin{IEEEkeywords}
Uncertain data, probabilistic restricted skyline
\end{IEEEkeywords}

\section{Introduction}

Restricted skyline (rskyline) query is a powerful tool for supporting multi-criteria decision making, which extends the skyline query by serving the specific preferences of an individual user.
Given a dataset of multidimensional objects and a set of monotone scoring functions $\calF$, the rskyline query retrieves the set of objects that are not $\calF$-dominated by any other object.
Here an object $t$ is said to $\calF$-dominate another object $s$ if $t$ scores better than $s$ under all functions in $\calF$.
It was shown that objects returned by the rskyline query preserve the best score with respect to any function in $\calF$, and the result size is usually smaller compared to the skyline query~\cite{DBLP:journals/pvldb/CiacciaM17}.
Due to its effectiveness and wide applications, many efficient algorithms have been proposed to efficiently answer rskyline queries on datasets where no uncertainty is involved~\cite{DBLP:journals/pvldb/CiacciaM17, DBLP:conf/icde/Liu0ZP021}.

However, uncertainty is inherent in datasets used for multi-criteria decision making caused by limitations of measuring equipment, privacy issues, data incompleteness, outdated data sources,~\etc~\cite{DBLP:journals/fcsc/LiWLG20}.
Below are two application scenarios that involve answering rskyline queries on uncertain datasets.

\underline{\it E-commerce Scenario:}
Probabilistic selling is a novel sales strategy in e-commerce~\cite{DBLP:journals/mktsci/FayX08}.
Sellers create probabilistic products by setting the probability of getting any one from a set of products.
A typical case is renting cars on {\it Hotwire}~(\url{www.hotwire.com/car-rentals/}).
The platform groups cars with varying horsepower (HP) and miles per gallon (MPG) by categories (\eg, compact SUV, median sedan) and abstracts these groups as probabilistic cars.
When customers choose a probabilistic car, the platform will provide any car from the corresponding group to them with a predetermined probability.
All probabilistic cars form an uncertain dataset for customers to make multi-criteria decisions.
Suppose the score of a car is defined as the weighted sum of its attributes.
It is unrealistic to expect customers to precisely determine weights of attributes.
They can only specify rough demands like MPG is more important than HP.
Then performing rskyline queries on such uncertain dataset with $\calF = \{\omega_1 \textsc{HP} + \omega_2 \textsc{MPG} \mid \omega_1 \le \omega_2\}$ can retrieve choices with high probabilities getting a car with good fuel economy to aid decision-making.

\underline{\it Prediction Service:}
With the rapid development of machine learning, prediction services are commonly provided in fields such as finance~\cite{DBLP:journals/eswa/ZhangCXLL18}, disease control~\cite{zoabi2021machine}, healthcare~\cite{mujumdar2019diabetes}, \etc
For example, given historical data of stock market, algorithms like~\cite{DBLP:journals/eswa/ZhangCXLL18} can predict the price (P) and growth rate (GR) of a socket.
Such prediction is usually associated with a confidence value (\ie, the probability) and all predictions form an uncertain dataset.
By performing rskyline queries over this uncertain dataset with $\calF = \{\omega_1 \text{P} + \omega_2 \text{GR} \mid 0.5 \times \omega_2 \le \omega_1 \le 2 \times \omega_2\}$, we can mine an overview of stocks with high probabilities of having advantages in both price and growth rates.

Motivated by these applications, in this paper, we investigate how to conduct rskyline queries on uncertain datasets.
Similar to previous work on uncertain datasets~\cite{DBLP:conf/sigmod/GeZM09, DBLP:conf/pods/AtallahQ09, DBLP:conf/vldb/PeiJLY07, DBLP:journals/tods/AtallahQY11, DBLP:journals/mst/AfshaniAALP13, DBLP:journals/tkde/KimIP12, DBLP:conf/pakdd/NguyenC15}, we model uncertainty in a dataset by describing each uncertain object with a discrete probability distribution over a set of instances.
Then, we adopt the possible world semantics~\cite{DBLP:conf/sigmod/AbiteboulKG87} and define the rskyline probability of an object as the accumulated probabilities of all possible worlds that have one of its instances in their rskylines.
Instead of identifying objects with top-$k$ rskyline probabilities or rskyline probabilities above a given threshold, we study the problem of computing rskyline probabilities of all objects.
This overcomes the difficulty of selecting an appropriate threshold and is convenient for users to retrieve results with different sizes.

To our knowledge, no work has been done to address this problem to date.
Previous researches on uncertain datasets performed different tasks, such as skyline queries (\eg,~\cite{DBLP:conf/vldb/PeiJLY07, DBLP:journals/is/ZhangZLJP11, DBLP:journals/ijon/YangLZMG18, DBLP:journals/isci/YongLKH14}), top-$k$ queries (\eg,~\cite{DBLP:conf/icde/HuaPZL08, DBLP:conf/icde/SolimanIC07, DBLP:journals/dpd/WangSY16, DBLP:conf/icde/YiLKS08, DBLP:conf/sigmod/GeZM09}), \etc
These work only considered uncertainty in datasets but not in user preferences.
However, as stated in~\cite{DBLP:journals/pvldb/MouratidisT18, DBLP:conf/sigmod/MouratidisL021}, determining a precise scoring function for the user, which is required for top-$k$ queries, is hardly realistic, and skyline queries lack personalization.
The input $\calF$ to our problem overcomes this by considering all possible scoring functions of the user, which can be efficiently learned by algorithms like~\cite{DBLP:journals/pvldb/QianGJ15}.
Meanwhile, due to the neglect of uncertainty in datasets, existing algorithms for answering rskyline queries~\cite{DBLP:journals/pvldb/CiacciaM17, DBLP:conf/icde/Liu0ZP021} can not be applied to our problem.
An alternative approach is to convert the uncertain dataset into a certain one by representing each attribute of an object with an aggregate function like weighted sum.
However, such aggregated values lose important distribution information.
Objects with equal aggregated values but different instances will be treated equally.
Our experiments show that this makes the aggregated result ignore objects with slightly lower aggregated values, but still appear in the rskyline result of a great number of possible worlds.
We also observe that objects in the aggregated result with low rskyline probabilities will have many instances $\calF$-dominated by others' instances.
They are actually less attractive as they only belong to the rskyline result in a small set of possible worlds.

The work most related to ours is researches on computing skyline probabilities of all objects on uncertain datasets~\cite{DBLP:conf/pods/AtallahQ09, DBLP:journals/tods/AtallahQY11, DBLP:journals/mst/AfshaniAALP13, DBLP:journals/tkde/KimIP12}.
This problem is a special case of our problem because the dominance relation is equivalent to the $\calF$-dominance relation when $\calF$ contains all monotone scoring functions~\cite{DBLP:journals/pvldb/CiacciaM17}.
Although efficient algorithms were proposed for computing all skyline probabilities in~\cite{DBLP:conf/pods/AtallahQ09, DBLP:journals/tods/AtallahQY11, DBLP:journals/mst/AfshaniAALP13, DBLP:journals/tkde/KimIP12}, none of them investigated the hardness of this problem.
By establishing a fine-grained reduction from the orthogonal vector problem~\cite{DBLP:conf/stacs/Bringmann19}, we prove that no algorithm can compute rskyline probabilities of all objects within truly subquadratic time.
This also proves the near optimality of algorithms proposed in~\cite{DBLP:conf/pods/AtallahQ09, DBLP:journals/tods/AtallahQY11, DBLP:journals/mst/AfshaniAALP13} for computing all skyline probabilities.

In practice, one of the most common ways of specifying $\calF$ is to impose linear constraints on weights in linear scoring functions~\cite{DBLP:journals/pvldb/CiacciaM17}.
Unfortunately, existing algorithms for computing all skyline probabilities~\cite{DBLP:conf/pods/AtallahQ09, DBLP:journals/tods/AtallahQY11, DBLP:journals/mst/AfshaniAALP13, DBLP:journals/tkde/KimIP12} do not suit for this case.
The reason is that the constraints on weights makes the instance's dominance region, \ie, the region contains all instances $\calF$-dominated by this instance, irregular.
We overcome this obstacle by mapping instances into a higher dimensional data space.
With this methodology, we propose a near-optimal algorithm with time complexity $O(n^{2 - 1/d'})$, where $d'$ is the dimensionality of the mapped data space.
Furthermore, by conducting the mapping on the fly and designing effective pruning strategies, we propose an algorithm with better expected time complexity based on the branch-and-bound paradigm.

Then, we focus on a special linear constraint called weight ratio constraint, which is also studied in~\cite{DBLP:conf/icde/Liu0ZP021} for rskyline queries on certain datasets.
In such case, we improve the time complexity of the $\calF$-dominance test from $O(2^{d-1})$ to $O(d)$.
This newly proposed test condition implies a Turing reduction from the problem of computing rskyline probabilities of all objects to the half-space reporting problem~\cite{agarwal2017simplex}.
Based on this reduction, we propose an algorithm with polynomial preprocessing time and $O(2^dmn\log{n})$ query time, where $m$ and $n$ is the number of objects and instances, respectively.
Subsequently, we introduce the {multi-level} strategy and the {data-shifting} strategy to further improve the query time complexity to $O(2^{d-1}\log{n} + n)$.
Although this algorithm is somewhat inherently theoretical, experimental results shows that its extension for this special rskyline query on certain datasets outperforms the state-of-the-art index-based method proposed in~\cite{DBLP:conf/icde/Liu0ZP021}.
The main contributions of this paper are summarized as follows.
\begin{enumerate}[$\bullet$]
	\item We formalize the problem of computing rskyline probabilities of all objects and prove that there is no algorithm can solve this problem in $O(n^{2-\delta})$ time for any $\delta > 0$, unless the orthogonal vectors conjecture fails.
	\item When $\calF$ is a set of linear scoring functions whose weights are described by linear constraints, we propose an near-optimal algorithm with time complexity $O(n^{2 - 1/d'})$, where $d'$ is the number of vertices of the preference region, and an algorithm with expected time complexity $O(mn\log{n})$.
	\item When $\calF$ is a set of linear scoring functions whose weights are described by weight ratio constraints, we propose an algorithm with polynomial preprocessing time and $O(2^{d-1}\log{n} + n)$ query time.
	\item We conduct extensive experiments over real and synthetic datasets to demonstrate the effectiveness of the problem studied in this paper and the efficiency and scalability of the proposed algorithms.
\end{enumerate}

\section{Problem Definition and Hardness}\label{sec:preliminary}

\eat{
\begin{table}[t]
	\centering
	\caption{The summary of notations.}\vspace{-2mm}
	\label{table:notaions}
	\begin{tabularx}{\linewidth}{|c|X|}
		\hline
		{\bf Notation} & {\bf Definition} \\ \hline \hline
		$\calD$ & $d$-dimensional uncertain dataset \\ \hline
		$m$ & number of uncertain objects in $\calD$, \ie, $m = |\calD|$ \\ \hline
		$T_i$ & 1) the $i$th uncertain object in $\calD$ \newline 2) the set of instances $\{t_{i,1}, \cdots, t_{i, n_i}\}$ of $T_i$ \\ \hline
		$I$ & the set of all instances, \ie, $I = \bigcup_{i=1}^m T_i$ \\ \hline
		$n$ & number of instances in $I$, \ie, $n = |I|$ \\ \hline
		$t$ & instance in $I$ \\ \hline
		$\calF$ & a set of monotone scoring functions \\ \hline
		$t \prec_\calF s$ & $t$ $\calF$-dominates $s$ \\ \hline
		${\rm Pr}_{\rm rsky}(t)$ & restricted skyline probability of $t$ \\ \hline
		$\omega$ & weight (preference) in the standard simplex $\simplex^{d-1}$ \\ \hline
		$S_\omega(t)$ & score of $t$ under $\omega$, \ie, $S_\omega(t) = \sum^d_{i = 1}\omega[i] \times t[i]$ \\ \hline
		$S_{_V}(t)$ & score vector of $t$ under $V = \{\omega_1, \cdots, \omega_{d'}\}$, \ie, $S_{_V}(t) = (S_{\omega_1}(t), \cdots, S_{\omega_{d'}}(t))$ \\ \hline
		$\Omega$ & preference region characterized by a set of linear constraints $A \times \omega \le b$, \ie, $\Omega = \{\omega \in \simplex^{d-1} \mid A \times \omega \le b\}$ \\ \hline
    \end{tabularx}
\end{table}
}

\subsection{Restricted Skyline}

Let $D$ be a $d$-dimensional dataset consisting of $n$ objects.
Each object $t \in D$ has $d$ numeric attributes, denoted by $t = (t[1], \cdots, t[d])$.
W.l.o.g., we assume lower values are preferred than higher ones.
Given a {scoring function} $f : \mathbb{R}^d \to \mathbb{R}^+$, the value $f(t[1], \cdots, t[d])$ is called the {score} of $t$ under $f$, also written as $f(t)$.
Function $f$ is called {monotone} if for any two objects $t$ and $s$, it holds that $f(t) \le f(s)$ if $\forall 1 \le i \le d$, $t[i] \le s[i]$.
Let $\calF$ be a set of monotone scoring functions, an object $t$ $\calF$-{dominates} another object $s \ne t$, denoted by $t \prec_\calF s$, if $\forall f \in \calF$, $f(t) \le f(s)$.
The {restricted skyline} (rskyline) of $D$ with respect to $\calF$ is the set of objects that are not $\calF$-dominated by any other object, \ie, ${\rm RSKY}(D, \calF) = \{t \in D \mid \nexists s \in D, s \prec_\calF t\}$.

\subsection{Restricted Skyline Probability}

Let $\calD$ denote a $d$-dimensional uncertain dataset including $m$ objects.
Each uncertain object $T_i \in \calD$ is a discrete probability distribution over the $d$-dimensional data space.
In other word, the sample space of $T_i$ is a set of points $\{t_{i,1}, \cdots, t_{i, n_i}\}$ in the $d$-dimensional data space.
Each point $t_{i, j}$ is called an {instance} of $T_i$ and $T_i$ has probability $p(t_{i, j})$ to occur as $t_{i, j}$.
We also use $T_i$ to denote the set of its instances $\{t_{i, 1}, \cdots, t_{i, n_i}\}$ and write $t \in T_i$ to mean that $t$ is an instance of $T_i$.
For any object $T_i$, we assume $\sum_{t \in T_i} p(t_i) \le 1$ and $T_i$  can only take one instance at a time.
Let $I = \cup_{i = 1}^m T_i$ denote the set of all instances and $n = |I| = \sum_{i = 1}^m n_i$.
To cope with datasets of large scale, we use a spatial index R-tree to organize $I$.

Similar to previous work~\cite{DBLP:conf/vldb/PeiJLY07, DBLP:conf/pods/AtallahQ09, DBLP:journals/tods/AtallahQY11, DBLP:journals/mst/AfshaniAALP13, DBLP:journals/tkde/KimIP12, DBLP:conf/cikm/LiuZXLL15, DBLP:journals/tkde/LiuYYL13}, we adopt the possible world semantics~\cite{DBLP:conf/sigmod/AbiteboulKG87} and assume objects are independent of each other.
The uncertain dataset $\mathcal{D}$ is interpreted as a probability distribution over a set of datasets $D \sqsubseteq \mathcal{D}$ obtained by sampling each object $T_i$.
And the probability of observing the possible world $D$ is
\begin{equation}\label{eq:pwprob}
	\Pr(D) = \prod_{t \in D} p(t) \cdot \prod_{1 \le i \le m, |T_i \cap D| = 0} (1 - \sum_{t \in T_i}p(t)).
\end{equation}

Given an uncertain dataset $\calD$ and a set of monotone scoring functions $\calF$, the rskyline probability of an instance $t \in T_i$ is the accumulated possible world probabilities of all possible worlds that have $t$ in their rskyline with respect to $\calF$. Formally,
\begin{equation}\label{eq:rskyprob-pw}
	\mathrm{Pr}_{\rm rsky}(t) = \sum_{D \sqsubseteq \mathcal{D}} \times \mathbf{1}[t \in {\rm RSKY}(D, \calF)]
\end{equation}
where $\mathbf{1}[\cdot]$ is the indicator function.
And the rskyline probability of an object $T_i$, denoted by $\Pr_{\rm rsky}(T_i)$, is defined as the sum of rskyline probabilities of all its instances.

\begin{figure}[htp]
	\includegraphics[width=\linewidth]{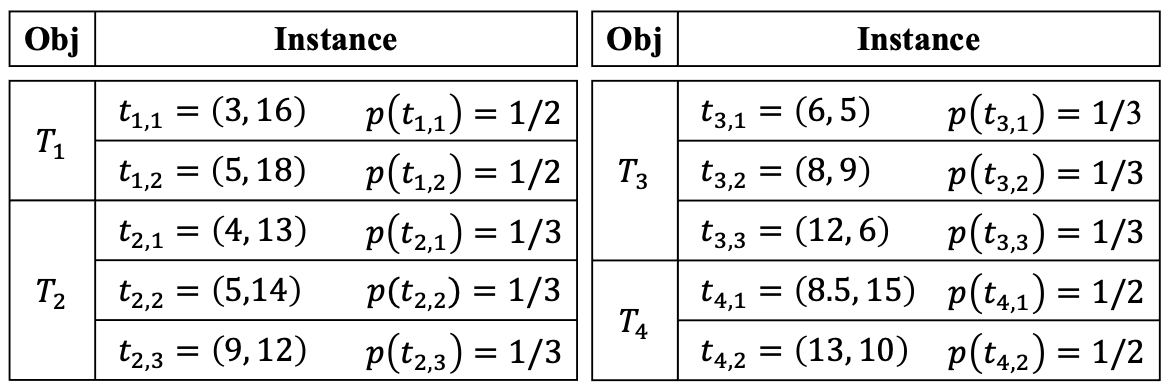}\vspace{-2mm}
	\caption{{An uncertain dataset $\calD$ of 4 objects and 10 instances.}}
	\label{fig:model-illustration}
\end{figure}

\begin{example}\label{ep}
	Consider the uncertain dataset $\calD$ shown in Fig.~\ref{fig:model-illustration}.
	$D = \{t_{1,1}, t_{2,1}, t_{3,1}, t_{4,1}\}$ is a possible world of $\calD$ and $\Pr(D) = p(t_{1,1})\times p(t_{2,1}) \times p(t_{3,1}) \times p(t_{4,1}) = 1/36$.
	Let $\calF = \{\omega[1]t[1] + \omega[2]t[2] \mid 0.5\times \omega[2] \le \omega[1] \le 2 \times \omega[2]\}$, the set of possible worlds that have $t_{1,1}$ in their rskyline with respect to $\calF$ is $S = \{t_{1,1}\} \times \{t_{2,2}, t_{2,3}\} \times \{t_{3,2}, t_{3,3}\} \times \{t_{4,1}, t_{4,2}\}$.
	Therefore, $\Pr_{\rm rsky}(t_{1,1}) = \sum_{D \in S} \Pr(D) = 2/9$.
	Similarly, we know $\Pr_{\rm rsky}(t_{1,2}) = 0$.
	Hence, $\Pr_{\rm rsky}(T_1) = \Pr_{\rm rsky}(t_{1,1}) + \Pr_{\rm rsky}(t_{1,2}) = 2/9$.
\end{example}

The main problem studied in this paper is as follows.

\begin{problem}[{All RSkyline Probabilities} (ARSP)]
	Given an uncertain dataset $\calD = \{T_1, \cdots, T_m\}$ and a set of monotone scoring functions $\calF$, compute rskyline probabilities of all instances in $I = \cup^m_{i = 1} T_i$, \ie, return the set
	\[{\rm ARSP} = \{(t, {\rm Pr}_{\rm rsky}(t)) \mid t \in I\}.\]
\end{problem}


\subsection{Conditional Lower Bound}

We show that no algorithm can solve the ARSP problem in truly subquadratic time without preprocessing, unless the orthogonal vectors conjecture fails.

\noindent{$\blacktriangleright$ \bf Orthogonal Vectors Conjecture~\cite{DBLP:conf/stacs/Bringmann19}.} Given two sets $A, B$, each of $n$ vectors in $\{0, 1\}^d$, for every $\delta > 0$, there is a $c \ge 1$ such that no $O(n^{2 - \delta})$-time algorithm can determine if there is a pair $(a, b) \in A\times B$ such that $a \times b = 0$ with $d = c\log{n}$.

\begin{theorem}\label{thm:lower-bound}
	Given an uncertain dataset $\calD$ and a set of monotone scoring functions $\calF$, no algorithm can compute rskyline probabilities of all instances within $O(n^{2-\delta})$ time for any $\delta > 0$, unless the Orthogonal Vectors conjecture fails.
\end{theorem}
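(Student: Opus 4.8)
The plan is to prove Theorem~\ref{thm:lower-bound} by a fine-grained (many--one) reduction from the Orthogonal Vectors (OV) problem to ARSP, instantiating $\calF$ as the set of \emph{all} monotone scoring functions. As recalled in the introduction, for this $\calF$ the relation $\prec_\calF$ coincides with coordinate-wise dominance, so ARSP specializes to computing the skyline probabilities of all instances; the same reduction therefore also establishes the near-optimality of the algorithms of~\cite{DBLP:conf/pods/AtallahQ09, DBLP:journals/tods/AtallahQY11, DBLP:journals/mst/AfshaniAALP13}. Throughout we use that smaller attribute values are preferred, so an instance lies in the restricted skyline of a possible world iff no co-occurring instance dominates it coordinate-wise.

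Given an OV instance $A,B\subseteq\{0,1\}^d$ with $|A|=|B|=n$ and $d=c\log n$, I would use the standard identity $a\cdot b=0 \iff \forall i\ a[i]+b[i]\le 1$, and encode each $a\in A$ as the point $\phi(a)=(a[1],\dots,a[d])$ and each $b\in B$ as the complemented point $\psi(b)=(1-b[1],\dots,1-b[d])$, so that $\phi(a)$ dominates $\psi(b)$ coordinate-wise exactly when $a$ and $b$ are orthogonal. I then form the uncertain dataset $\calD$ consisting of: (i) for every $a\in A$, a certain object $\{\phi(a)\}$ whose single instance has probability $1$; and (ii) one uncertain object $T^B$ whose instance set is $\{\psi(b)\mid b\in B\}$, each instance having probability $1/n$. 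This $\calD$ has $2n=O(n)$ instances, $n+1$ objects, lives in $\mathbb{R}^d$ with $d=O(\log n)$, and is built in $O(n\log n)$ time, so it respects the quantifier structure of the OV conjecture.

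It then remains to verify the correspondence and tally the running time. In any possible world every $\phi(a)$ is present and $T^B$ contributes exactly one instance $\psi(b)$; hence the unique world containing $\psi(b)$ equals $\{\phi(a)\mid a\in A\}\cup\{\psi(b)\}$ and has probability $1/n$, and $\psi(b)$ lies in its restricted skyline iff no $\phi(a)$ dominates it. Thus $\Pr_{\rm rsky}(\psi(b))=1/n$ when no $a\in A$ is orthogonal to $b$ and $\Pr_{\rm rsky}(\psi(b))=0$ otherwise, so $(A,B)$ has an orthogonal pair iff $\min_{b\in B}\Pr_{\rm rsky}(\psi(b))=0$, a test done in $O(n)$ time from the ARSP output. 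Consequently an $O(N^{2-\delta})$-time ARSP algorithm on $N$ instances would decide OV in $O((2n)^{2-\delta}+n\log n)=O(n^{2-\delta'})$ time with $d=\Theta(\log n)$, contradicting the OV conjecture; since $\delta>0$ is arbitrary, the claim follows. The only point that needs care is the dominance semantics when $\phi(a)$ and $\psi(b)$ have identical coordinates (exactly when $a$ is the complement of $b$, still an orthogonal pair): under the definition of $\prec_\calF$ in Section~\ref{sec:preliminary} this case is already dominated, so the reduction is correct as stated, but if a strict-domination convention were used, appending one coordinate that is strictly smaller on every $\phi(a)$ than on every $\psi(b)$ restores strictness without affecting the orthogonality encoding. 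I do not expect a deep obstacle; the crux is simply to choose the encoding so that $\calF$-dominance captures orthogonality exactly while introducing no spurious dominances among the $\psi(b)$'s, and to keep the instance count $O(n)$ and the dimension $O(\log n)$ so the hardness transfers verbatim.
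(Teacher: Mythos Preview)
Your proof is correct and follows essentially the same approach as the paper: a fine-grained reduction from OV that creates $n$ certain single-instance objects from one vector set and a single uncertain object with $n$ equiprobable instances from the other, so that coordinate-wise dominance between a certain instance and an uncertain instance holds exactly when the corresponding vectors are orthogonal, and then reads off the answer from whether some instance of the uncertain object has rskyline probability~$0$. The only cosmetic differences are that the paper swaps the roles of $A$ and $B$ (its uncertain object is built from $A$, yours from $B$), uses the encoding $\xi(a)[i]\in\{1/2,3/2\}$ so that ties between certain and uncertain instances are structurally impossible rather than handled by a separate argument, and instantiates $\calF$ as the $d$ coordinate projections $f_i(t)=t[i]$ rather than all monotone functions---all of which induce the same $\prec_\calF$ and the same conclusion.
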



\begin{proof}
	We establish a fine-grained reduction from the orthogonal vectors problem to the ARSP problem.
	Given two sets $A, B$, each of $n$ vectors in $\{0, 1\}^d$, we construct an uncertain dataset $\calD$ and a set $\calF$ of monotone scoring functions as follows.
	First, for each vector $b \in B$, we construct an uncertain tuple $T_b$ with a single instance $b$ and $p(b) = 1$.
	Then, we construct an uncertain tuple $T_A$ with $n$ instances $\xi(a)$ and $p(\xi(a)) = \frac{1}{n}$ for all vectors $a \in A$, where $\xi(a)[i] = \frac{3}{2}$ if $a[i] = 0$ and $\xi(a)[i] = \frac{1}{2}$ if $a[i] = 1$ for $1 \le i \le d$.
	Finally, let $\calF$ consists of $d$ linear scoring functions $f_i(t) = t[i]$ for $1 \le i \le d$, which means instance $t$ $\calF$-dominates another instance $s$ if and only if $t[i] \le s[i]$ for $1 \le i \le d$.
	We claim that for each instance $\xi(a) \in T_A$, there exists an instance $b$ from other uncertain tuple $T_b$ $\calF$-dominating $\xi(a)$ if and only if $a$ is orthogonal to $b$.
	Suppose there is a pair $(a, b) \in A \times B$ such that $a \times b = 0$, then $a[i] = 0$ or $b[i] = 0$ for $1 \le i \le d$.
	If $a[i] = 0$, then $b[i]$ can be either 0 or 1 and $\xi(a)[i] = \frac{3}{2} > b[i]$.
	Or if $b[i] = 0$, then $a[i]$ can be either 0 or 1 and $\xi(a)[i] \ge \frac{1}{2} > b[i]$.
	That is $b \prec_\calF \xi(a)$.
	On the other side, suppose there is a pair of instances $b$ and $\xi(a)$ such that $b \prec_\calF \xi(a)$.
	For each $1 \le i \le d$, $b[i]$ is either 0 or 1 and $\xi(a)[i]$ is either $\frac{3}{2}$ and $\frac{1}{2}$.
	If $b[i] = 0$, then $b[i]\cdot a[i] = 0$.
	Or if $b[i] = 1$, then $\xi(a)[i] = \frac{3}{2}$ since $b[i] \le \xi(a)[i]$.
	So $a[i] = 0$  according to the mapping $\xi(\cdot)$.
	Hence $a[i] \cdot b[i] = 0$.
	Thus we conclude that there is a pair $(a, b) \in A \times B$ such that $a \times b = 0$ if and only if there exists an instance $\xi(a) \in T_A$ with $\Pr_{\rm rsky}(\xi(a)) = 0$.
	Since $\calD$ can be constructed in $O(nd)$ time and whether such instance exists can be determined in $O(n)$ time, any $O(n^{2-\delta})$-time algorithm for all rskyline probabilities computation for some $\delta > 0$ would yield an algorithm for Orthogonal Vectors in $O(nd + n^{2 - \delta} + n) = O(n^{2 - \delta'})$ time for some $\delta' > 0$ when $d = \Theta(\log{n})$, which contradicts the Orthogonal Vectors conjecture.
\end{proof}
\section{Algorithms for ARSP Problem\\with Linear Scoring Functions}\label{sec:rskyprobalg}

The linear scoring function is one of the most commonly used scoring functions in practice~\cite{DBLP:journals/ior/DyerS79}.
Given a weight $\omega$, the {score} of an object $t$ is defined as $S_\omega(t) = \sum^d_{i = 1} \omega[i]t[i]$.
Since ordering any two objects by $S_\omega(\cdot)$ is independent from the magnitude of $\omega$, we assume $\omega$ belongs to the unit $(d-1)$-simplex $\simplex^{d-1}$, \ie, $\forall 1 \le i \le d$, $\omega[i] \ge 0$, and $\sum^d_{i = 1} \omega[i] = 1$.
To serve the specific preferences of an individual user, a notable approach is to add linear constraints $A \times \omega \le b$ on $\simplex^{d-1}$, where $A$ is a $c \times d$ matrix and $b$ is a $c \times 1$ matrix.
In this section, we propose two efficient algorithms to compute ARSP in case of $\calF = \{S_\omega(\cdot) \mid \omega \in \simplex^{d-1} \wedge A \times \omega \le b\}$.

\subsection{Baseline Algorithms}\label{subsec:bsl}

According to equation~(\ref{eq:rskyprob-pw}), a baseline algorithm to compute ARSP is to enumerate each possible worlds $D \sqsubseteq \calD$, compute ${\rm RSKY}(D, \calF)$, and add $\Pr(D)$ to $\Pr_{\rm rsky}(t)$ for each $t \in {\rm RSKY}(D, \calF)$.
However, this brute force algorithm is infeasible due to the exponential time complexity.

Note that for any $D \sqsubseteq \calD$, an instance $t \in T_i$ belongs to ${\rm RSKY}(D, \calF)$ if and only if $T_i$ occurs as $t$ in $D$ and none of other objects appears as an instance that $\calF$-dominates $t$ in $D$.
Thus, $\Pr_{\rm rsky}(t)$ can be equivalently represented as
\begin{equation}\label{eq:rskyprob-def}
	\mathrm{Pr}_{\rm rsky}(t) = p(t) \cdot \prod_{j = 1, j \ne i}^m(1 - \sum_{s \in T_j, s \prec_\calF t} p(s)).
\end{equation}
The major challenge of equation~(\ref{eq:rskyprob-def}) is to compute the product of probabilities that all other objects occur as instances that do not $\calF$-dominate $t$.
A straight approach is to perform $\calF$-dominance tests between $t$ and all instances from other objects.
With the fact that the {preference region} $\Omega = \{\omega \in \simplex^{d-1} \mid A \times \omega \le b\}$ is a {closed convex polytope}, the $\calF$-dominance relation between two instances can be determined by comparing their scores under the set of vertices $V$ of $\Omega$.
Here a weight $\omega$ is called a vertex of $\Omega$ if and only if it is the unique solution to a $d$-subset inequalities of $A \times \omega \le b$.

\begin{theorem}[$\calF$-{dominance test}~\cite{DBLP:journals/pvldb/CiacciaM17}]
	Given a set of linear scoring functions $\calF = \{S_\omega(\cdot) \mid \omega \in \simplex^{d-1} \wedge A \times \omega \le b\}$, let $V$ be the set of vertices of the preference region $\Omega = \{\omega \in \simplex^{d-1} \mid A \times \omega \le b\}$, an instance $t$ $\calF$-dominates another instance $s$ if and only if $S_\omega(t) \le S_\omega(s)$ holds for all weights $\omega \in V$.
	\label{thm:F-dominace-V}
\end{theorem}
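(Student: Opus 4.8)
The plan is to reduce the claim to the elementary fact that a linear functional on a compact convex polytope attains its extreme values at a vertex. First I would record that, for any two fixed instances $t$ and $s$, the difference of scores
\[
S_\omega(t) - S_\omega(s) = \sum_{i=1}^{d} \omega[i]\bigl(t[i] - s[i]\bigr)
\]
is a \emph{linear} function of the weight $\omega$, so in particular $\omega \mapsto S_\omega(t)$ is linear. Next I would unwind the definition of $\calF$-dominance for $\calF = \{S_\omega(\cdot) \mid \omega \in \simplex^{d-1} \wedge A \times \omega \le b\}$: the relation $t \prec_\calF s$ holds precisely when $S_\omega(t) \le S_\omega(s)$ for \emph{every} $\omega$ in the preference region $\Omega = \{\omega \in \simplex^{d-1} \mid A \times \omega \le b\}$.

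The forward direction is then immediate: every vertex in $V$ belongs to $\Omega$, so $t \prec_\calF s$ already forces $S_\omega(t) \le S_\omega(s)$ for all $\omega \in V$. For the converse I would use that $\Omega$ is a \emph{bounded} polyhedron, being a closed subset of the compact simplex $\simplex^{d-1}$, hence a compact convex polytope; by the Minkowski--Weyl theorem (equivalently, the Krein--Milman theorem specialized to polytopes), $\Omega$ coincides with the convex hull of its vertex set $V$. Thus any $\omega \in \Omega$ can be written as $\omega = \sum_{k} \lambda_k v_k$ with $v_k \in V$, $\lambda_k \ge 0$, and $\sum_k \lambda_k = 1$, and linearity gives
\[
S_\omega(t) - S_\omega(s) = \sum_{k} \lambda_k \bigl(S_{v_k}(t) - S_{v_k}(s)\bigr) \le 0,
\]
since each summand is nonpositive by the hypothesis $S_v(t) \le S_v(s)$ for all $v \in V$ and each $\lambda_k \ge 0$. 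Hence $S_\omega(t) \le S_\omega(s)$ throughout $\Omega$, i.e. $t \prec_\calF s$, which closes the equivalence.

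I do not expect a genuine obstacle; this is a textbook convexity argument, which matches the fact that the statement is imported from prior work. The two places that warrant a line of care are (i) checking that $\calF$-dominance is exactly ``$S_\omega(t) \le S_\omega(s)$ on all of $\Omega$'' rather than on some proper generating subset, and (ii) invoking compactness of $\Omega$ so that it equals the convex hull of \emph{finitely many} vertices --- for a general unbounded polyhedron one would additionally have to rule out recession directions, but that cannot happen inside $\simplex^{d-1}$. The degenerate cases fall out of the same representation: if $\Omega = \emptyset$ then $V = \emptyset$ and both sides of the equivalence are vacuously true, and if $\Omega$ is a single point the claim is trivial.
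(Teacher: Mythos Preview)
Your argument is correct: linearity of $\omega \mapsto S_\omega(t) - S_\omega(s)$ together with the fact that $\Omega$, being a closed subset of the compact simplex $\simplex^{d-1}$ cut out by finitely many linear inequalities, is the convex hull of its vertex set $V$, immediately gives both directions of the equivalence. The care you take with compactness (ruling out recession directions) and with the degenerate cases is appropriate.

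There is nothing to compare against in the paper itself: Theorem~\ref{thm:F-dominace-V} is quoted from~\cite{DBLP:journals/pvldb/CiacciaM17} and the paper gives no proof of it, only invoking the observation that $\Omega$ is a closed convex polytope before stating the result. Your convex-hull argument is exactly the standard justification one would expect for such a cited fact.
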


With Theorem~\ref{thm:F-dominace-V}, we construct another baseline algorithm as follows.
Since the preference region $\Omega$ is closed, the set of linear constraints can be transformed into a set of points using the \textit{polar duality}~\cite{preparata2012computational} such that the intersection of the linear constraints is the dual of the convex hull of the points.
After the transformation, the baseline invokes the quickhull algorithm proposed in~\cite{DBLP:journals/toms/BarberDH96} to compute the set of vertices $V$ of $\Omega$.
Then it sorts the set of instances using a scoring function $S_\omega(\cdot)$ for some $\omega \in V$.
This guarantees that if an instance $t$ precedes another instance $s$ in the sorted list, then $s \nprec_\calF t$.
After that, for each instance $t$, the baseline tests $t$ against every instance of other objects preceding $t$ in the sorted list to compute $\Pr_{\rm rsky}(t)$ according to equation~(\ref{eq:rskyprob-def}).
Since $V$ can be computed in $O(c^2)$ time~\cite{greenfield1990proof}, where $c$ is the number of linear constraints, and each $\calF$-dominance test can be performed in $O(dd')$ time, where $d' = |V|$, the time complexity of the baseline algorithm is $O(c^2 + dd'n^2)$.
Although the theoretical upper bound of $d'$ is $\Theta(c^{\lfloor d/2 \rfloor})$~\cite{henk2017basic}, the actual size of $V$ is experimentally observed to be small.

\subsection{Tree-Traversal Algorithm}\label{subsec:tt}

We say an object $t$ dominates another object $s \ne t$, denoted by $t \preceq s$, if $\forall 1 \le i \le d, t[i] \le s[i]$.
Given an uncertain dataset $\calD$, the skyline probability of an instance $t \in T_i$ is defined as
\[{\rm Pr}_{\rm sky}(t) = p(t) \cdot \prod^m_{j = 1, j \ne i}(1 - \sum_{s \in T_j, s \preceq t} p(s)).\]
The all skyline probabilities (ASP) problem aims to compute skyline probabilities of all instances~\cite{DBLP:conf/pods/AtallahQ09, DBLP:journals/tods/AtallahQY11, DBLP:journals/mst/AfshaniAALP13, DBLP:journals/tkde/KimIP12}. 
In case of $\calF = \{S_\omega(\cdot) \mid \omega \in \simplex^{d-1} \wedge A \times \omega \le b\}$, we show how to transform the ARSP problem to the ASP problem.

\begin{figure}[!t]
    \removelatexerror
    \begin{algorithm}[H]
	\caption{KDTree-Traversal Algorithm}
	\label{alg:trans-alg}
	\KwIn{an uncertain dataset $\calD$, a set of linear scoring functions $\calF = \{S_\omega(\cdot) \mid \omega \in \simplex^{d-1} \wedge A \times \omega \le b\}$}
	\KwOut{ARSP}\BlankLine
	Compute vertices $V$ of $\Omega = \{\omega \in \simplex^{d-1} \mid A \times \omega \le b\}$\;
	Construct the uncertain dataset $\calD'$\;
	${\rm ARSP} \gets \emptyset$; $\chi \gets 0$; $\beta \gets 1$\;
	\lForEach{$i \gets 1$ \emph{\bf to} $m$}{$\sigma[i] \gets 0$}
	\skyprob{$I', I'$}\;
	\Return{${\rm ARSP}$}\;\BlankLine
	\SetKwProg{proc}{Procedure}{}{}
	\proc{\skyprob{$P$,  $C$}}{
		$C_{par} \gets C$; $C \gets \emptyset$; $D \gets \emptyset$\;
		\ForEach{$S_{_V}(t) \in C_{par}$}{
			\If{$S_{_V}(t) \preceq P_{\min}$ \emph{(say $t \in T_i$)}}{
				Insert $S_{_V}(t)$ into $D$\;
				$\sigma[i] \gets \sigma[i] + p(t)$\;
				\If{$\sigma[i] = 1$}{
					$\chi \gets \chi + 1$;
					$\beta \gets \beta/p(t)$\;
				}
				\Else{
					$\beta \gets \beta \times (1 - \sigma[i]) / (1 - \sigma[i] + p(t))$\;
				}
			}
			\ElseIf{$S_{_V}(t) \preceq P_{\max}$}{
				Insert $S_{_V}(t)$ into $C$\;
			}
		}
		\If{$\chi = 0$ \emph{\bf and} $|P| = 1$} {
            \tcp{\small suppose $P = \{S_{_V}(t)\}$ and $t \in T_i$}
			Insert $(t, \beta\times p(t)/(1 - \sigma[i]))$ into ${\rm ARSP}$\;
		}\ElseIf{$\chi = 0$ \emph{\bf and} $|P| > 1$}{
			Partition $P$ into $P_l$ and $P_r$ with selected $axis$\;
			\skyprob{$P_l$, $C$}\;
			\skyprob{$P_r$, $C$}\;
		}
		\ForEach{$t \in D$}{
			Undo the changes to restore $\sigma$, $\chi$, $\beta$\;
		}
		$C \gets C_{par}$\;
	}
\end{algorithm}
\end{figure}


Given a $d$-dimensional uncertain dataset $\calD$ and a set of linear scoring functions $\calF = \{S_\omega(\cdot) \mid \omega \in \simplex^{d-1} \wedge A \times \omega \le b\}$, let $V = \{\omega_1, \cdots, \omega_{d'}\}$ be the set of vertices of the preference region $\Omega = \{\omega \in \simplex^{d-1} \mid A \times \omega \le b\}$ and $d' = |V|$.
For each $t \in I$, $S_{_V}(t) = (S_{\omega_1}(t), \cdots, S_{\omega_{d'}}(t))$ is a $d'$-dimensional point whose $i$-th coordinate is the score of instance $t$ under $\omega_i \in V$.
We construct a $d'$-dimensional uncertain dataset $\calD'$ as follows.
For each uncertain object $T_i \in \calD$, we create an uncertain object $T'_i$ in $\calD'$.
Then, for each instance $t \in T_i$, we compute $S_{_V}(t)$ as an instance of $T'_i$ and set $p(S_{_V}(t)) = p(t)$.
From Theorem~\ref{thm:F-dominace-V}, it is directly to know that for any two instance $t, s \in I$, $t \prec_\calF s$ if and only if $S_{_V}(t) \preceq S_{_V}(s)$.
This means, for each $t \in I$, $\Pr_{\rm rsky}(t) = \Pr_{\rm sky}(S_{_V}(t))$.
Thus, after constructing $\calD'$, we employ the procedure \skyprob on $\calD'$ to compute skyline probabilities of all instances in $I' = \cup^m_{i=1} T'_i$.

\skyprob is an optimized implementation of the state-of-the-art algorithm for the ASP problem proposed in~\cite{DBLP:journals/mst/AfshaniAALP13}.
The original algorithm first constructs a $kd$-tree $T$ on $I'$, and then progressively computes skyline probabilities of all instances by performing a preorder traversal of $T$.
We optimized it by integrating the preorder traversal into the construction of $T$ and pruning the construction of a subtree if all instances included in the subtree have zero skyline probabilities.
Although these optimizations does not improve the time complexity, they do enhance its experimental performance.

Concretely, \skyprob always keeps a path from the root of $T$ to the current reached node in the main memory.
And for each node $N$ in the path, let $P$ be the set of instances contained in $N$ and $P_{\min}$ ($P_{\max}$) denote the minimum (maximum) corner of the minimum bounding rectangle of $P$, \skyprob maintains the following information,
(1) a set $C$ including instances that dominates $P_{\max}$,
(2) an array $\sigma = \langle \sigma[1], \sigma[2], \cdots \sigma[m] \rangle$, where $\sigma[i] = \sum_{t \in T_i, S_{_V}(t) \preceq P_{\min}} p(t)$, \ie, the sum of existence probabilities over instances of $T'_i$ that dominate $P_{\min}$,
(3) a value $\beta = \prod_{1 \le i \le m, \sigma[i] \ne 1}(1 - \sigma[i])$,
and (4) a counter $\chi = |\{i \mid \sigma[i] = 1\}|$.

At the beginning, \skyprob initializes $C = I'$, $\sigma[i] = 0$ for $1 \le i \le m$, $\beta = 1$, and $\chi = 0$ at the root node of $T$.
Supposing the information of all nodes in the maintained path is available, \skyprob constructs the next arriving node $N$ as follows.
Again, let $P$ denote the set of instances in $N$.
For each instance $S_{_V}(t) \in C_{par}$, where $C_{par}$ is the set $C$ of the parent node of $N$, it tests $S_{_V}(t)$ against $P_{\min}$.
If $S_{_V}(t) \preceq P_{\min}$, say $t \in T_i$, it updates $\sigma[i]$, $\beta$, and $\chi$ as stated in lines 12-16 of Algorithm~\ref{alg:trans-alg}.
Otherwise, it further tests $S_{_V}(t)$ against $P_{\max}$ and inserts $S_{_V}(t)$ into the set $C$ of $N$ if $S_{_V}(t) \preceq P_{\max}$.
When $\chi$ becomes to one, we know that $\Pr_{\rm sky}(P_{\min}) = 0$, and so are all instances in $N$ due to the transitivity of dominance.
Therefore, \skyprob prunes the construction of the subtree rooted at $N$ and returns to its parent node.
Otherwise, \skyprob keeps growing the path by partitioning set $P$ like a $kd$-tree until it reaches a node including only one instance $S_{_V}(t)$, then computes $\Pr_{\rm sky}(S_{_V}(t))$ based on $\beta$ and $\sigma$.

\begin{figure}[t]
	\subfigure[$kd$-tree for $I'$.]{
		\includegraphics[width=.42\linewidth]{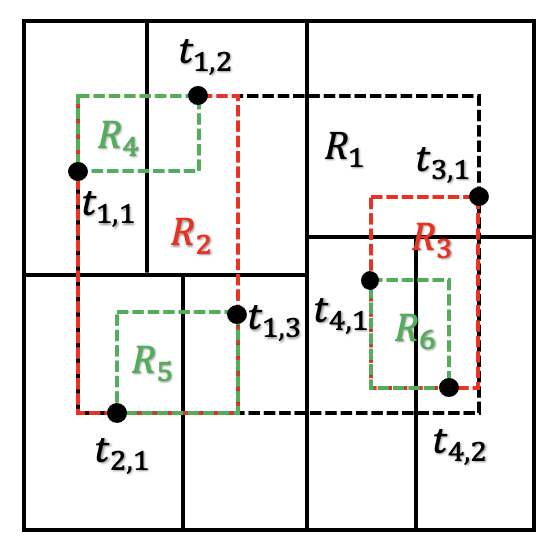}
		\label{fig:data-space}
	}
    \hfill
	\subfigure[Pruning at node $R_3$.]{
		\includegraphics[width=.5\linewidth]{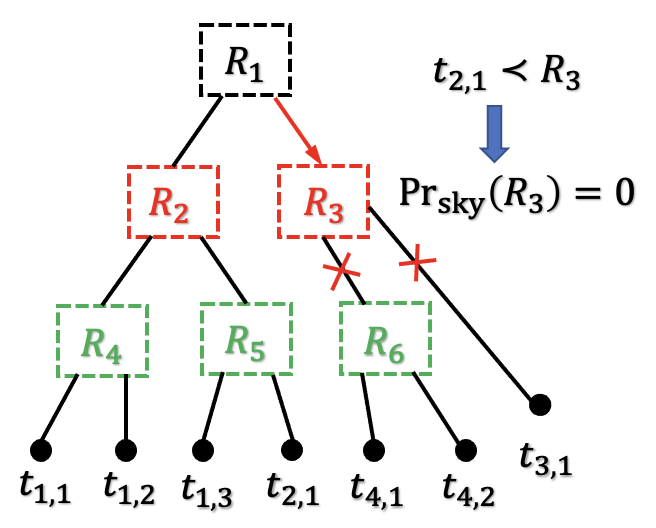}
		\label{fig:kdtree}
	}\vspace{-2mm}
	\caption{Running example for $kd$-\textsc{ASP}$^*$.}\vspace{-2mm}
	\label{fig:alg-trans}
	\vspace{-2mm}
\end{figure}

\begin{example}
	As shown in Fig.~\ref{fig:alg-trans}, suppose all instances of an object occur with the same probability.
    The original algorithm keeps a whole $kd$-tree in the main memory but \skyprob only maintains a path from the root node, \eg, $R_1 \to R_2 \to R_5$.
	Moreover, when \skyprob traverses from $R_1$ to $R_3$, it updates $\sigma[2]$ to $1$ and $\chi$ to $1$ since $t_{2,1} \preceq R_3$.
	This indicates that the skyline probabilities of all instances in the subtree rooted at $R_3$ is zero, thus \skyprob prunes the construction of the subtree rooted at $R_3$ as shown in Fig.~\ref{fig:kdtree}.
\end{example}

The pseudocode of the entire algorithm is shown in Algorithm~\ref{alg:trans-alg}.
As stated previously, the computation of $V$ takes $O(c^2)$ time, where $c$ is the number of linear constraints.
For each instance $t \in I$, $S_{_V}(t)$ can be computed in $O(dd')$ time, where $d' = |V|$.
According to~\cite{DBLP:journals/mst/AfshaniAALP13}, the time complexity of \skyprob on a set of $n$ instances in $d'$-dimensional data space is $O(n^{2-1/d'})$.
Therefore, the overall time complexity of Algorithm~\ref{alg:trans-alg} is $O(c^2 + d'dn + n^{2-1/d'}) = O(n^{2-1/d'})$.

Next, we claim that Theorem~\ref{thm:lower-bound} still holds even if we limit $\calF$ into linear scoring functions whose weights are described by linear constraints.
Let $\calF$ be the set of all linear scoring functions.
Given two instances $t$ and $s$, if $t \prec_\calF s$, then $t[i] \le s[i]$ for $1 \le i \le d$ since $\omega_i \in \Omega$ where $\omega_i[i] = 1$ and $\omega_i[j] = 0$ for all $1 \le j \ne i \le d$.
If $t[i] \le s[i]$ for $1 \le i \le d$, it is known that $t \prec_\calF s$ since all linear scoring functions are monotone.
Hence, we can also conclude that $t \prec_\calF s$ if and only if $t[i] \le s[i]$ for $1 \le i \le d$.
Thus, with the same reduction established in the proof of Theorem~\ref{thm:lower-bound}, it is known that there is no subquadratic-time algorithm for the ARSP problem even if $\calF$ is limited into linear scoring functions whose weights are described by linear constraints.
This proves that Algorithm~\ref{alg:trans-alg} achieves a near-optimal time complexity.

\noindent{\bf Remark.}
Algorithm~\ref{alg:trans-alg} also works when \skyprob adopts any other space-partitioning tree.
The only detail that needs to be modified is the method to partition the data space (line 23-25 of Algorithm~\ref{alg:trans-alg}).
In our experimental study, we implement a variant of Algorithm~\ref{alg:trans-alg} based on the quadtree, which partitions the data space in all dimensions each time.
It is observed that choosing an appropriate space-partitioning tree can improve the performance of Algorithm~\ref{alg:trans-alg}.
For example, the quadtree-based implementation works well in low-dimensional data spaces, while the $kd$-tree-based implementation have better scalability for data dimensions.

\subsection{Branch-and-Bound Algorithm}\label{subsec:bb}

A drawback of Algorithm~\ref{alg:trans-alg} is that it needs to map $\calD$ to $\calD'$ in advance, in this subsection, we show how to do the mapping on the fly so that unnecessary computations can be avoided.

Recall that if instances in $I$ are sorted in ascending order according to their scores under some $f \in \calF$, then an instance $t$ will not be $\calF$-dominated by any instance $s$ after $t$.
Assuming that instances are processed in the sorted order, $S_{_V}(t)$ may only be involved in the computation of instances after $t$.
If we know ignoring $S_{_V}(t)$ has no effect on those computations, then the mapping can be avoided.
Unlike conducting probabilistic rskyline analysis under top-$k$ or threshold semantics, maintaining upper and lower bounds on each instance's rskyline probability as pruning criteria is helpless since our goal is to compute exact rskyline probabilities of all instances.
Thus, the only pruning strategy can be utilized is that if an instance $t$ is $\calF$-dominated by another instance $s$ and $\Pr_{\rm rsky}(s)$ is zero, then $\Pr_{\rm rsky}(t)$ is also zero due to the transitivity of $\calF$-dominance.
A straight method for efficiently performing this pruning strategy is to keep a rskyline of all instances processed so far whose rskyline probabilities are zero and compare the next instance to be processed against all instances in the rskyline beforehand.
However, the maintained rskyline may suffer from huge scale on anti-correlated datasets.
In what follows, we prove that all instances with zero rskyline probability can be safely ignored and a set $P$ of size at most $m$ is sufficient for pruning tests.

\begin{theorem}
	All instances with zero rskyline probability can be safely discarded.
\end{theorem}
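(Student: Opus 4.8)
\begin{hproof}
The plan is to isolate a single implication and then read the theorem off it: I would show that \emph{whenever $s\prec_\calF t$ and $\Pr_{\rm rsky}(s)=0$, it already follows that $\Pr_{\rm rsky}(t)=0$.} Granting this, ``safely discarded'' is immediate from equation~(\ref{eq:rskyprob-def}): the existence probability $p(s)$ of an instance $s\in T_i$ enters the expression for $\Pr_{\rm rsky}(t)$ only through the factor $\bigl(1-\sum_{s'\in T_i,\,s'\prec_\calF t}p(s')\bigr)$, and only when $t$ lies in an object other than $T_i$ and $s\prec_\calF t$; the implication guarantees every such $t$ has zero rskyline probability, so deleting all zero-probability instances leaves untouched every factor (and every leading $p(t)$) that occurs in the formula of an instance with positive rskyline probability.

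To prove the implication I would first dispatch the case $p(s)=0$, where $s$ contributes $0$ to every sum in~(\ref{eq:rskyprob-def}) and its removal is vacuous. Assuming $p(s)>0$, the vanishing of $\Pr_{\rm rsky}(s)=p(s)\cdot\prod_{j\ne i}\bigl(1-\sum_{s'\in T_j,\,s'\prec_\calF s}p(s')\bigr)$ forces some factor to vanish, i.e.\ there is an object $T_j$ with $j\ne i$ and $\sum_{s'\in T_j,\,s'\prec_\calF s}p(s')=1$. Now fix any $t$ with $s\prec_\calF t$ and let $T_k$ be its object. Transitivity of $\prec_\calF$ yields $s'\prec_\calF t$ for every $s'\in T_j$ with $s'\prec_\calF s$, so $\sum_{s'\in T_j,\,s'\prec_\calF t}p(s')\ge 1$, and this is an equality because existence probabilities within one object sum to at most $1$.

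It then remains to case-split on whether $k=j$. If $k\ne j$, the product defining $\Pr_{\rm rsky}(t)$ contains the factor $\bigl(1-\sum_{s'\in T_j,\,s'\prec_\calF t}p(s')\bigr)=0$, hence $\Pr_{\rm rsky}(t)=0$. If $k=j$, then since $s'\prec_\calF t$ implies $s'\ne t$, the instances of $T_j$ dominating $t$ are disjoint from $t$, so $p(t)\le 1-\sum_{s'\in T_j,\,s'\prec_\calF t}p(s')=0$ and again $\Pr_{\rm rsky}(t)=0$. In both cases the witness of the zero (the object $T_j\ne T_i$, or the value $p(t)$) is unaffected by deleting $s$, which is exactly what lets the deletion be applied to the whole set of zero-probability instances at once rather than one at a time.

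The step I expect to be the only genuine obstacle is the bookkeeping in the sub-case $t\in T_j$: one must use that $\prec_\calF$ is irreflexive so that $t$ is not itself among the instances whose probabilities already sum to $1$, and then observe that the reason for $\Pr_{\rm rsky}(t)=0$ survives the very deletion being justified. Everything else is a direct reading of~(\ref{eq:rskyprob-def}) together with transitivity and irreflexivity of $\calF$-dominance.
\end{hproof}
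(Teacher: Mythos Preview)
Your proof is correct and follows essentially the same route as the paper's: both reduce ``safely discarded'' to the implication that any instance $\calF$-dominated by a zero-probability instance is itself zero-probability, then read off the conclusion from equation~(\ref{eq:rskyprob-def}). Your treatment is actually a bit tighter than the paper's on two points: you explicitly dispatch the trivial case $p(s)=0$, and you handle the sub-case $t\in T_j$ directly via $p(t)=0$, whereas the paper glosses over why the witnessing object $T_k$ can be taken distinct from both $T_i$ and $T_j$. Conversely, the paper spends an extra step arguing (by asymmetry) that among the witnessing objects one can always find a $T_k$ all of whose instances have \emph{positive} rskyline probability; this is not needed for the theorem as stated—your contrapositive argument already shows every factor in the formula for a positive-probability instance is untouched—but it anticipates the incremental use in the algorithm, where one wants the zero-probability status of not-yet-discarded instances to remain detectable. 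Your final paragraph gestures at the same concern, and the observation that the witness $T_j$ lies outside $T_i$ is exactly what makes the ``all at once'' deletion go through.
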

\begin{IEEEproof}
	Let $t \in T_i$ be an instance with $\Pr_{\rm rsky}(t) = 0$.
	Recall the formulation of rskyline probability in equation~\ref{eq:rskyprob-def}, all other instances of $T_i$ will not be affected by $t$.
	This also holds for instances of other objects $T_j$ that are not $\calF$-dominated by $t$.
	Now, suppose $s$ is an instance of $T_{j \ne i}$ and $s$ is $\calF$-dominated by $t$.
	Since $t \prec_\calF s$ and $\Pr_{\rm rsky}(t) = 0$, it is easy to see that there exists a set of objects $\mathcal{T} = \{T_k \mid k \ne j \wedge k \ne i\}$ such that all instances of each object $T_k \in \mathcal{T}$ $\calF$-dominate $t$.
	Moreover, because $\calF$-dominance is asymmetric, it is known that there exists at least one object $T_k \in \mathcal{T}$, all instances of which have non-zero rskyline probability.
	Therefore, according to the transitivity of $\calF$-dominance, $s$ is also $\calF$-dominated by all instances of $T_k$ and thus $\Pr_{\rm rsky}(s) = 0$.
\end{IEEEproof}
\begin{theorem}
	Let $V = \{\omega_1, \cdots, \omega_{d'}\}$ be the set of vertices of the preference region $\Omega = \{\omega \in \simplex^{d-1} \mid A \times \omega \le b\}$, there is a set $P$ such that for any instance $t$, ${\rm Pr}_{\rm rsky}(t) = 0$ if and only if $S_{_V}(t)$ is dominated by some instance $p \in P$ and $|P| \le m$.
\end{theorem}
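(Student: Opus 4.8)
The plan is to let $P$ consist of the ``upper corners'', in the $d'$-dimensional score space, of the \emph{complete} objects --- those whose instance probabilities sum to exactly $1$ --- and to reduce the claimed characterization to a rewriting of equation~(\ref{eq:rskyprob-def}). First I would observe that, for $t\in T_i$ with $p(t)>0$, the product in~(\ref{eq:rskyprob-def}) is zero iff some factor $1-\sum_{s\in T_j,\ s\prec_\calF t}p(s)$ vanishes for an index $j\ne i$; since $\sum_{s\in T_j}p(s)\le 1$, such a vanishing factor forces $T_j$ to be complete and (taking the instances of $T_j$ to have positive probability, as we may) forces \emph{every} instance of $T_j$ to $\calF$-dominate $t$. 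I would then pass to the mapped dataset $\calD'$: by Theorem~\ref{thm:F-dominace-V}, $s\prec_\calF t$ iff $S_{_V}(s)\preceq S_{_V}(t)$, so ``every instance of $T_j$ $\calF$-dominates $t$'' is equivalent to $S_{_V}(s)\preceq S_{_V}(t)$ for all $s\in T_j$, i.e.\ to $u_j\preceq S_{_V}(t)$, where $u_j$ is the coordinate-wise maximum of $\{S_{_V}(s):s\in T_j\}$, the maximum corner of the minimum bounding rectangle of $T'_j$.

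Combining these, ${\rm Pr}_{\rm rsky}(t)=0$ holds exactly when there is a complete object $T_j$ with $j\ne i$ and $u_j\preceq S_{_V}(t)$, so I would take $P=\{u_j:T_j\text{ is complete}\}$, which has size at most $m$. Both directions then follow: if ${\rm Pr}_{\rm rsky}(t)=0$, the witnessing complete $T_j$ contributes $u_j\in P$ with $u_j\preceq S_{_V}(t)$; conversely, if some $u_j\in P$ is $\preceq S_{_V}(t)$ then $T_j$ is complete with all instances $\calF$-dominating $t$, and this $j$ must differ from $i$ --- indeed $t\in T_i$ forces $S_{_V}(t)\preceq u_i$, so $u_i\preceq S_{_V}(t)$ is possible only when $u_i=S_{_V}(t)$ --- whence the $T_j$-factor of~(\ref{eq:rskyprob-def}) vanishes.

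The delicate point, which I expect to be the only real obstacle, is the boundary case $S_{_V}(t)=u_j$: under the paper's convention a point does not \emph{dominate} an equal point, so an instance lying exactly on some complete object's upper corner would have ${\rm Pr}_{\rm rsky}(t)=0$ without being dominated by any member of $P$. I would dispose of it with the standing general-position assumption that no mapped instance coincides with another object's upper corner; alternatively --- and this is the form in which the result is actually used --- one notes that the branch-and-bound procedure inserts $u_j$ into $P$ only after \emph{all} of $T_j$ has been processed, so when an instance $t\in T_i$ is tested its own corner $u_i$ is not yet present and the degeneracy against $t$'s own object cannot arise. I would finish by remarking that $P$ is therefore maintainable incrementally with at most one insertion per object, exactly as the surrounding algorithm requires.
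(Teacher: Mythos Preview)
Your proposal is correct and follows essentially the same route as the paper: define $P$ as the coordinate-wise maxima $p_j=(\max_{s\in T_j}S_{\omega_1}(s),\ldots,\max_{s\in T_j}S_{\omega_{d'}}(s))$ over the complete objects, then combine equation~(\ref{eq:rskyprob-def}) with Theorem~\ref{thm:F-dominace-V} and the MBR max-corner observation to get the equivalence. Your discussion of the $j=i$ boundary case is on point --- the paper's own proof in fact establishes the biconditional only for $p_{j\ne i}\in P$ (the theorem statement is slightly loose), which is exactly the form the branch-and-bound algorithm needs, for the reason you give.
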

\begin{IEEEproof}
	We start with the construction of the pruning set $P$.
	For each object $T_i$ with $\sum_{t \in T_i} p(t) = 1$, we insert an instance $p_i = (\max_{t \in T_i}S_{\omega_1}(t), \cdots, \max_{t \in T_i}S_{\omega_{d'}}(t))$ into $P$.
	Note that the above construction also requires to map all instances into the score space in advance in order to facilitate the understanding of the proof.
	However, in the proposed algorithm, we construct $P$ incrementally during the computation.
	It is straight to verify that $|P| \le m$ from the construction of $P$.
	Then, let $t$ denote an instance of object $T_i$, we prove that $\Pr_{\rm rsky}(t) = 0$ if and only if $S_{_V}(t)$ is dominated by some $p_{j \ne i} \in P$.
	From equation~\ref{eq:rskyprob-def}, it is easy to see that $\Pr_{\rm rsky}(t) = 0$ if and only if there must exist an object $T_{j \ne i}$ such that every instance $s \in T_j$ $\calF$-dominates $t$ and $\sum_{s \in T_j} p(s) = 1$.
	That is $S_{_V}(s) \preceq S_{_V}(t)$ holds for all instances $s \in T_j$ according to Theorem~\ref{thm:F-dominace-V}.
	Moreover, since a set of instances dominates another instance if and only if the maximum corner of their minimum bounding rectangle dominates that instance, it is derived that $\Pr_{\rm rsky}(t) = 0$ if and only if $p_j = (\max_{s \in T_j}S_{\omega_1}(s), \cdots, \max_{s \in T_j}S_{\omega_{d'}}(s)) \preceq t$.
	Based on the construction of $P$, it is known that all $p_j$ are included in $P$, thus completing the proof.
\end{IEEEproof}
\begin{figure}[!t]
    \removelatexerror 
    \begin{algorithm}[H]
    	\caption{Branch-and-Bound Algorithm}
    	\label{alg:bbs-alg}
    	\KwIn{an uncertain dataset $\calD$, a set of linear scoring functions $\calF = \{S_\omega(\cdot) \mid \omega \in \simplex^{d-1} \wedge A \times \omega \le b\}$}
    	\KwOut{ARSP}\BlankLine
    	Compute vertices $V$ of $\Omega = \{\omega \in \simplex^{d-1} \mid A \times \omega \le b\}$\;
    	Initialize a min-heap $H$ with respect to $S_{\omega}(\cdot)$ and $m$ $d'$-dimensional aggregated R-trees $R_1, \cdots, R_m$\;
    	$P \gets \emptyset$; ${\rm ARSP} \gets \emptyset$\;
    	Insert the root of R-tree on $I$ into $H$\;
    	\While{$H$ is not empty}{
    		Let $N$ be the top node in $H$\;
    		\If{$N$ is not pruned by $P$}{
    			\If{$N = \{t\}$ is a leaf node \emph{(say $t \in T_i$)}}{
    				Compute $S_{_V}(t)$\;
    				$\Pr_{\rm rsky}(t) \gets p(t)$\;
    				\ForEach{aggregated R-tree $R_{j \ne i}$}{
    					$\sigma[j] \gets $ perform window query with the orign and $S_{_V}(t)$ on $R_j$\;
    					$\Pr_{\rm rsky}(t) \gets \Pr_{\rm rsky}(t) \times (1 - \sigma[j])$\;
    				}
    				Insert $S_{_V}(t)$ into $R_i$\;
                    Insert $(t, \Pr_{\rm rsky}(t))$ into ${\rm ARSP}$\;
    				$p(T_i) \gets p(T_i) + p(t)$\;
    				\ForEach{$j \gets 1$ \emph{\bf to} $|V|$}{
    					$p_i[j] \gets \max(p_i[j], S_{_V}(t)[j])$\;
    				}
    				\lIf{$p(T_i) = 1$}{
    					Insert $p_i$ into $P$
    				}
    			}
    			\Else{
    				\ForEach{child node $N'$ of $N$}{
    					\If{$N'$ is not pruned by $P$}{
    						Insert $N'$ into $H$\;
    					}
    				}
    			}
    		}
    	}
    	\Return{${\rm ARSP}$}\;
    \end{algorithm}
\end{figure}

Now, we propose an algorithm with the pruning strategy.
The pseudocode is shown in Algorithm~\ref{alg:bbs-alg}.
The algorithm first computes the set of vertices $V$ of the preference region $\Omega$ and initializes $m$ aggregated R-trees $R_1, \cdots, R_m$, where $R_i$ is used to incrementally index $S_{_V}(t)$ for all instances $t \in T_i$ with $\Pr_{\rm rsky}(t) > 0$ that have been processed so far.
After that, the algorithm traverses the index $R$ on $I$ in a {best-first} manner.
Specifically, it first inserts the root of R-tree into a {minimum heap} $H$ sorted according to its score under some $S_{\omega \in V}(\cdot)$, where the score of a node $N$ is defined as $S_\omega(N_{\min})$.
Then, at each time, it handles the top node $N$ popped from $H$.
If $S_{_V}(N_{\min})$ is dominated by some instance in $P$, then the algorithm ignores all instances in $N$ since their rskyline probabilities are zero due to the transitivity of $\calF$-dominance.
Otherwise, if $N$ is a leaf node, say $t \in T_i$ is contained in $N$, the algorithm computes $S_{_V}(t)$ and issues the window query with the origin and $S_{_V}(t)$ on each aggregated R-tree $R_{j \ne i}$ to compute $\sigma[j] = \sum_{s \in T_j, s \prec_\calF t} p(s)$ and inserts $S_{_V}(t)$ into $R_i$.
Then it updates $p_i$, which records the maximum corner of the minimum bounding rectangle of $S_{_V}(t)$ for all instances $t \in T_i$ with $\Pr_{\rm rsky}(t) > 0$ that have been processed so far, and inserts $p_i$ into $P$ if all instances in $T_i$ have non-zero rskyline probability.
Or if $N$ is an internal node, it inserts all non-pruned child nodes of $N$ into $H$ for further computation. 

With the fact that Algorithm~\ref{alg:bbs-alg} only visits the nodes which contain instances $t$ with $\Pr_{\rm rsky}(t) > 0$ and never access the same node twice, it is easy to prove that the number of nodes accessed by Algorithm~\ref{alg:bbs-alg} is optimal to compute ARSP.
And since $m-1$ orthogonal range queries are performed on aggregated R-trees for each instance in $I$, the expected time complexity of Algorithm~\ref{alg:bbs-alg} is $O(nm\log{n})$.

\section{Sublinear-time Algorithm\\for Weight Ratio Constraints}\label{sec:eclprobalg}

In this section, we use preprocessing to accelerate ARSP computation when $\calF$ is a set of linear scoring functions whose weights are described by weight ratio constraints.
Formally, let $R = \prod^{d-1}_{i = 1}[l_i, h_i]$ denote a set of user-specified ranges, weight ratio constraints $R$ on $\simplex^{d-1}$ require $\omega[d] > 0$ and $l_i \le \omega[i]/\omega[d] \le h_i$ holds for every $1 \le i < d$.
Liu~\etal have investigated this special $\calF$-dominance on traditional datasets, renamed as \textit{eclipse-dominance}, and defined the \textit{eclipse} query to retrieve the set of all non-\textit{eclipse-dominated} objects~\cite{DBLP:conf/icde/Liu0ZP021}.
We refer the readers to their paper for the wide applications of eclipse query.
Although we focus on uncertain datasets, our methods can also be used to design improved algorithms for eclipse query processing as shown in our experiments.

\subsection{Reduction to Half-space Reporting Problem}
\label{sec:reduction}

Given a set of user-specified ranges $R = \prod^{d-1}_{i = 1}[l_i, h_i]$ and two instances $t$ and $s$, let $\omega^*$ be the optimal solution of the following linear programming (LP) problem,
\begin{equation}
	\begin{aligned}
		\text{minimize} \quad & h(\omega) = \sum^d_{i = 1} (s[i] - t[i]) \times \omega[i] \\
		\text{subject to} \quad & l_i \le \omega[i]/\omega[d] \le h_i \qquad 1 \le i < d \\
		& \omega[d] > 0,  \quad \sum^d_{i = 1} \omega[i] = 1. 
	\end{aligned}
	\label{eq:lp}
\end{equation}
Under weight ratio constraints $R$, the $\calF$-dominance test condition stated in Theorem~\ref{thm:F-dominace-V} can be equivalently represented as determining whether $h(\omega^*) \ge 0$.
The crucial observation is that the sign of $h(\omega^*)$ can be determined more efficiently without solving problem~(\ref{eq:lp}).
To be specific, let $r^*$ be the optimal solution of the following LP problem,
\begin{equation}
	\begin{aligned}
		\text{minimize} \quad & h'(r) = \sum^{d - 1}_{i = 1} (s[i] - t[i]) \times r[i] + s[d] - t[d]\\
		\text{subject to} \quad & l_i \le r[i] \le h_i \qquad 1 \le i < d
	\end{aligned}
	\label{eq:lp2}
\end{equation}
The following lemma proves that $h'(r^*) \ge 0$ if and only if $h(\omega^*) \ge 0$.

\begin{lemma}
	Let $r^*$ and $\omega^*$ be the optimal solutions of LP problems~(\ref{eq:lp}) and~(\ref{eq:lp2}), respectively.
	$h'(r^*) \ge 0$ if and only if $h(\omega^*) \ge 0$.
\end{lemma}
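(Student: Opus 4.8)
The plan is to establish the equivalence by exhibiting an explicit bijection between feasible solutions of the two LP problems that preserves the sign of the objective value. The key observation is that the constraint $\sum_{i=1}^d \omega[i] = 1$ together with $\omega[d] > 0$ is just a normalization: scaling $\omega$ by any positive constant does not change the sign of $h(\omega)$, since $h$ is linear and homogeneous in $\omega$. So I would first argue that determining the sign of $\min h(\omega)$ over the simplex-constrained region is the same as determining the sign of $\min h(\omega)$ over the cone $\{\omega : \omega[d] > 0,\ l_i \le \omega[i]/\omega[d] \le h_i\}$ (dropping the normalization), because every ray in this cone meets the normalized region in exactly one point and $h$ has a constant sign along each ray through the origin.

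Next I would introduce the change of variables $r[i] = \omega[i]/\omega[d]$ for $1 \le i < d$. Given any feasible $\omega$ for~(\ref{eq:lp}), setting $r[i] = \omega[i]/\omega[d]$ gives a feasible $r$ for~(\ref{eq:lp2}) because the box constraints $l_i \le r[i] \le h_i$ are exactly the weight ratio constraints. Conversely, given any feasible $r$ for~(\ref{eq:lp2}), I would reconstruct $\omega$ by setting $\omega[i] = r[i] \cdot c$ for $i < d$ and $\omega[d] = c$ for a suitable positive constant $c$ (chosen so that $\sum \omega[i] = 1$; note $\sum_{i<d} r[i] + 1 > 0$ since all $r[i] \ge l_i \ge 0$, assuming nonnegative ranges, so such $c > 0$ exists), which is feasible for~(\ref{eq:lp}). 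Under this correspondence,
\[
h(\omega) = \sum_{i=1}^d (s[i]-t[i])\,\omega[i] = \omega[d]\Bigl(\sum_{i=1}^{d-1}(s[i]-t[i])\,r[i] + (s[d]-t[d])\Bigr) = \omega[d]\cdot h'(r),
\]
and since $\omega[d] > 0$, the sign of $h(\omega)$ equals the sign of $h'(r)$. In particular, $h(\omega) \ge 0 \iff h'(r) \ge 0$ for corresponding points.

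Finally I would pass from pointwise equivalence to the equivalence of the optima. Since the correspondence $\omega \leftrightarrow r$ is a sign-preserving bijection between the feasible sets (after accounting for the ray-scaling), the infimum of $h$ over~(\ref{eq:lp})'s feasible set is nonnegative if and only if the infimum of $h'$ over~(\ref{eq:lp2})'s feasible set is nonnegative. Concretely: if $h(\omega^*) \ge 0$ then $h(\omega) \ge 0$ for all feasible $\omega$, hence $h'(r) \ge 0$ for all feasible $r$ (each such $r$ maps back to a feasible $\omega$ with $h(\omega) = \omega[d]\,h'(r) \ge 0$ and $\omega[d] > 0$), so $h'(r^*) \ge 0$; and symmetrically for the converse. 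I would also note that feasibility of~(\ref{eq:lp2}) is immediate (the box is nonempty), and that~(\ref{eq:lp2}) always attains its minimum since it is a linear program over a compact box, so $r^*$ is well-defined. The main obstacle, such as it is, is being careful that the normalization constraint in~(\ref{eq:lp}) does not make its feasible region fail to be a simple rescaling of the cone — i.e., checking that the normalizing constant $c$ is always strictly positive and finite, which holds because $\omega[d] > 0$ is required and the $r[i]$ are bounded; once that is pinned down, the sign-preservation identity $h(\omega) = \omega[d]\,h'(r)$ does all the work.
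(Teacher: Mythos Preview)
Your proposal is correct and follows essentially the same approach as the paper: both arguments hinge on the change of variables $r[i]=\omega[i]/\omega[d]$ and the identity $h(\omega)=\omega[d]\cdot h'(r)$ with $\omega[d]>0$, which makes the signs of the two objectives agree. The paper presents the two implications separately by mapping the optimizer of one problem to a feasible point of the other, whereas you package this as a single sign-preserving bijection, but the underlying content is identical.
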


\begin{IEEEproof}
	We first prove that if $h(\omega^*) \ge 0$, then $h'(r^*) \ge 0$.
	Let $\omega = (r^*[1]/(\sum^{d-1}_{i=1}r^*[i] + 1), \cdots, r^*[d - 1]/(\sum^{d-1}_{i=1}r^*[i] + 1), 1/(\sum^{d-1}_{i=1}r^*[i] + 1))$.
	For any $1 \le i < d$, $\omega[i] / \omega[d] = r^*[i] \in [l_i, h_i]$.
	And $\sum^d_{i = 1} \omega[i] = 1$.
	Therefore, $\omega$ is a feasible solution of LP problem~(\ref{eq:lp}).
	Hence, $h'(r^*) = (\sum^{d-1}_{i = 1} r^*[i] + 1) \times h(\omega) \ge (\sum^{d-1}_{i = 1} r^*[i] + 1) \times h(\omega^*) \ge 0$.
	
	Next, we prove that if $h'(r^*) \ge 0$, then $h(\omega^*) \ge 0$.
	Let $r = (\omega^*[1]/\omega^*[d], \cdots, \omega^*[d-1]/\omega^*[d])$.
	For any $1 \le i < d$, $\omega^*[i]/\omega^*[d] \in [l_i, h_i]$.
	Hence, $r$ is a feasible solution of LP problem~(\ref{eq:lp2}).
	Thus, $h(\omega^*) = \omega^*[d] \times h'(r) \ge \omega^*[d] \times h'(r^*) \ge 0$.
\end{IEEEproof}

Since each $r[i] = \omega[i] / \omega[d]$ can be choose independently from the corresponding interval $[l_i, h_i]$, $r^*$ can be directly determined in $O(d)$ time.
Based on this, we can perform $\calF$-dominance test more efficiently.

\begin{theorem}[Efficient $\calF$-dominance test]
    Let $\calF$ be a set of linear scoring functions whose weights are described by weight ratio constraints $R = \prod^{d-1}_{i = 1}[l_i, h_i]$, an instance $t$ $\calF$-dominates another instance $s$ if and only if
    $t[d] - s[d] \le \sum^{d-1}_{i = 1}(\mathbf{1}[s[i] > t[i]] \times l_i + (1 - \mathbf{1}[s[i] > t[i]]) \times h_i)\times (s[i] - t[i])$,
    where $\mathbf{1}[\cdot]$ is the indicator function.
	\label{thm:R-dominance}
\end{theorem}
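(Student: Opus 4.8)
The plan is to chain the $\calF$-dominance characterization of Theorem~\ref{thm:F-dominace-V} together with the two linear-programming reformulations already set up. First I would observe that, writing $h(\omega) = \sum_{i=1}^{d}(s[i]-t[i])\,\omega[i] = S_\omega(s)-S_\omega(t)$, Theorem~\ref{thm:F-dominace-V} says that $t \prec_\calF s$ holds exactly when $S_\omega(t) \le S_\omega(s)$ for all vertices $\omega$ of $\Omega$; since a linear function over the polytope $\Omega$ attains its extrema at vertices, this is the same as requiring $h(\omega) \ge 0$ for every feasible $\omega$, i.e. the same as $h(\omega^*) \ge 0$ where $\omega^*$ is an optimum of LP~(\ref{eq:lp}). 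By the preceding lemma, $h(\omega^*) \ge 0$ if and only if $h'(r^*) \ge 0$, where $r^*$ is an optimum of the box-constrained LP~(\ref{eq:lp2}). So it remains only to evaluate $h'(r^*)$ in closed form.

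For the second step I would use that $h'(r) = \sum_{i=1}^{d-1}(s[i]-t[i])\,r[i] + (s[d]-t[d])$ is an affine, coordinate-separable function and that the feasible region $\prod_{i=1}^{d-1}[l_i,h_i]$ is an axis-aligned box. Hence the minimum is obtained by minimizing each term over its interval independently: put $r^*[i] = l_i$ when the coefficient $s[i]-t[i]$ is positive, equivalently when $\mathbf{1}[s[i]>t[i]]=1$, and $r^*[i] = h_i$ otherwise; when $s[i]=t[i]$ the term vanishes so either endpoint works, and the formula's convention of assigning $h_i$ in that case is harmless. This gives $r^*[i] = \mathbf{1}[s[i]>t[i]]\,l_i + (1-\mathbf{1}[s[i]>t[i]])\,h_i$, and substituting yields $h'(r^*) = \sum_{i=1}^{d-1}\bigl(\mathbf{1}[s[i]>t[i]]\,l_i + (1-\mathbf{1}[s[i]>t[i]])\,h_i\bigr)(s[i]-t[i]) + s[d]-t[d]$.

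Finally, combining the two steps, $t \prec_\calF s$ holds if and only if $h'(r^*) \ge 0$, and moving $s[d]-t[d]$ to the left-hand side gives precisely the claimed condition $t[d]-s[d] \le \sum_{i=1}^{d-1}\bigl(\mathbf{1}[s[i]>t[i]]\,l_i + (1-\mathbf{1}[s[i]>t[i]])\,h_i\bigr)(s[i]-t[i])$. I do not expect any genuine obstacle here; the only point needing a sentence of care is the coordinate-wise minimization of $h'$ over the box — justified by noting that freezing all but one coordinate reduces to minimizing an affine function of a single variable over an interval — and, if a reader is uneasy about the passage from "all vertices of $\Omega$" to "all feasible $\omega$", I would add the one-line remark that a linear objective over a polytope is extremized at a vertex.
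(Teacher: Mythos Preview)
Your proposal is correct and matches the paper's approach exactly: the paper also argues that the vertex condition of Theorem~\ref{thm:F-dominace-V} is equivalent to $h(\omega^*)\ge 0$ for LP~(\ref{eq:lp}), invokes Lemma~1 to pass to $h'(r^*)\ge 0$ for the box-constrained LP~(\ref{eq:lp2}), and then observes that each $r[i]$ can be chosen independently from $[l_i,h_i]$ to minimize its term. The paper leaves the last substitution implicit, so your write-up is in fact slightly more explicit than what appears there.
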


According to Theorem~\ref{thm:R-dominance}, we present a reduction of finding all instances in $I$ that $\calF$-dominate instance $t$ to a series of $2^{d-1}$ half-space reporting problem~\cite{agarwal2017simplex}, which aims to preprocess a set of points in $\real^d$ into a data structure so that all points lying below or on a query hyperplane can be reported quickly.
We partition the data space $\mathbb{R}^d$ into $2^{d-1}$ regions using $d - 1$ hyperplanes $x[i] = t[i]$ ($1 \le i < d$).
Then, each resulted region can be identified by a $(d - 1)$-bit code such that the $i$-th bit is {0} if the $i$-th attributes of instances in this region are less than $t[i]$, and {1} otherwise.
We refer the region whose identifier is $k$ in decimal as region $k$.
Suppose $t \in T_i$, let $I_{t, k}$ denote the set of instances of other uncertain objects contained in region $k$.
As an example, $I_{t, 0} = \{s \in I \setminus T_i \mid \forall 1 \le i < d. s[i] < t[i]\}$.
It is easy to verify for each $0 \le k < 2^{d-1}$, when performing $\calF$-dominance test between instances in $I_{t, k}$ and $t$, the results of $d-1$ indicator functions in Theorem~\ref{thm:R-dominance} are identical for all instances in $I_{t,k}$.
Geometrically, all instances in $I_{t, k}$ that $\calF$-dominate $t$ must lie below or on the following hyperplane,
\begin{equation}
	h_{t, k} : x[d] = \sum^{d-1}_{i=1}((1 - \lvert k \rvert_2[i])\times l_i + \lvert k \rvert_2[i]\times h_i) \times (t[i] - x[i]) + t[d],
	\label{eq:half-space}
\end{equation}
where $\lvert k \rvert_2 [i]$ is $i$-th bit of the binary of number $k$.

\begin{figure}
	\subfigure[{Partition data space $\mathbb{R}^2$ with $t_{2,3}$, and two hyperplanes $h_{t_{2,3}, 0}, h_{t_{2,3}, 1}$ of $t_{2,3}$.}]{
		\includegraphics[width=.42\linewidth]{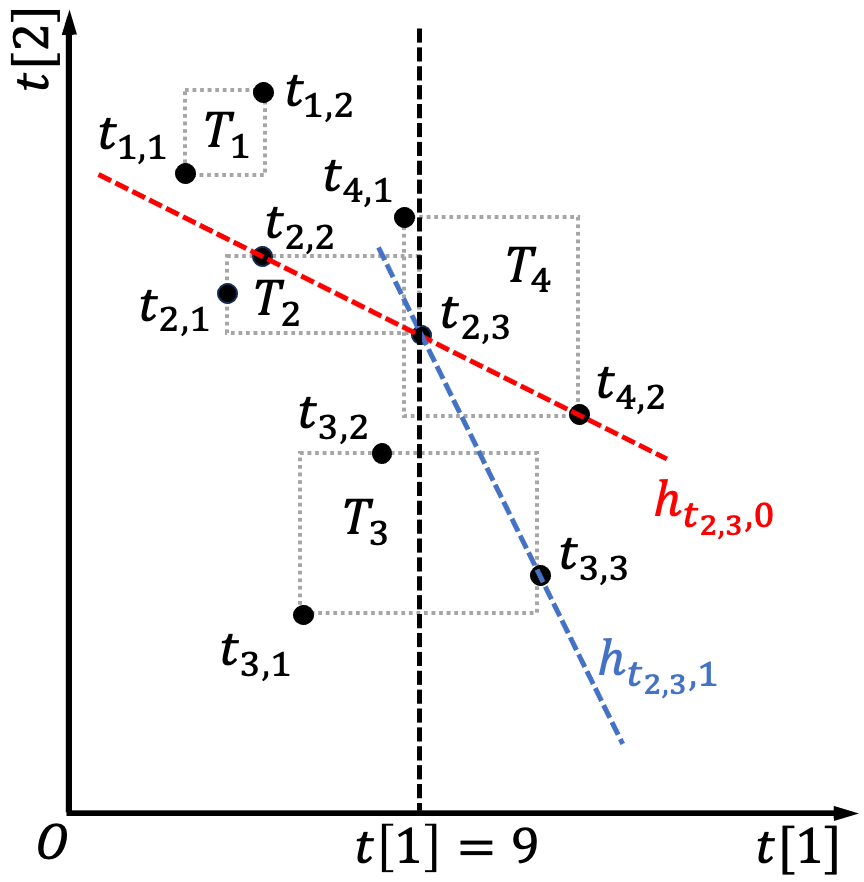}
		\label{fig:range-search}
	}\hspace{1ex}
	\subfigure[{Dual hyperplanes in $I^*_{t_{2,3}, 0}$, two faces $f_1, f_2$ of $\mathcal{A}(I^*_{t_{2,3}, 0})$, and dual point $h^*_{t_{2,3}, 0}$.}]{
		\includegraphics[width=.42\linewidth]{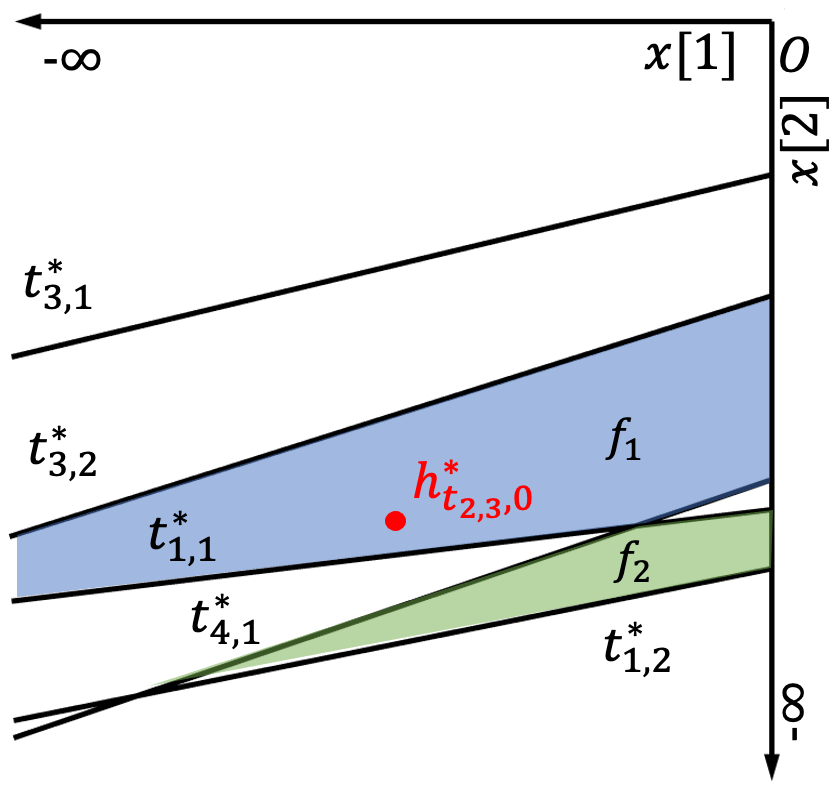}
		\label{fig:point-lcation}
	}\vspace{-2mm}
	\caption{{An illustration of the reduction to half-space reporting problem and performing point location queries in dual space.}}
	\vspace{-2mm}
	\label{fig:hp-dual}
\end{figure}

\begin{example}\label{ep1}
	The uncertain dataset used in Example~\ref{ep} is plotted in Fig.~\ref{fig:range-search}.
	Consider instance $t_{2, 3}$.
	Data space $\mathbb{R}^2$ is partitioned with the line $t[1] = t_{2,3}[1] = 9$.
	Region 0 contains the set of instances $I_{t_{2,3}, 0} = \{s \in I \setminus T_2 \mid s[1] \le t_{2,3}[1]\}= \{t_{1, 1}, t_{1, 2}, t_{3, 1}, t_{3, 2}, t_{4,1}\}$, and region 1 contains the set of instances $I_{t_{2, 3}, 1} = \{t_{3,3}, t_{4,2}\}$.
	Given weight ratio constraints $R = [0.5, 2]$, according to equation~(\ref{eq:half-space}), the hyperplane $h_{t_{2,3}, 0}$ in region 0 is $t[2] = -0.5t[1] + 16.5$ and the hyperplane $h_{t_{2,3}, 1}$ in region 1 is $t[2] = -2t[1] + 30$. 
	Since $t_{3, 1}$ and $t_{3,2}$ ly below or on $h_{t_{2, 3}, 0}$, we know $t_{3, 1} \prec_\calF t_{2, 3}$ and $t_{3,2} \prec_\calF t_{2,3}$.
	And since $t_{3,3}$ lies on $h_{t_{2,3}, 1}$, we know $t_{3,3} \prec_\calF t_{2,3}$.
\end{example}

The half-space reporting problem can be efficiently solved using the well-known \textit{point-hyperplane duality}~\cite{mark2008computational}.
The duality maps a point $p = (p[1], \cdots, p[d]) \in \real^d$ into the hyperplane $p^* : x[d] = p[1]x[1] + \cdots + p[d-1]x[d-1] - p[d]$, and a hyperplane $h : x[d] = \alpha[1]x[1] + \cdots + \alpha[d-1]x[d-1] - \alpha[d]$ into the point $h^* = (\alpha[1] \cdots, \alpha[d])$.
It is proved that if $p$ lies above (\resp, below, on) $h$, then $h^*$ lies above (\resp, below, on) $p^*$.
Thus, the dual version of the half-space reporting problem becomes that given a set of $n$ hyperplanes in $\real^d$ and a query point $q$, report all hyperplanes lying above or through $q$.
Let $H$ be the set of $n$ hyperplanes in $\real^d$, the {\it arrangement} of $H$, denoted by $\mathcal{A}(H)$, is a subdivision of $\real^d$ into {\it faces} of dimension $k$ for $0 \le k \le d$.
Each face in $\mathcal{A}(H)$ is a maximal connected region of $\real^d$ that lies in the same subset of $H$.
For a query point $q$, let $\lambda(q, H)$ denote the set of hyperplanes in $H$ lying above or through $q$.
It is easy to verify that all points $p$ lying on the same face $f$ of $\mathcal{A}(H)$ have the same $\lambda(p, H)$, denoted by $\lambda(f, H)$.
Thus, with a precomputation of $\lambda(f, H)$ for each face $f$ of $\mathcal{A}(H)$ and the following structure for point location in $\mathcal{A}(H)$, $\lambda(q, H)$ can be found in logarithmic time.

\begin{theorem}[Structure for Point Location~\cite{DBLP:journals/iandc/Meiser93}]
	Given a set $H$ of $n$ hyperplanes in $\real^d$ and a query point $q$, there is a data structure of size $O(n^{d + \varepsilon})$ which can be constructed in $O(n^{d+\varepsilon})$ expected time for any $\varepsilon > 0$, so that the face of $\mathcal{A}(H)$ containing $q$ can be located in $O(\log{n})$ time.
	\label{thm:fast-point-location}
\end{theorem}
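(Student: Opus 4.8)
\noindent\emph{Proof sketch (plan).} Since Theorem~\ref{thm:fast-point-location} is the classical point-location result, the plan is to reconstruct the hierarchical random-sampling construction that underlies it, in the language of $\varepsilon$-cuttings. First I would fix a parameter $r = r(d,\varepsilon)$, a sufficiently large constant, and draw a uniform random subset $R \subseteq H$ with $|R| = r$. I would compute the arrangement $\mathcal{A}(R)$ explicitly and refine each of its $O(r^d)$ cells by a canonical (bottom-vertex) triangulation, obtaining a family $\Xi$ of $O(r^d)$ simplices that together cover $\real^d$. For each simplex $\Delta \in \Xi$ I would form its \emph{conflict list} $H_\Delta = \{h \in H : h \cap \mathrm{int}(\Delta) \neq \emptyset\}$ and recurse on each pair $(\Delta, H_\Delta)$, stopping when a conflict list has constant size, in which case the local arrangement is stored directly. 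The output of a query $q$ is the sequence of simplices of $\Xi$ containing $q$ encountered down the recursion; their intersection pins down the face of $\mathcal{A}(H)$ containing $q$, and at the bottom level the $O(1)$ remaining hyperplanes fix the sign relations that were not yet resolved.

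Next I would analyse the query time. To descend one level it suffices to compute the sign of $q$ with respect to each of the $r$ hyperplanes of $R$ ($O(r) = O(1)$ work, since $r$ is a constant), which identifies the cell of $\mathcal{A}(R)$ and hence the unique simplex of $\Xi$ containing $q$. If the random sample is \emph{good} in the sense that $|H_\Delta| \le c\,(n/r)\log r$ for every $\Delta \in \Xi$ (with $c$ an absolute constant), then choosing $r$ large enough that $\rho := r/(c\log r) \ge 2$ makes the subproblem size shrink by a factor at least $\rho$ per level, so the recursion has depth $\log_\rho n = O(\log n)$ and the total query cost is $O(\log n)$. For the space and preprocessing bounds I would set up the recurrence $S(n) = O(r^d)\cdot S(c(n/r)\log r)$ with $S(O(1)) = O(1)$; summing conflict-list storage level by level, level $i$ contributes $O\!\left(n\,(r^{d-1}c\log r)^i\right)$, a geometric-type series dominated by its last term at level $\log_\rho n$, which evaluates to $n^{\,1 + (d-1)\log_\rho r + o(1)} = n^{\,d + o(1)}$ because $\log_\rho r = \log r/\log(r/(c\log r)) \to 1$ as $r \to \infty$. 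Choosing $r$ large enough that this exponent is at most $d+\varepsilon$ gives size $O(n^{d+\varepsilon})$; adding the $O(r^d)$ cost of building and triangulating $\mathcal{A}(R)$ and the $O(nr)$ cost of splitting the conflict lists at each node, and resampling when a sample is not good, gives expected preprocessing time $O(n^{d+\varepsilon})$ as well.

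The step I expect to be the main obstacle is the probabilistic guarantee that a single constant-size sample $R$ is good with probability bounded below by a positive constant, i.e.\ that \emph{all} $O(r^d)$ simplices of the triangulation simultaneously have conflict lists of size $O((n/r)\log r)$. This is exactly where the $\varepsilon$-net / Clarkson--Shor machinery is needed: each simplex of a bottom-vertex triangulation is determined by at most $\binom{d+1}{2}$ hyperplanes, so the number of candidate simplices over all possible samples is $n^{O(d^2)}$; the $\varepsilon$-net theorem gives, for a fixed simplex, that it has more than $c(n/r)\log r$ conflicting hyperplanes with probability exponentially small in $r$, and a union bound over the $n^{O(d^2)}$ candidates drives the total failure probability below a constant once $r$ is chosen large relative to $d$ and $1/\varepsilon$. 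After this, the depth argument, the $O(1)$-per-level query walk, and the resolution of the size/preprocessing recurrences are routine; a secondary point requiring only a remark is that a hyperplane not in $H_\Delta$ lies entirely on one side of $\Delta$, so it need not be stored recursively and its side is classified once during preprocessing.
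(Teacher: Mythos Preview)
The paper does not prove Theorem~\ref{thm:fast-point-location} at all: it is quoted verbatim as a known black-box result from Meiser's point-location paper, and the surrounding text only uses its guarantees (logarithmic query, $O(n^{d+\varepsilon})$ space and expected preprocessing). So there is no ``paper's own proof'' to compare against.

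That said, your sketch is a faithful reconstruction of the standard hierarchical-cutting argument behind this theorem. The ingredients you line up---a constant-size random sample $R$, the bottom-vertex triangulation of $\mathcal{A}(R)$ into $O(r^d)$ simplices, conflict lists $H_\Delta$, the Clarkson--Shor/$\varepsilon$-net bound guaranteeing $|H_\Delta|=O((n/r)\log r)$ simultaneously for all simplices with constant probability, and recursion---are exactly the right ones, and your recurrence analysis giving depth $O(\log n)$ and total size $n^{1+(d-1)\log_\rho r+o(1)}\le n^{d+\varepsilon}$ for $r$ large enough is correct. One small point worth tightening: after computing the $r$ sign bits of $q$ you have the \emph{cell} of $\mathcal{A}(R)$, but you still need to pick out the unique simplex of its triangulation containing $q$; this is an extra $O(r^d)=O(1)$ step per level, so it does not affect the asymptotics, but it should be mentioned. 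With that caveat, the plan would go through.
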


Accordingly, after building the point location structure for the set of dual hyperplanes of each $I_{t, k}$, by locating the dual point of $h_{t, k}$, we can find all instances in $I_{t, k}$ that $\calF$-dominate $t$ efficiently.
However, according to equation~(\ref{eq:rskyprob-def}), in order to compute $\Pr_{\rm rsky}(t)$, we need to further calculate the cumulative probability of instances from the same uncertain object.
In what follows, we propose an efficient algorithm to compute ARSP by modifying the above algorithm.

In the preprocessing stage, for each instance $t \in I$, say $t \in T_i$, the algorithm partitions $I \setminus T_i$ into $2^{d-1}$ sets $I_{t, k} = \{s \in I \setminus T_i \mid s$ in region $k$ derived by partitioning $[0, 1]^d$ with $t\}$ ($0 \le k < 2^{d-1}$).
Then, for each set $I_{t, k}$, it computes the set of dual hyperplanes $I^*_{t, k} = \{s^* \mid s \in I_{t, k}\}$ and builds the point location structure on $I^*_{t, k}$.
Finally, it constructs $\mathcal{A}(I^*_{t, k})$ and records an array $\sigma_f = \langle \sigma_f[1], \cdots, \sigma_f[m] \rangle$ for each face $f$ of $\mathcal{A}(I^*_{t, k})$,  where
$\sigma_f[j] = \sum_{s^* \in \lambda(f, I^*_{t, k}) \wedge s \in T_j}p(s)$,
\ie, the sum of probabilities over all instances of object $T_j$ lying below or on the hyperplane $p^*$, where $p^*$ is the dual hyperplane of some point $p$ lying in face $f$.

In the query processing stage, given weight ratio constraints $R = \prod^{d-1}_{i = 1}[l_i, h_i]$, the algorithm processes each instance $t$ as follows.
It first initializes $\Pr_{\rm rsky}(t) = p(t)$ and $\sigma[i] = 0$ for $1 \le i \le m$, where $\sigma[i]$ is for recording the sum of existence probabilities of instances from object $T_i$ that $\calF$-dominate $t$ found so far.
Then, for each $0 \le k < 2^{d-1}$, it compute the dual point $h^*_{t, k}$ of the hyperplane $h_{t, k}$ defined in equation~(\ref{eq:half-space}), and performs point location query $h^*_{t, k}$ on the structure built on $I^*_{t, k}$.
Let $f$ be the face returned by the point location query $h^*_{t, k}$, for $1 \le j \le m$, it updates $\Pr_{\rm rsky}(t)$ to $\Pr_{\rm rsky}(t) \times (1 - \sigma[j] - \sigma_f[j])/(1 - \sigma[j])$ and adds $\sigma_f[j]$ to $\sigma[j]$.
After all queries, it returns $\Pr_{\rm rsky}(t)$ as the final rskyline probability of $t$.
Since each point location query can be performed in $O(\log{n})$ time and the update of $\Pr_{\rm rsky}(t)$ requires $O(m)$ time for each $\sigma_f$, the time complexity of this algorithm is $O(2^dmn\log{n})$.

\begin{example}
	Continue with Example~\ref{ep1}.
	For instance $t_{2, 3}$, in the preprocessing stage, the algorithm will compute dual hyperplanes of $I_{t_{2,3}, 0}$ and $I_{t_{2,3}, 1}$, build point location structures on $I^*_{t_{2,3}, 0}$ and $I^*_{t_{2,3}, 1}$, and record $\sigma_f$ for each face $f$ of $\mathcal{A}(I^*_{t_{2,3},0})$ and $\mathcal{A}(I^*_{t_{2,3},1})$.
	As an example, hyperplanes in $I^*_{t_{2,3}, 0}$ are plotted in Fig.~\ref{fig:point-lcation}.
	For face $f_1$, it records $\sigma_{f_1}[1] = \sigma_{f_1}[4] = 0$, $\sigma_{f_1}[3] = p(t_{3,1}) + p(t_{3,2}) = 2/3$ since $t^*_{3,1}$ and $t^*_{3,2}$ lie above or through every point in $f_1$ and for face $f_2$ it records $\sigma_{f_2}[1] = p(t_{1,1}) = 1/2$, $\sigma_{f_2}[3] = p(t_{3,1}) + p(t_{3,2}) = 2/3$, $\sigma_{f_2}[4] = p(t_{4,1}) = 1/2$ since $t^*_{1, 1}, t^*_{3,1}, t^*_{3,2}$, and $t^*_{4,1}$ lie above or through every point in $f_1$.

	Then, given weight ratio constraints $R = [0.5, 2]$, to compute $\Pr_{\rm rsky}(t_{2,3})$, the algorithm first initializes $\Pr_{\rm rsky}(t_{2,3}) = p(t_{2,3}) = 1/3$ and $\sigma[i] = 0$ ($1 \le i \le 4$), then performs point location query $h^*_{t_{2,3}, 0}$ and $h^*_{t_{2,3}, 1}$ on $I^*_{t_{2,3}, 0}$ and $I^*_{t_{2,3}, 1}$ respectively to update $\Pr_{\rm rsky}(t_{2,3})$ and $\sigma[i]$ ($1 \le i \le 4$).
	By locating $h^*_{t_{2,3}, 0} = (-0.5, -16.5)$, which is the dual point of $h_{t_{2,3}, 0} : t[2] = -0.5t[1] + 16.5$, face $f_1$ is returned.
	Since only $\sigma_{f_1}[3] \ne 0$, the algorithm updates $\Pr_{\rm rsky}(t_{2,3})$ to $\Pr_{\rm rsky}(t_{2, 3}) * (1 - \sigma[3] - \sigma_{f_1}[3])/(1 - \sigma[3]) = 1/9$ and updates $\sigma[3]$ to $\sigma[3] + \sigma_{f_1}[3] = 2/3$.
\end{example}

\subsection{Sublinear-time Algorithm}

To achieve a sublinear query time, the following two bottlenecks of the above algorithm should be addressed.
First, for each instance, $2^{d-1}$ arrays are accessed and it seems unrealistic to merge them efficiently based on equation~(\ref{eq:rskyprob-def}).
Second, since instances are sequentially processed, the query time can not be less than $n$. 
In subsequent, we introduce two strategies to overcome these two inefficiencies.

\noindent{\bf Multi-level strategy.}
The reason why the above algorithm has to access $2^{d-1}$ arrays for each instance $t$ is that the query point $h^*_{t, k}$ is different for each $I^*_{t, k}$.
Hence, it needs to perform $2^{d-1}$ different point location queries to retrieve $\sigma_f$ from each $I_{t, k}$.
Different from general linear constraints, the reduction ensures that given weight ratio constraints $R = \prod^{d - 1}_{i = 1}[l_i, h_i]$, the number of point location queries performed for each instance is always $2^{d-1}$.
In this case, we show how to resolve this issue with the help of \textit{multi-level strategy}~\cite{DBLP:journals/ipl/Bentley79}.

A $2^{d-1}$-level structure on the set of dual hyperplanes $I^* = \{t^* \mid t \in I\}$ is a recursively defined point location structure.
To be specific, an one-level structure is a point location tree built on $I^*$.
And each face $f$ of $\mathcal{A}(I^*)$ records the following information:
(1) an array $\sigma_f = \langle \sigma_f[j] \mid 1 \le j \le m \rangle$, where $\sigma_f[j] = \sum_{s^* \in \lambda(f, I^*) \wedge s \in T_j}p(s)$, \ie, the sum of probabilities over all instances of object $T_j$ lying below or on the hyperplane $p^*$, where $p^*$ is the dual hyperplane of some point $p$ lying in face $f$,
(2) a product $\beta_f = \prod^m_{j = 1, \sigma_f[j] \ne 1}(1 - \sigma_f[j])$, and
(3) a count $\chi_f = |\{j \mid \sigma_f[j] = 1\}|$ are also recorded for each face $f \in \mathcal{A}(I^*)$.
A $k$-level structure is an one-level structure built on $I^*$ and each face $f$ of $\mathcal{A}(I^*)$ additionally contains an associated $(k-1)$-level structure built on $\lambda(f, I^*)$.

After constructing the $2^{d-1}$-level structure on $I^*$, the algorithm processes weight ratio constraints $R = \prod^{d-1}_{i = 1}[l_i, h_i]$ as follows.
For each instance $t$, it initializes hyperplanes set $H$ as $I^*$ and integer $k$ as zero.
While $k < 2^{d-1}$, it generate the dual point $h^*_{t, k}$ according to equation~(\ref{eq:half-space}) and performs point location query $h^*_{t, k}$ on structure built on $H$.
Then, let $f$ be the returned face, it updates $H$ as $\lambda(f, H)$ and $k$ as $k + 1$.
Note that query $h^*_{t, k}$ helps to find all instances lying below or on $h_{t, k}$ in the result of the first $k - 1$ queries.
Let $f$ be the last face returned.
According to the information recorded for $f$, $\Pr_{\rm rsky}(t)$ is calculated as follows.
If $\chi_f = 0$, then $\Pr_{\rm rsky}(t) = \beta_f \cdot p(t)/(1 - \sigma_f[i])$, or if $\chi_f = 1 \wedge \sigma_f[i] = 1$, $\Pr_{\rm rsky}(t) = \beta_f \times p(t)$, otherwise, $\Pr_{\rm rsky}(t) = 0$.
Since for each instance $t \in I$,  $\Pr_{\rm rsky}(t)$ can be computed in constant time after performing $2^{d-1}$ point location queries, the total time complexity of the multi-level structure based algorithm for ARSP computation is $O(2^{d-1}n\log{n})$ time.



\noindent{\bf Shift strategy.}
The major obstacle to the second bottleneck is that the $2^{d-1}$ point location queries $h^*_{t, k}$ ($0 \le k < 2^{d-1}$) are different for each instance $t$.
Geometrically speaking, $R = \prod^{d-1}_{i = 1}[l_i, h_i]$ is a $(d-1)$-dimensional hyper-rectangle.
Let $V = \{v = (v[1], \cdots, v[d-1]) \mid \forall 1 \le i < d, v[i] \in \{l_i, h_i\}\}$ be the set of $R$'s vertices.
For $0 \le k < 2^{d-1}$, we call a vertex $v \in V$ the $k$-vertex of $R$ if there are $k$ vertices before $v$ in the lexicographical order of vertices in $V$.
For example, $(l_1, \cdots, l_{d-1})$ is the 0-vertex of $R$ and $(h_1, \cdots, h_{d-1})$ is the $(2^{d-1}-1)$-vertex of $R$.
Let $v_k$ denote the $k$-vertex of $R$.
According to equation~(\ref{eq:half-space}), $h^*_{t, k} = (-v_k[1], \cdots, -v_k[d-1], -(\sum^{d-1}_{i=1}v_k[i]t[i] + t[d]))$.
For any two instances $t$ and $s$, each pair $h*_{t, k}$ and $h^*_{s, k}$ differs only in the last dimension.
In what follows, we introduce the \textit{shift strategy} to unify the procedures of performing point location queries for all instances by making their $h^*_{t, k}[d]$ the same for each $0 \le k < 2^{d-1}$.

Specifically, for each instance $t \in I$, say $t \in T_i$, the algorithm first creates a shifted dataset $I_t$ by treating $t$ as the origin, \ie, $I_t = \{s - t \mid s \in I \setminus T_i\}$.
Then, it merges all sets $I_t$ into a key-value pair set $\mathcal{I} = \{(s, \langle t \mid s \in I_t \rangle) \mid s \in \bigcup_{t \in I} I_t\}$.
Finally, it construct a $2^{d-1}$-level structure on the set of dual hyperplanes $\mathcal{I}^* = \{s^* \mid (s, -) \in \mathcal{I}\}$ as stated above, except that the information recorded in the one-level structure for each face $f$ of $\mathcal{A}(\mathcal{I}^*)$ is redefined as $\Pr_f = \langle \Pr_f[t] \mid t \in I \rangle$, where ${\rm Pr}_f[t] = \prod^m_{j= 1, j \ne i}(1 - \sum_{s^* \in \lambda(f, \mathcal{I}^*) \wedge s + t \in T_j}p(s))$.

Given constraints $R = \prod^{d-1}_{i=1}[l_i, h_i]$, the algorithm generates $2^{d-1}$ point location queries $h^*_k  = (v_k[1], \cdots, v_k[d-1], 0)$ ($0 \le k < 2^{d-1})$ and executes them on the $2^{d-1}$-level structure built on $\mathcal{I}^*$.
Let $f$ be the last face returned, $\Pr_{\rm rsky}(t)$ of each instance $t \in I$ is computed as $p(t) \times \Pr_f(t)$.
Since the algorithm performs a total of $2^{d-1}$ point location queries, the query time is at most $O(2^{d-1}\log{n} + n)$, where the additional linear time is required for reporting the final result.

\section{Experiments}\label{sec:experiments}

In this section, we present the experimental study for the ARSP problem.

\subsection{Experimental Setting}

\noindent{\bf Datasets.}
We use both real and synthetic datasets for expe-riments.
The real data includes three datasets.
IIP~\cite{DBLP:journals/vldb/JinYCYL10} contains 19,668 sighting records of icebergs with 2 attributes: melting percentage and drifting days.
Each record in IIP has a confidence level according to the source of sighting, including R/V (radar and visual), VIS (visual only), RAD (radar only).
We treat each record as an uncertain object with one instance and convert these three confidence levels to probabilities 0.8, 0.7, and 0.6 respectively.
CAR~\cite{DBLP:journals/tkde/LiuYYL13} contains 184,810 cars with 4 attributes: price, power, mileage, registration year.
To convert CAR into an uncertain dataset, we organize cars with the same model into an uncertain object $T$ and for each $t \in T$, we set $p(t) = 1/|T|$, \ie, when a customer wants to rent a specific model of car, any car of that model in the dataset will be offered with equal probability.
NBA~\cite{DBLP:journals/tkde/KimIP12} includes 354,698 game records of 1,878 players with 8 metrics: points, assists, steals, blocks, turnovers, rebounds, minutes, field goals made.
We treat each player as an object $T$ and each record of the player as an instance $t$ of $T$ with $p(t) = 1/|T|$.

The synthetic datasets are generated with the same procedure described in~\cite{DBLP:conf/vldb/PeiJLY07, DBLP:conf/pods/AtallahQ09, DBLP:journals/tkde/KimIP12, DBLP:conf/cikm/LiuZXLL15}.
Let $m$ be the number of uncertain objects.
For $1 \le i \le m$, we first generate center $c_i$ of object $T_i$ in $[0, 1]^d$ following independent (IND), anti-correlated (ANTI), or correlated (CORR) distribution~\cite{DBLP:conf/icde/BorzsonyiKS01}.
Then, we generate a hyper-rectangle $R_i$ centered at $c_i$.
And all instances of $T_i$ will appear in $R_i$.
The edge length of $R_i$ follows a normal distribution in range $[0, l]$ with expectation $l/2$ and standard deviation $l/8$.
And the number of $T_i$'s instances follows a uniform distribution over interval $[1, cnt]$.
We generate $n_i$ instances uniformly within $R_i$ and assign each instance the existence probability $1/n_i$.
Finally, we remove one instance from the first $\phi \times m$ objects so that for any $1 \le j \le \phi \times m$, $\sum_{t \in T_j} p(t) < 1$.
Therefore, the expected number of instances in a synthetic dataset is $({cnt}/{2} - \phi)\times m$.

\noindent{\bf Constraints.}
Our experiments consider two methods to generate linear constraints on weights.
WR specifies weak rankings on weight attributes~\cite{DBLP:journals/cor/EumPK01}.
Given the number of constraints $c$, it requires $\omega[i] \ge \omega[i + 1]$ for every $1 \le i \le c$.
IM generates constraints in an interactive manner~\cite{DBLP:journals/pvldb/QianGJ15}.
Specifically, it first chooses a weight $\omega^*$ randomly in $\simplex^{d-1}$.
Then, for each $1 \le i \le c$, it generates two objects $t_i, s_i$ uniformly in $[0, 1]^d$, divide $\simplex^{d-1}$ into two subspaces with $\sum^d_{j = 1}(t_i[j] - s_i[j]) \times \omega[j] = 0$, and selects the one containing $\omega^*$ as the $i$-th input constraint.
The main difference between these two methods is that the preference region generated by WR always has $d$ vertices, while the number of vertices of the preference region generated by IM usually increases with $c$.


\noindent{\bf Algorithms.}
We implement the following algorithms in C++ and the source code is available at~\cite{github}.
\begin{enumerate}[$\bullet$]
	\item {ENUM}: the first baseline algorithm in Section~\ref{subsec:bsl}.
	\item {LOOP}: the second baseline algorithm in Section~\ref{subsec:bsl}.
	\item {KDTT}: the kdtree-traversal algorithm in Section~\ref{subsec:tt}.
	\item {KDTT+}: the kdtree-traversal algorithm incorporating preorder traversal into tree construction in Section~\ref{subsec:tt}. 
	\item {QDTT+}: the quadtree-traversal algorithm incorporating preorder traversal into tree construction in Section~\ref{subsec:tt}. 
	\item {B\&B}: the branch-and-bound algorithm in Section~\ref{subsec:bb}.
	\item {DUAL} ({-M/S}): the dual-based algorithm in Section~\ref{sec:eclprobalg}, where {-M} is for multi-level strategy, {-S} is for shift strategy.
\end{enumerate}
All experiments are conducted on a machine with a 3.5-GHz Intel(R) Core(TM) i9-10920X CPU, 256GB main memory, and 1TB hard disk running CentOS 7.

\subsection{Effectiveness of ARSP}

We verify the effectiveness of ARSP on the NBA dataset.
To facilitate analysis, we extract game records in 2021 from NBA and consider 3 attributes for each player: rebound, assist, and points.
We still treat each player as an object $T$ and each record of the player as an instance $t$ of $T$ with $p(t) = 1/|T|$.
We set $\calF = \{\omega[1] \text{Rebound} + \omega[2] \text{Assist} + \omega[3] \text{Point} \mid \omega[1] \ge \omega[2] \ge \omega[3]\}$.
Table~\ref{tab:top-14} reports the top-14 players in rskyline probability ranking along with their rskyline probabilities.
As a comparison, we also conduct the traditional rskyline query on the aggregated dataset, which is obtained by computing the average statistics for each player.
We call the result aggregated rskyline for short hereafter and mark players in the aggregated rskyline with a ``*'' sign in Table~\ref{tab:top-14}.

\begin{table}[ht]
	\centering
	\caption{{Top-14 players in rskyline probability ranking.}}
	\label{tab:top-14}
	\begin{tabular}{|c|c|c|c|}
		\hline
		\textbf{Player} & $\Pr_{\rm rsky}$ & \textbf{Player} & $\Pr_{\rm rsky}$ \\ \hline \hline
		* Russell Westbrook & 0.349 & * Rudy Gobert & 0.142 \\ \hline
		* Nikola Jokic & 0.331 & * Clint Capela  & 0.134 \\ \hline
		Giannis Antetokounmpo & 0.292 & Nikola Vucevic & 0.126 \\ \hline
		James Harden & 0.213 & Andre Drummond & 0.109 \\ \hline
		Joel Embiid & 0.186 & Julius Randles & 0.109 \\ \hline
		Luka Doncic & 0.168 & Kevin Durant & 0.101 \\ \hline
		* Domantas Sabonis & 0.162 & * Jonas Valanciunas & 0.095 \\ \hline
	\end{tabular}
\end{table}

\begin{figure}[ht]
	\centering
	\subfigure[$S_{\omega_1}(\cdot)$]{
		\includegraphics[width=.3\linewidth]{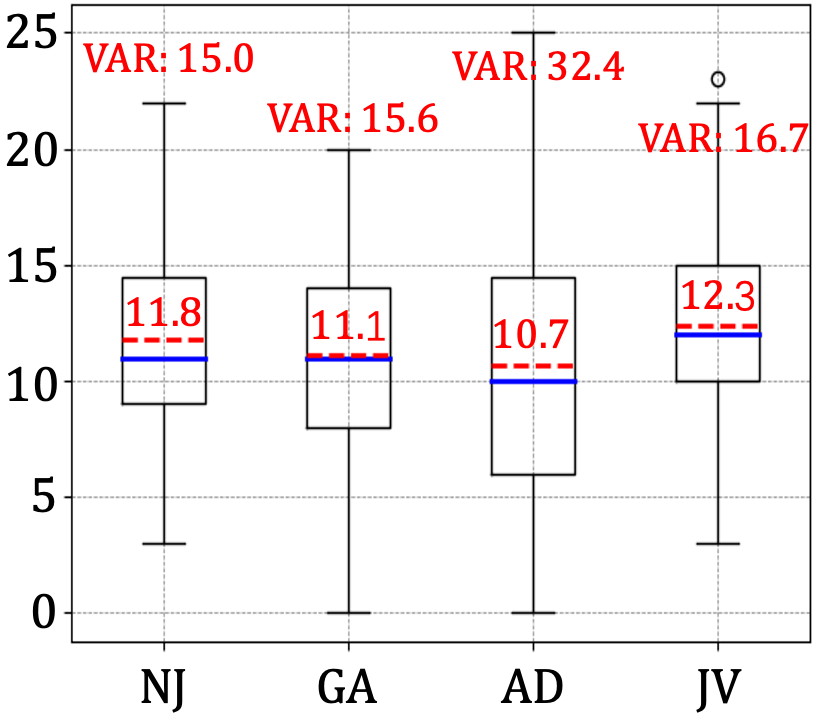}
		\label{fig:score1}
	}\hspace{-1ex}
	\subfigure[$S_{\omega_2}(\cdot)$]{
		\includegraphics[width=.3\linewidth]{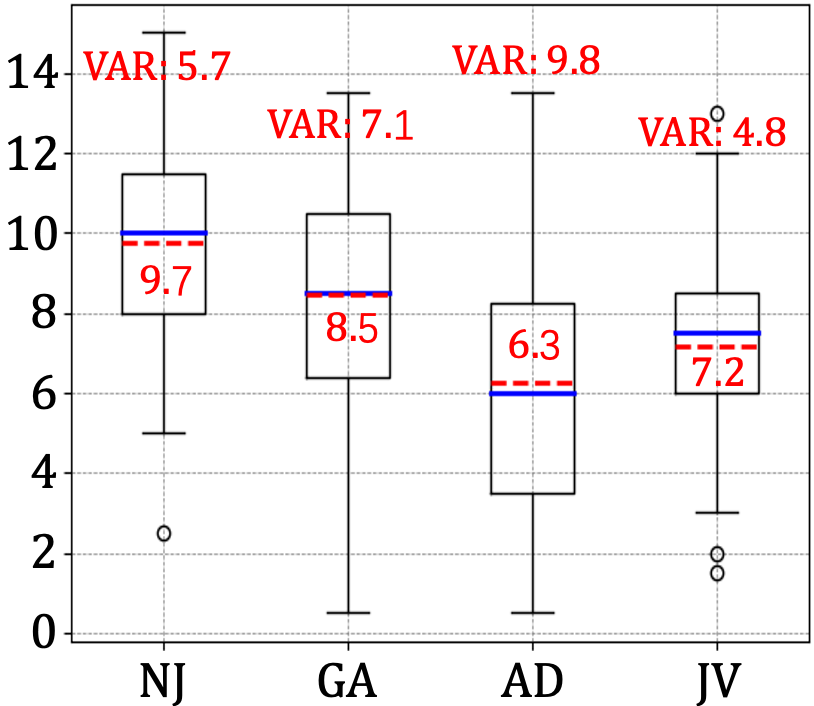}
		\label{fig:score2}
	}\hspace{-1ex}
	\subfigure[$S_{\omega_3}(\cdot)$]{
		\includegraphics[width=.3\linewidth]{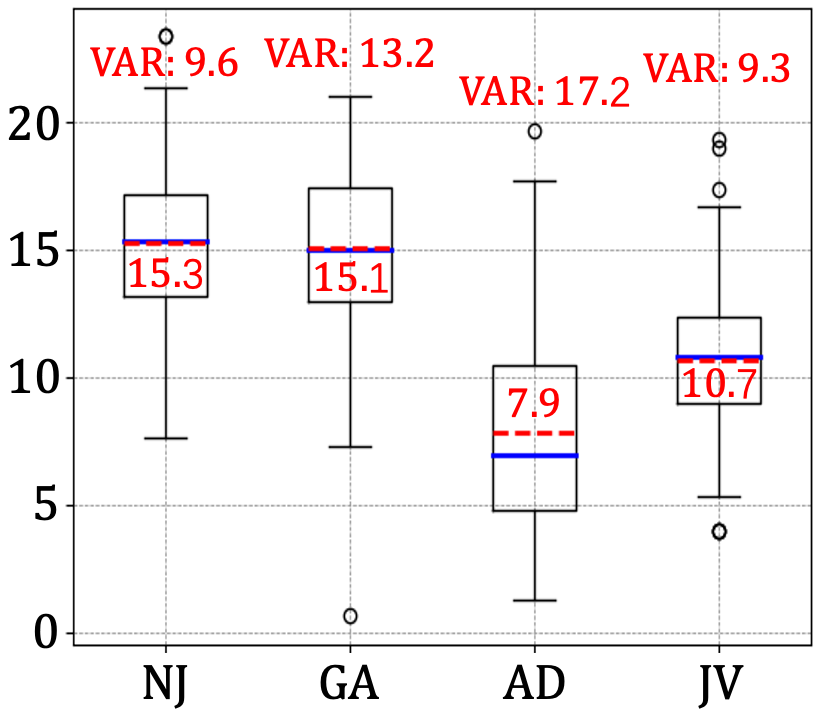}
		\label{fig:score3}
	}
	\vspace{-2mm}
	\caption{{Boxplots of players' scores under $\omega_1 = (1, 0, 0)$, $\omega_2 = (1/2, 1/2, 0)$, and $\omega_3 = (1/3, 1/3, 1/3)$, where average is marked with red dotted lines.}}
	\label{fig:score}
	\vspace{-2mm}
\end{figure}



We first observe that rskyline probabilities can reflect the difference between two incomparable players in the aggregated dataset under $\calF$.
Theorem~\ref{thm:F-dominace-V} claims that $t \prec_\calF s$ if and only if $\forall \omega \in V,$ $S_{\omega}(t) \le S_{\omega}(s)$.
Here $V = \{\omega_1 = (1, 0, 0), \omega_2 = (1/2, 1/2, 0), \omega_3 = (1/3, 1/3, 1/3)\}$.
See Fig.~\ref{fig:score} for scores of Nikola Jokic (NJ) and Jonas Valanciunas (JV) under weights in $V$.
NH not only has a good average performance so that he is in the aggregated rskyline, but also performs best in some games so that he has a high rskyline probability.
As for JV, his average performance under $\omega_1$ is great, making him belong to the aggregated rskyline.
But his large performance variance under $\omega_1$ and relatively poor performance under $\omega_2$ and $\omega_3$ suggests that many of his records are $\calF$-dominated by other players' records.
This results in his rskyline probability being pretty low.
Therefore, compared to aggregated rskyline players with high rskyline probabilities, who consistently perform well, those with low rskyline probabilities are more likely to have many records being $\calF$-dominated by other players' records, which may be less attractive.

Second, we find that players not in the aggregated rskyline but have high rskyline probabilities is also appealing.
For example, Giannis Antetokounmpo is $\calF$-dominated by Nikola Jokic in the aggregated dataset, but his rskyline probability is higher than another four aggregated rskyline players.
Compared to NJ, his scores (GA) have slightly lower averages and higher variances.
In other words, he has some records, like Nikola Jokic's, which $\calF$-dominates most of other players' records and he also has some records that are $\calF$-dominated by many of other players' records.
Besides, the large performance variance also explains why Andre Drummond (AD) is $\calF$-dominated by Jonas Valanciunas (JV) but has a higher rskyline probability.
This suggests that looking for players with high rskyline probabilities can find excellent players with slightly lower averages but higher variances in performance.

Finally, a set of players with specified size can be retrieved by performing top-$k$ queries on ARSP, while the size of the aggregated rskyline is uncontrollable.
From these observations, we conclude that ARSP provides a more comprehensive view on uncertain datasets than the aggregated rskyline.

\begin{table}[ht]
	\centering
	\caption{{Top-14 players in skyline probability ranking.}}
	\label{tab:top-14-sky}
	\begin{tabular}{|c|c|c|c|}
		\hline
		\textbf{Player} & $\Pr_{\rm sky}$ & \textbf{Player} & $\Pr_{\rm sky}$ \\ \hline \hline
		Nikola Jokic & 0.557 & LeBron James & 0.308 \\ \hline
		Russell Westbrook & 0.537 & Domantas Sabonis & 0.283 \\ \hline
		Giannis Antetokounmpo & 0.479 & Stephen Curry & 0.266 \\ \hline
		James Harden & 0.447 & Kevin Durant & 0.257 \\ \hline
		Luka Doncic & 0.398 & Nikola Vucevic & 0.236 \\ \hline
		Joel Embiid & 0.339 & Julius Randle & 0.224 \\ \hline
		Trae Young & 0.309 & Damian Lillard & 0.208 \\ \hline
	\end{tabular}
\end{table}

We also compare the distinction between uncertain skyline queries and uncertain rskyline queries.
Similar to Table~\ref{tab:top-14}, Table~\ref{tab:top-14-sky} also reports the top-14 players in skyline probability ranking along with their skyline probabilities.
By analyzing these results, we obtain several interesting observations.
First, the rskyline probability of an uncertain object is typically smaller than its skyline probability because the function set $\mathcal{F}$ improves the dominance ability of each instance.
But excellent players like Nikola Jokic, Russell Westbrook always have both high rankings in skyline probability and rskyline probability.
Second, uncertain rskyline queries can better serve the specific preferences of individual users.
Given different inputs $\calF$ from different users, rskyline probabilities of uncertain objects are variant, however, skyline probabilities of uncertain objects always remain the same.
As stated in~\cite{DBLP:journals/pvldb/CiacciaM17}, a skyline object may be $\calF$-dominated by other objects.
Therefore, an object with high skyline probabilities may have low rskyline probabilities, making it less attractive under $\calF$.
For instance, Trae Young's skyline probability is 0.309 (ranked 7th) but his rskyline probability under $\calF = \{\omega[1] \text{Rebound} + \omega[2] \text{Assist} + \omega[3] \text{Point} \mid \omega[1] \ge \omega[2] \ge \omega[3]\}$ is only 0.029 (ranked 31st).


\eat{
To verify the effectiveness of computing rskyline probabilities on uncertain datasets, we compute rskyline probabilities of players in \textsc{NBA}.
For each player, we consider three attributes
The top-10 players in rskyline probability ranking along with their rskyline probabilities are reported in Table~\ref{tab:top-10}.
For comparison, we also conduct the traditional rskyline analysis.
We calculate the average statistics on season records for each player and retrieve rskyline on this average dataset, which is called aggregated rskyline for short hereafter.
All players in the aggregated rskyline are marked with a ``*'' sign in Table~\ref{tab:top-10}.

\begin{table}[htb]
	\centering
	\caption{Top-10 players in rskyline probability ranking.}
	\label{tab:top-10}
	\begin{tabular}{|c|c|c|c|}
		\hline
		\textbf{Player} & $\Pr_{\rm rsky}(\cdot)$ & \textbf{Player} & $\Pr_{\rm rsky}(\cdot)$ \\ \hline \hline
		* Michael Jordan & 0.612 & * Russell Westbrook & 0.264 \\ \hline
		* Magic Johnson & 0.329 & Larry Bird & 0.259 \\ \hline
		* LeBron James & 0.282 & John Stockton & 0.237 \\ \hline
		David Robinson & 0.272 & James Harden & 0.230 \\ \hline
		Kareem Abdul-Jabbar & 0.269 & Hakeem Olajuwon & 0.216 \\ \hline
	\end{tabular}
\end{table}
All players returned by aggregated rskyline or with high rskyline probability have good scoring ability according to their statistics, but there are still some differences between these two results.
It is observed that players in the aggregated rskyline have very different rskyline probabilities, meanwhile players not in the aggregated rskyline may have a high rskyline probability.
The reason is that a player's season performance may have a high variance which can not be reflected by the average statistics.
For example, compared with Michael Jordan, Russell Westbrook has more season statistics with rskyline probability less than 0.001 and the average of David Robinson's statistics is relatively low, but the variance is high, which makes him not belong to the aggregated rskyline but have a high rskyline probability.

In addition, the rskyline probability determines an order of players in the aggregated rskyline, which is not represented in the original result.
This expresses the difference between two players that are not comparable under the set of user-specified scoring functions, \eg, we can say that although both belong to the aggregated rskyline, Michael Jordan is better than LeBron James since the former is more likely to appear in the rskyline of a match.
Moreover, users can efficiently perform top-$k$ queries or threshold queries on the result of the ARSP problem to retrieve a set of objects with specified size, while the aggregated rskyline size is uncontrollable.
From above observations, we conclude that the ARSP provides a more comprehensive view on uncertain datasets than the aggregated rskyline.
}


\subsection{Experimental Results under Linear Constraints.}\label{sec:exp-linear}


\begin{figure*}[t]
	\centering
	\includegraphics[height=0.12in]{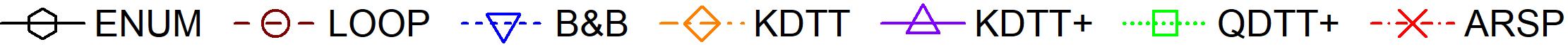}
	\\
	\subfigure[{IND, vary $m$}]{
		\includegraphics[width=0.186\linewidth]{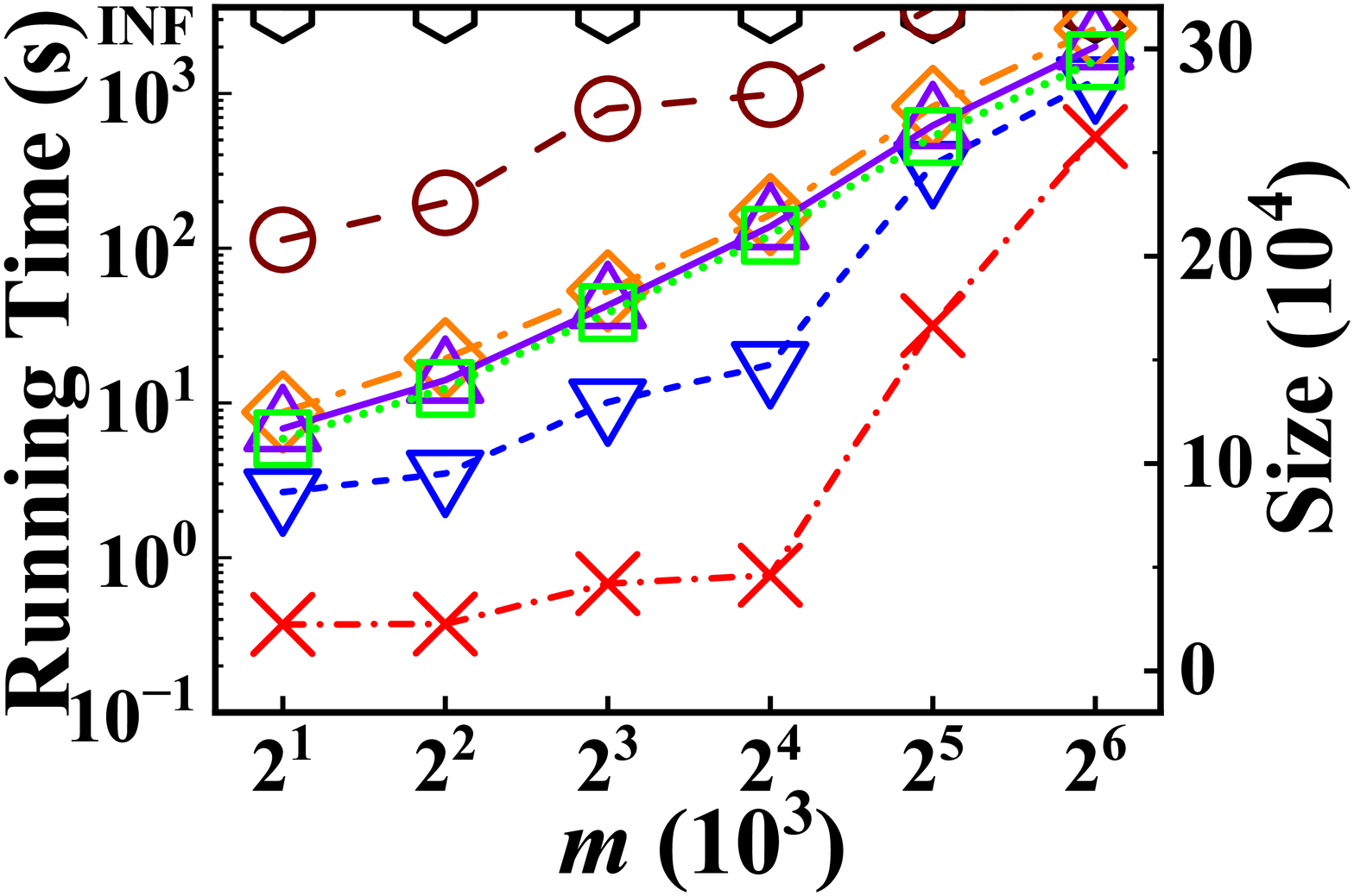}
		\label{fig:ind_m}
	}\hfill
	\subfigure[{ANTI, vary $m$}]{
		\includegraphics[width=0.186\linewidth]{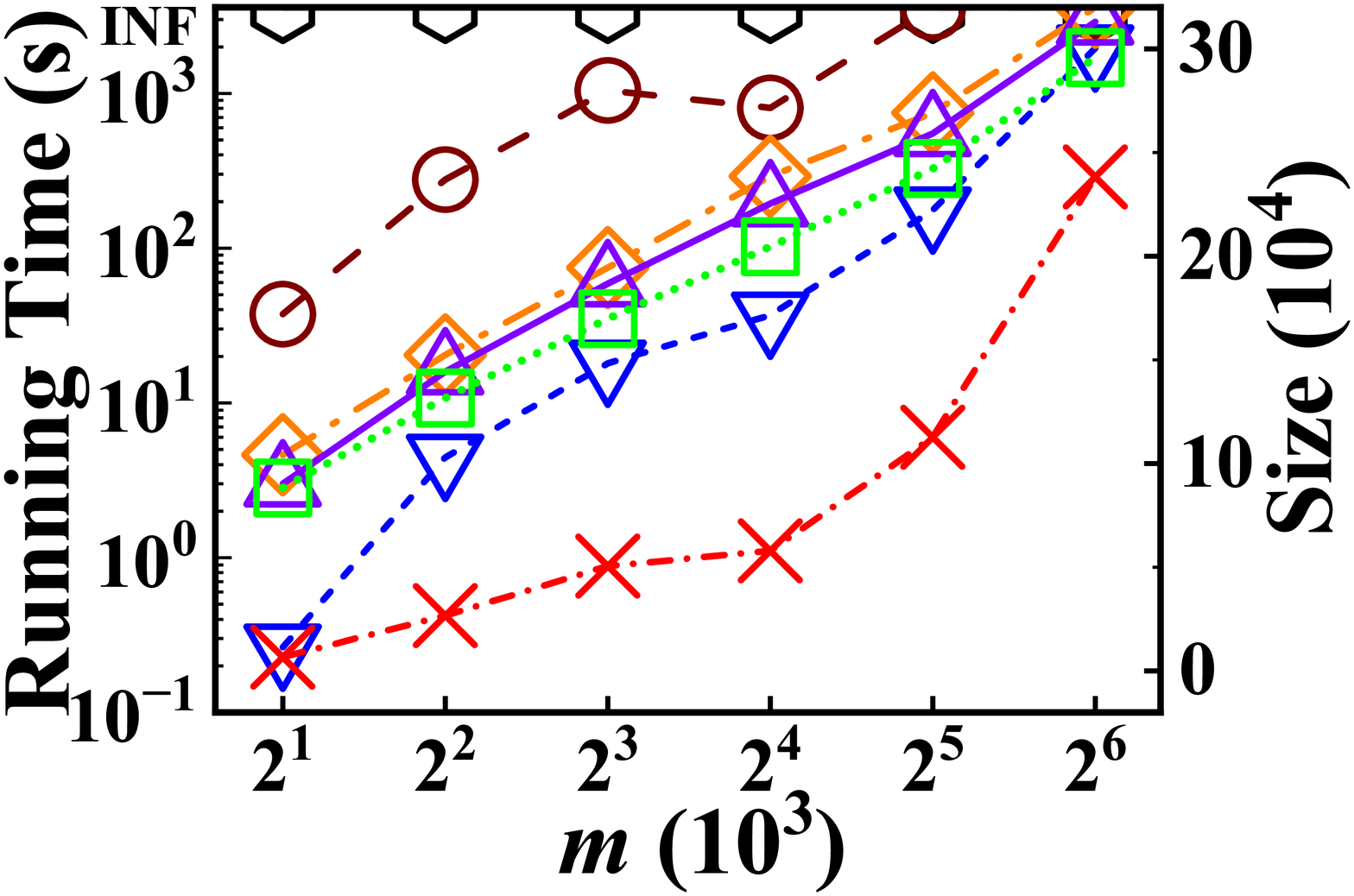}
		\label{fig:anti_m}
	}\hfill
	\subfigure[{CORR, vary $m$}]{
		\includegraphics[width=0.186\linewidth]{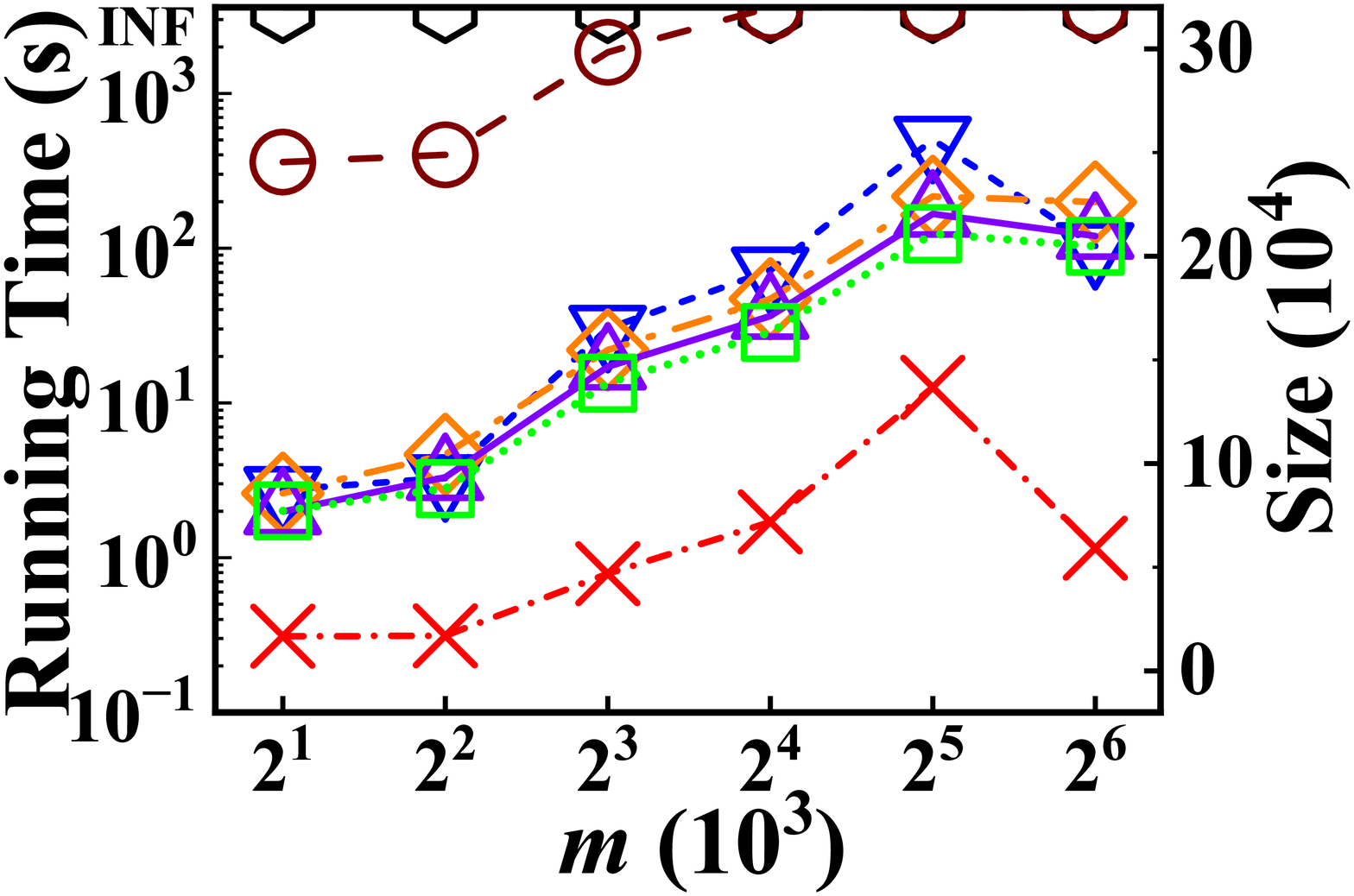}
		\label{fig:corr_m}
	}\hfill
	\subfigure[{IND, vary $cnt$}]{
		\includegraphics[width=0.186\linewidth]{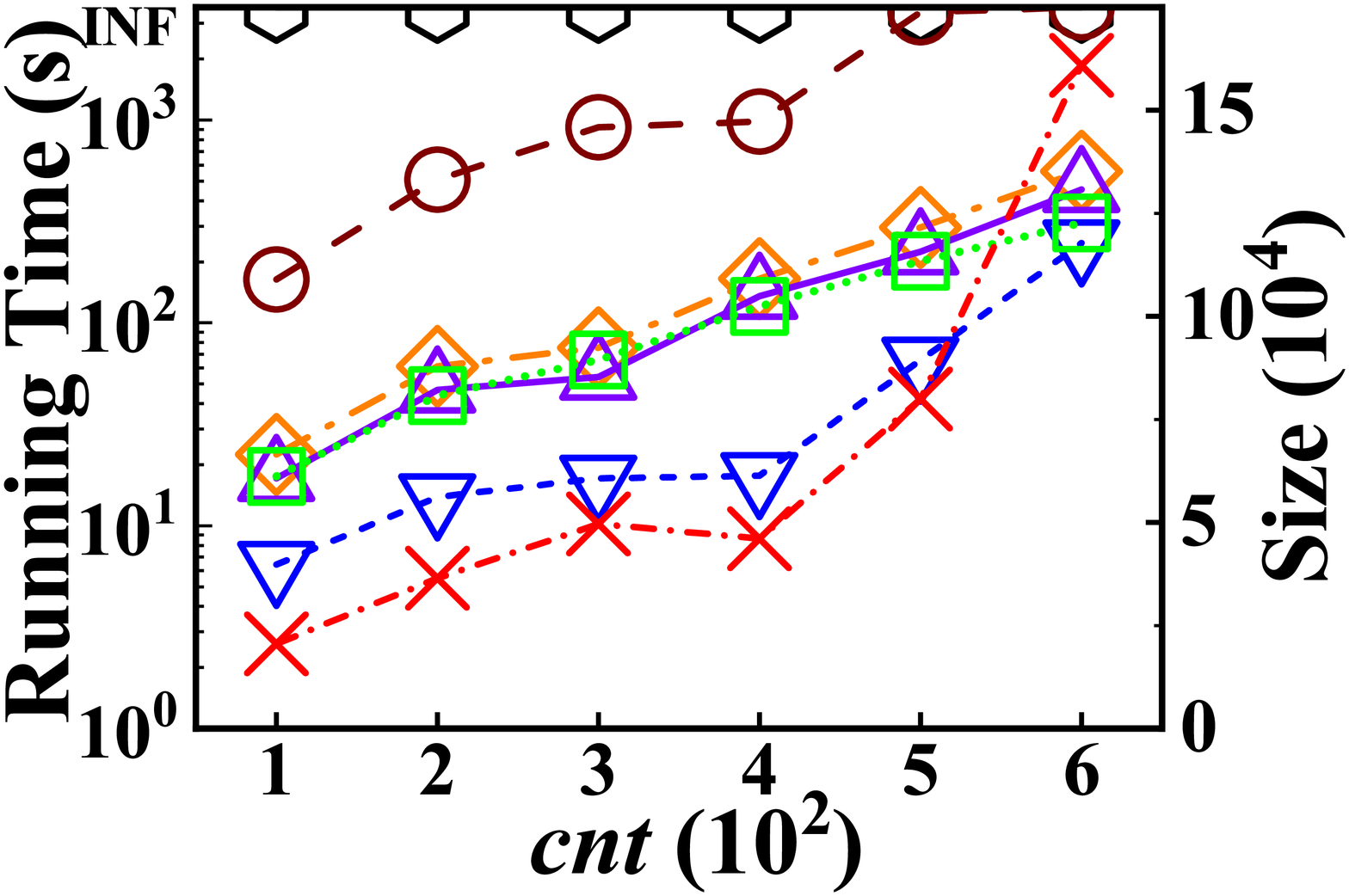}
		\label{fig:ind_cnt}
	}\hfill
	\subfigure[{ANTI, vary $cnt$}]{
		\includegraphics[width=0.186\linewidth]{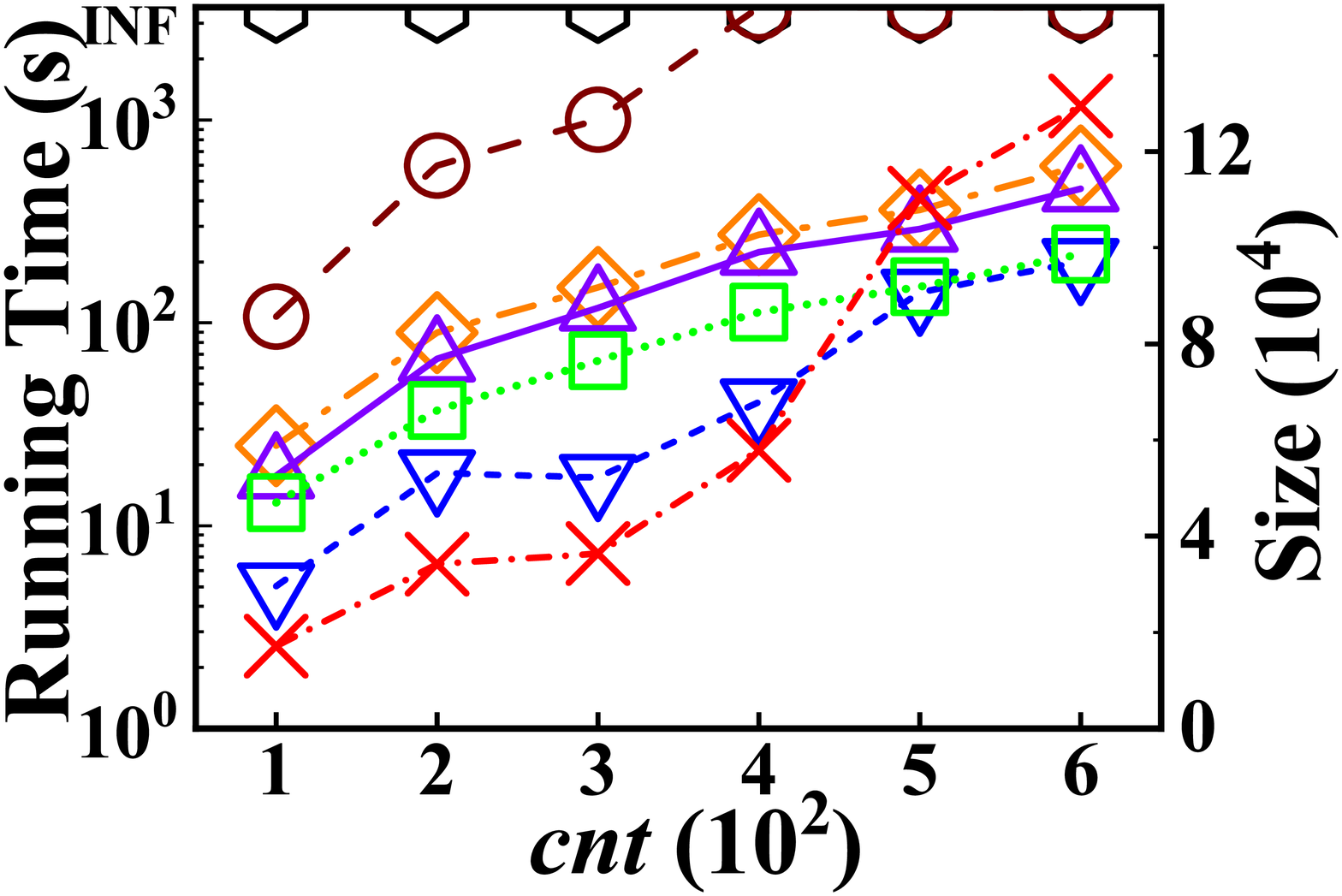}
		\label{fig:anti_cnt}
	}
	\subfigure[{CORR, vary $cnt$}]{
		\includegraphics[width=0.186\linewidth]{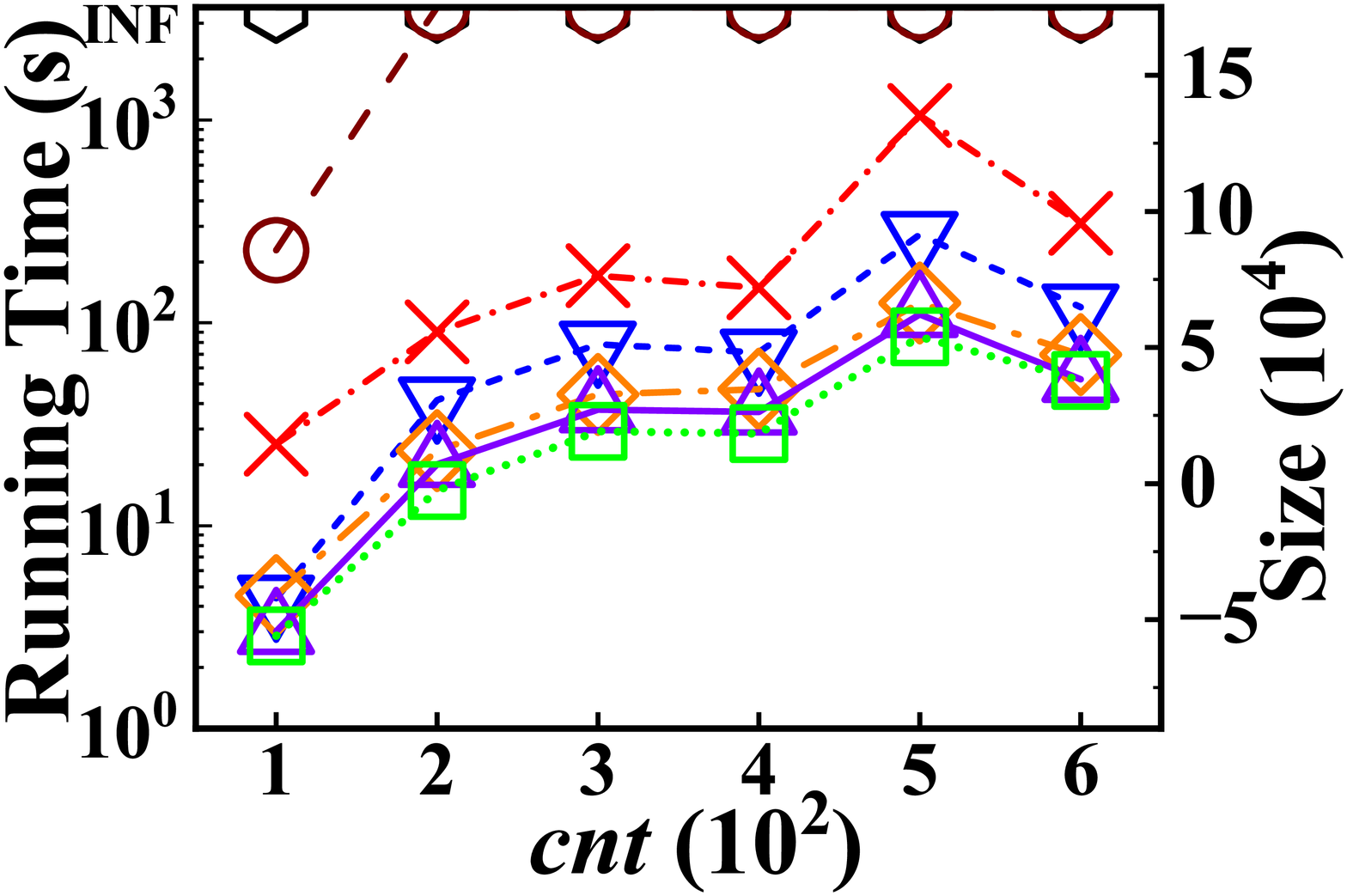}
		\label{fig:corr_cnt}
	}\hfill
	\subfigure[{IND, vary $d$}]{
		\includegraphics[width=0.186\linewidth]{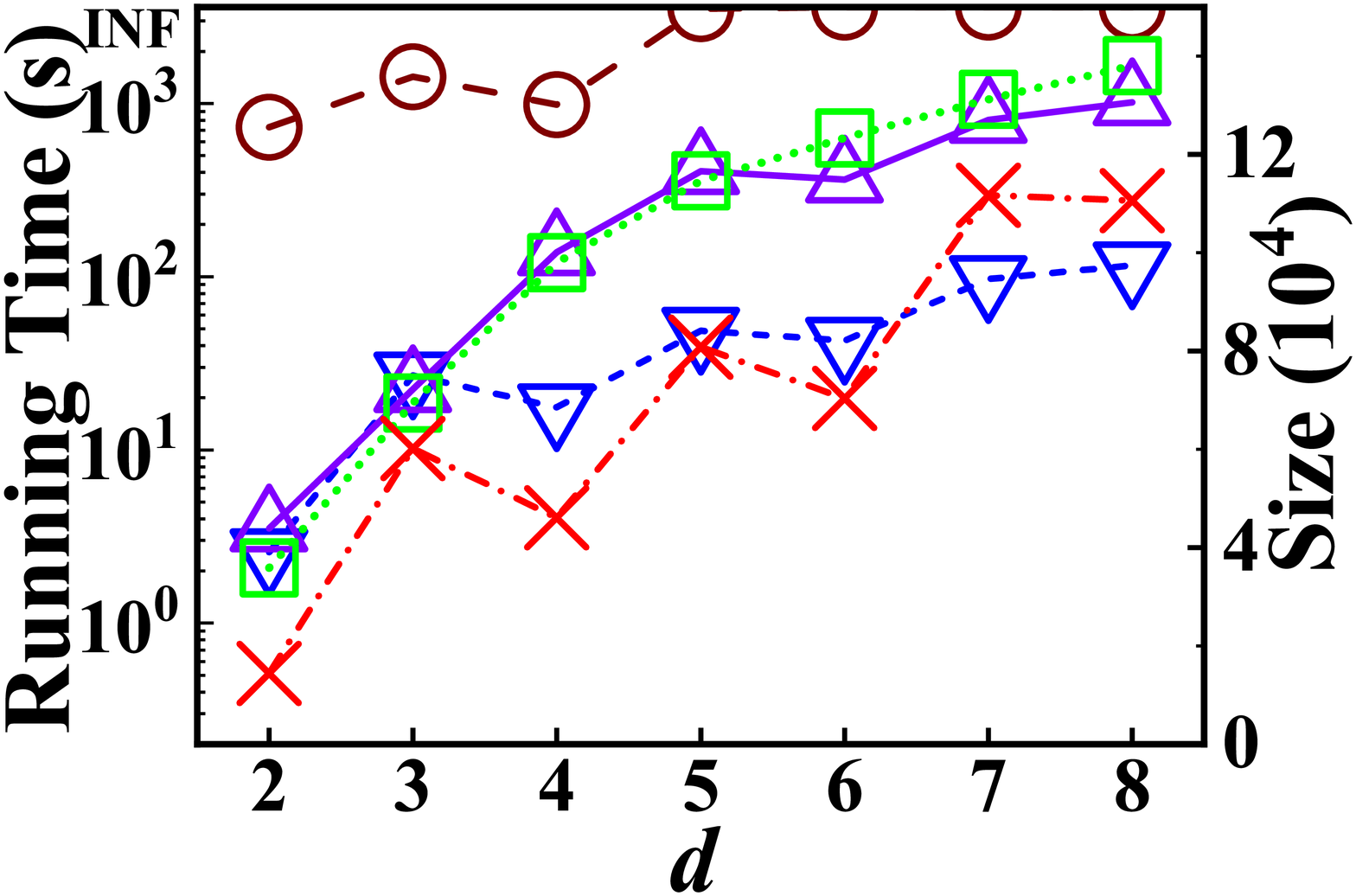}
		\label{fig:ind_d}
	}\hfill
	\subfigure[{ANTI, vary $d$}]{
		\includegraphics[width=0.186\linewidth]{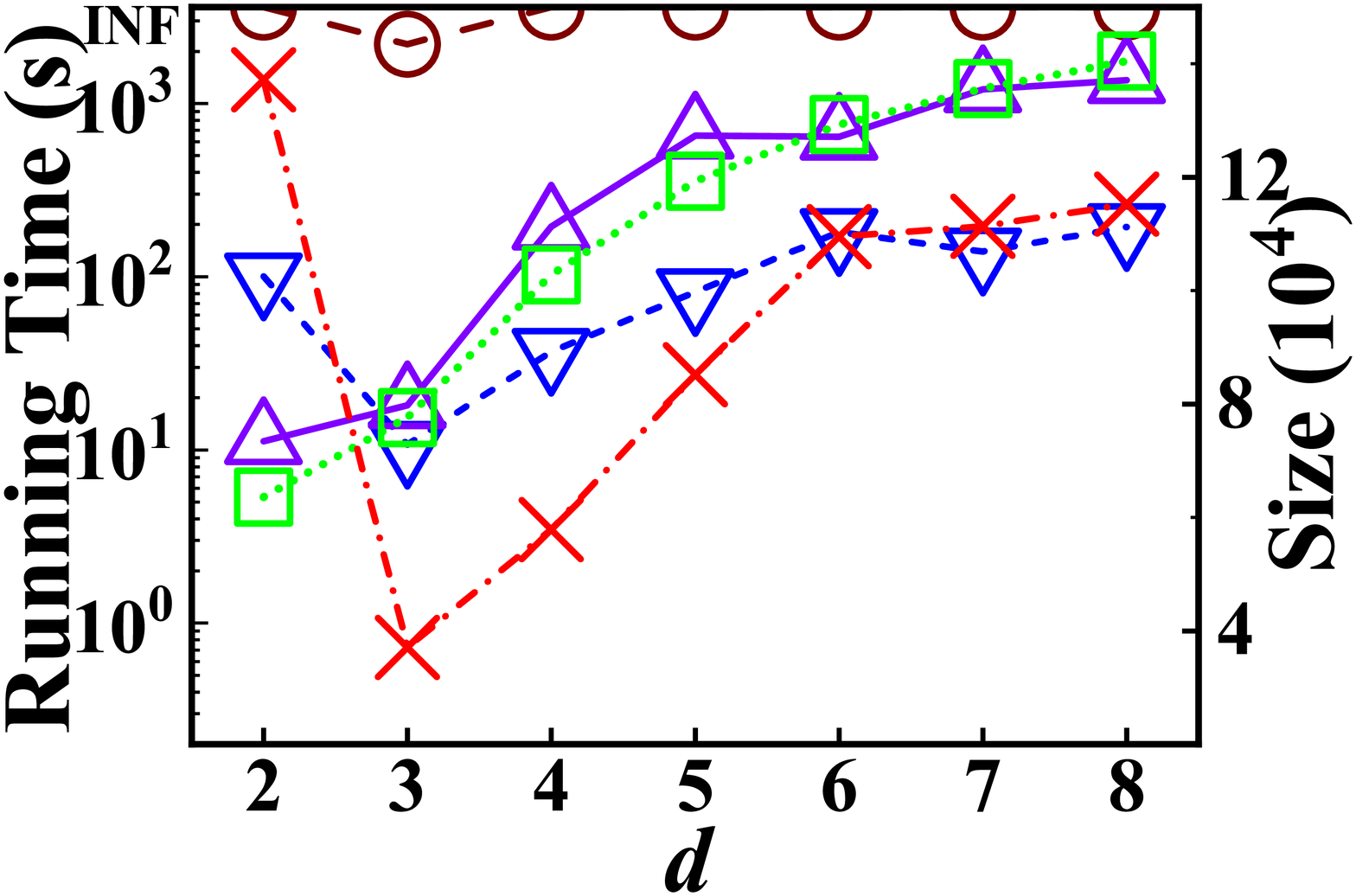}
		\label{fig:anti_d}
	}\hfill
	\subfigure[{CORR, vary $d$}]{
		\includegraphics[width=0.186\linewidth]{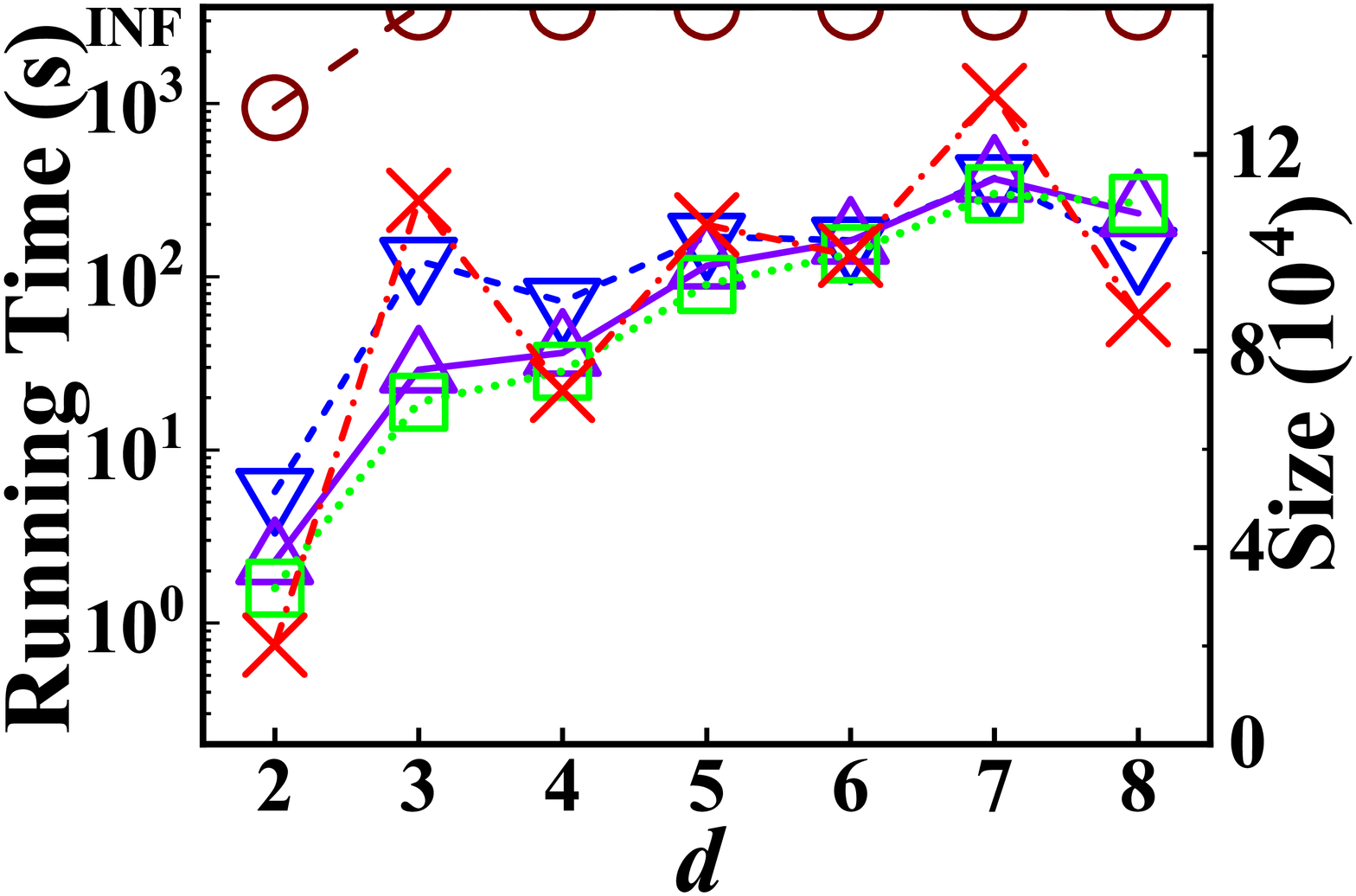}
		\label{fig:corr_d}
	}\hfill
	\subfigure[{IND, vary $l$}]{
		\includegraphics[width=0.186\linewidth]{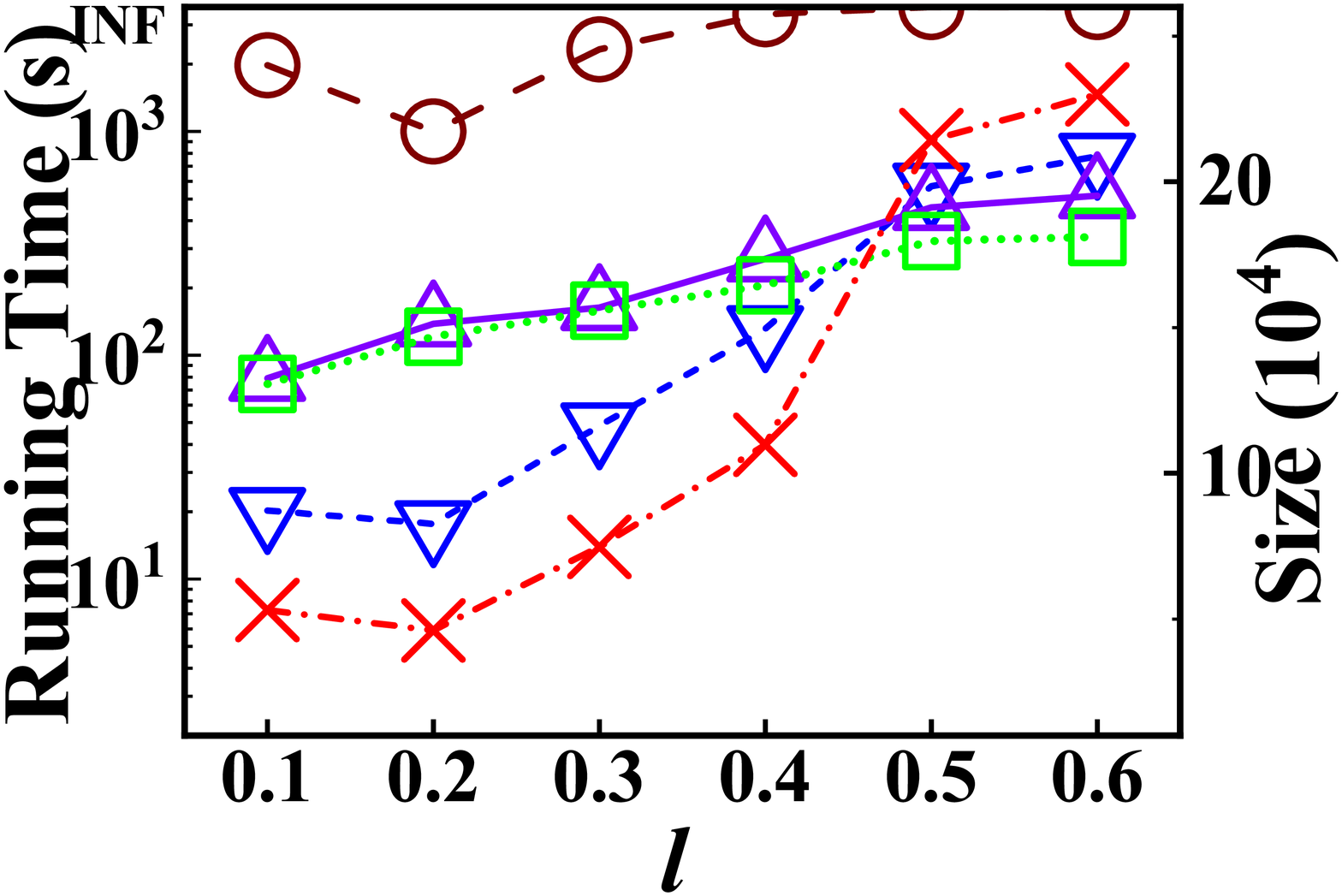}
		\label{fig:ind_l}
	}
	\subfigure[{ANTI, vary $l$}]{
		\includegraphics[width=0.186\linewidth]{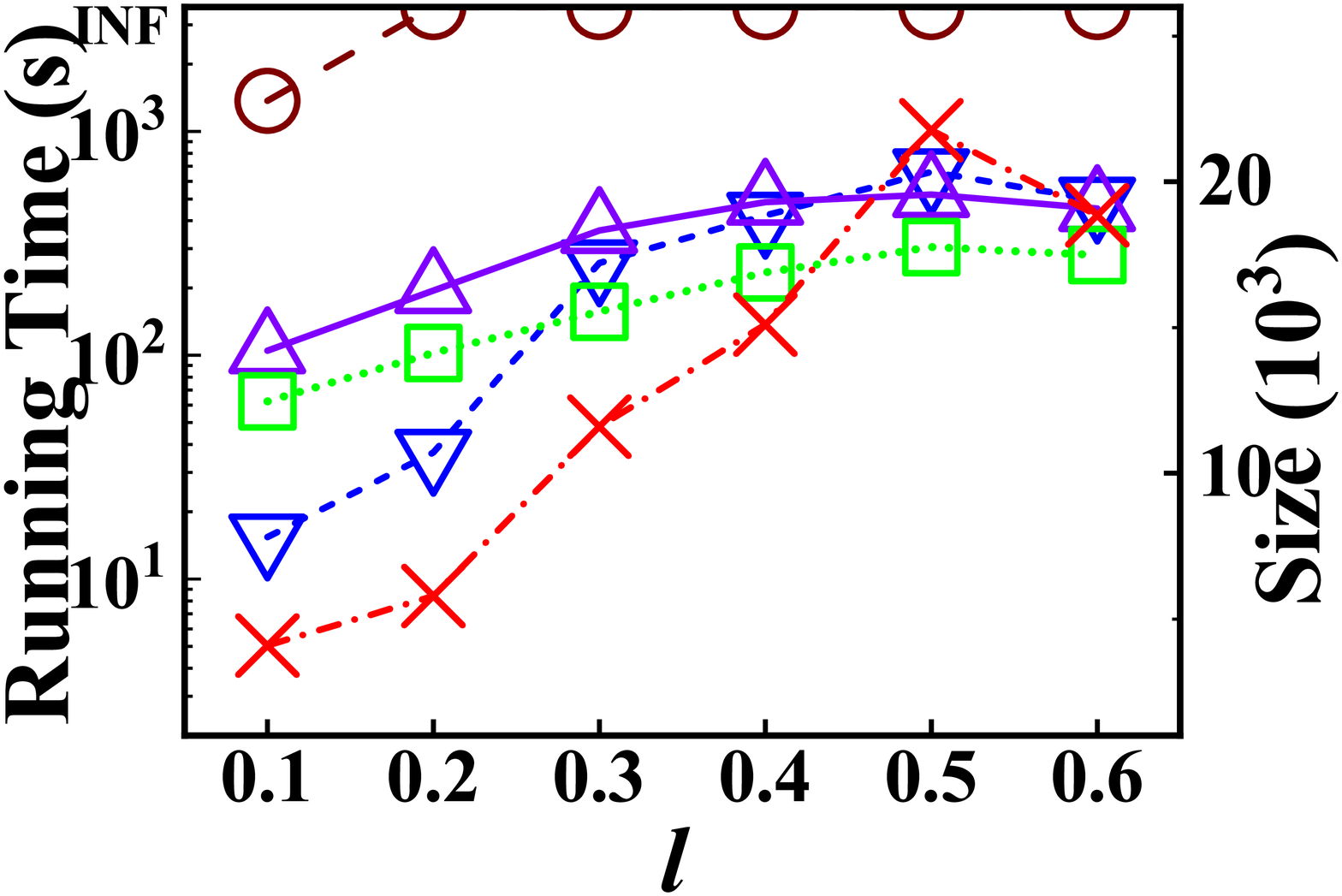}
		\label{fig:anti_l}
	}\hfill
	\subfigure[{CORR, vary $l$}]{
		\includegraphics[width=0.186\linewidth]{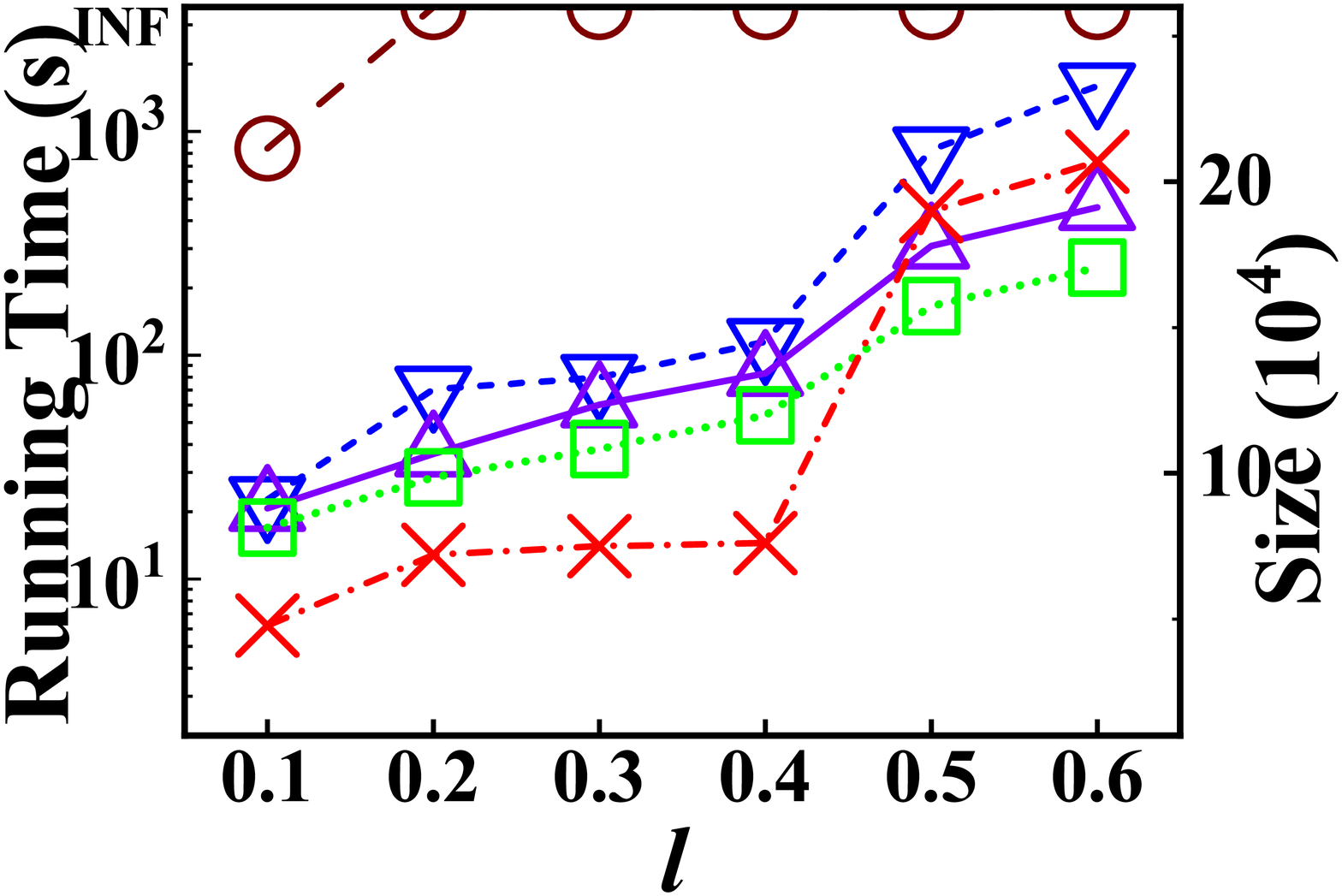}
		\label{fig:corr_l}
	}\hfill
	\subfigure[{IND, vary $\phi$}]{
		\includegraphics[width=0.186\linewidth]{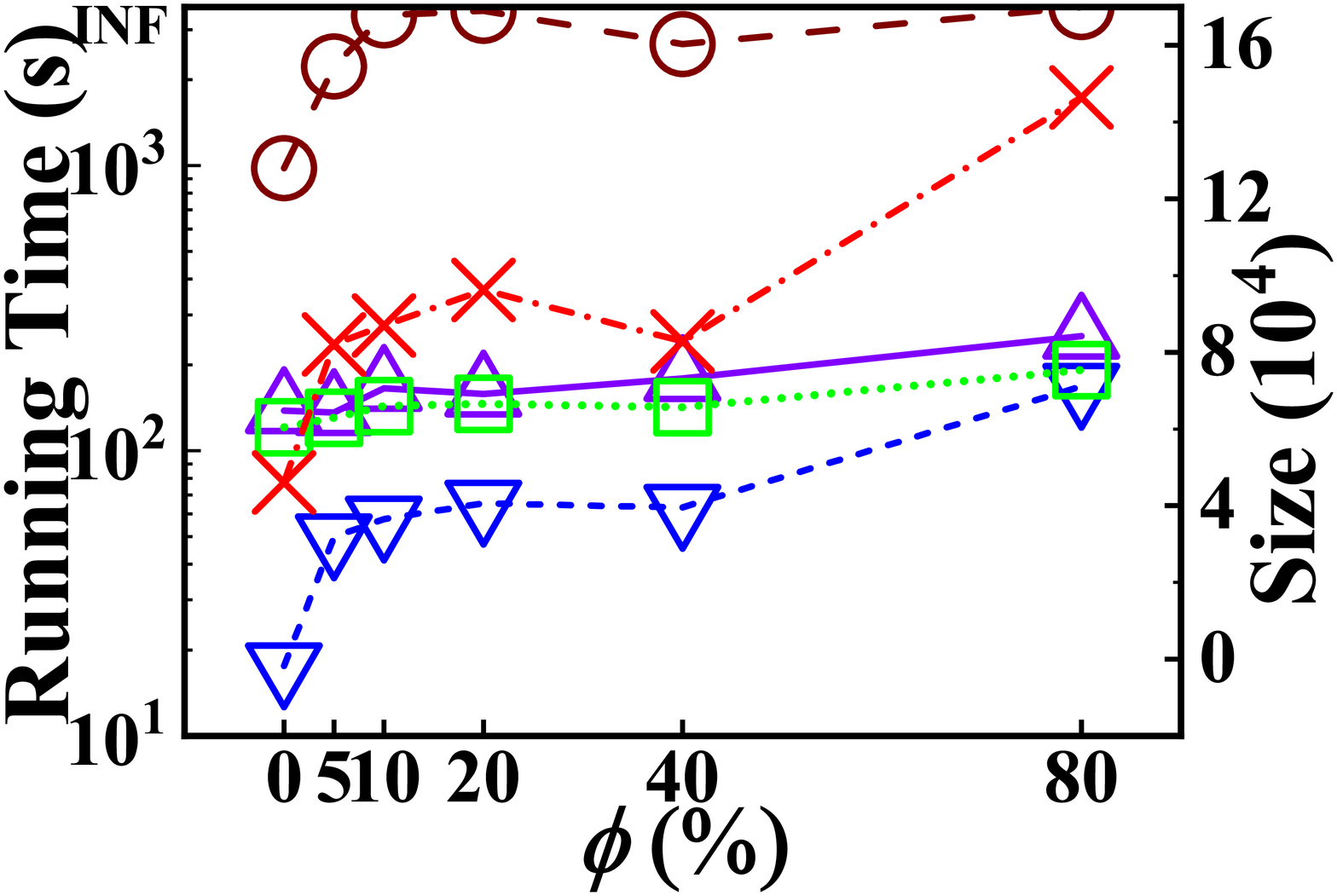}
		\label{fig:ind_phi}
	}\hfill
	\subfigure[{ANTI, vary $\phi$}]{
		\includegraphics[width=0.186\linewidth]{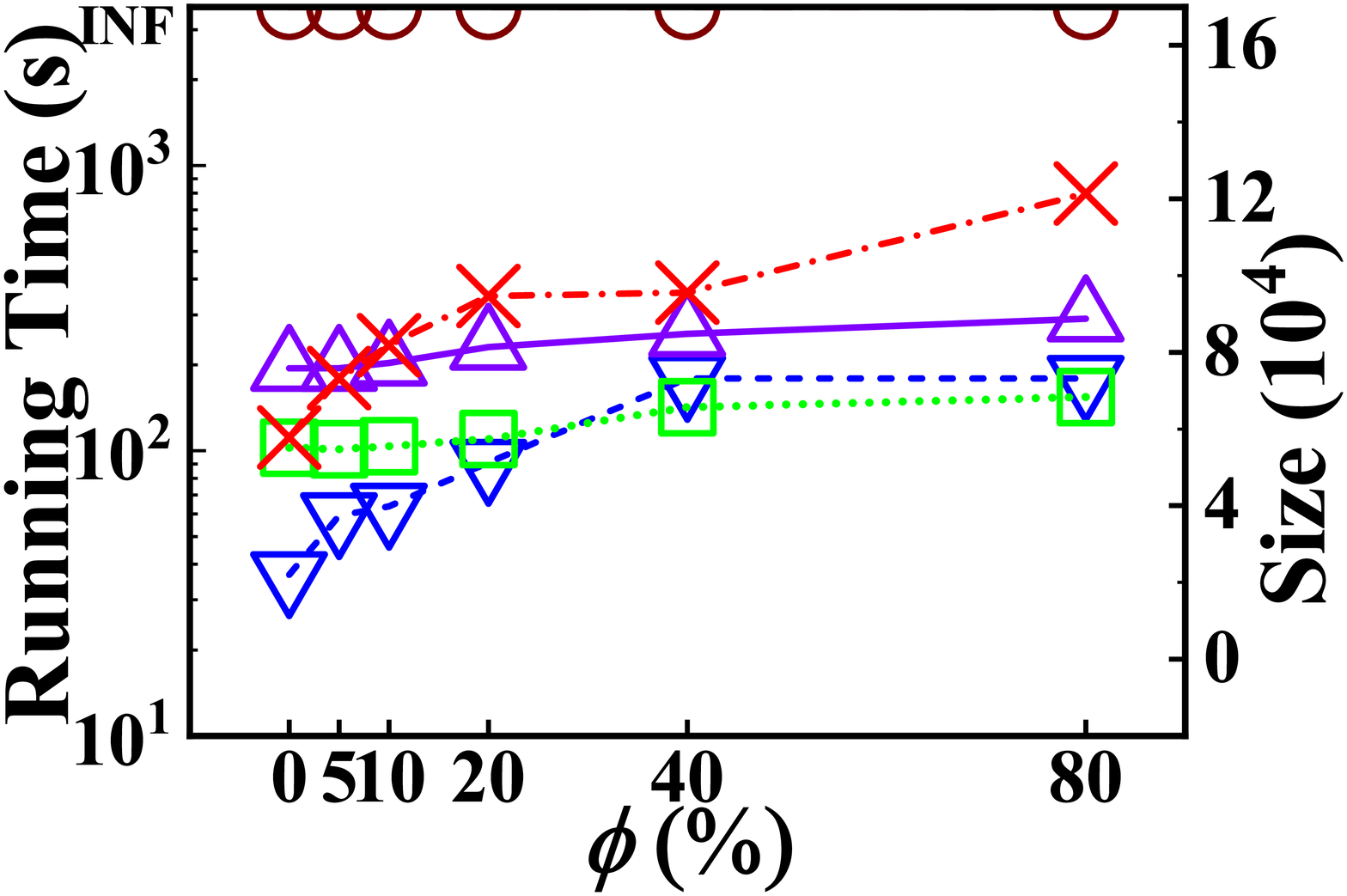}
		\label{fig:anti_phi}
	}\hfill
	\subfigure[{CORR, vary $\phi$}]{
		\includegraphics[width=0.186\linewidth]{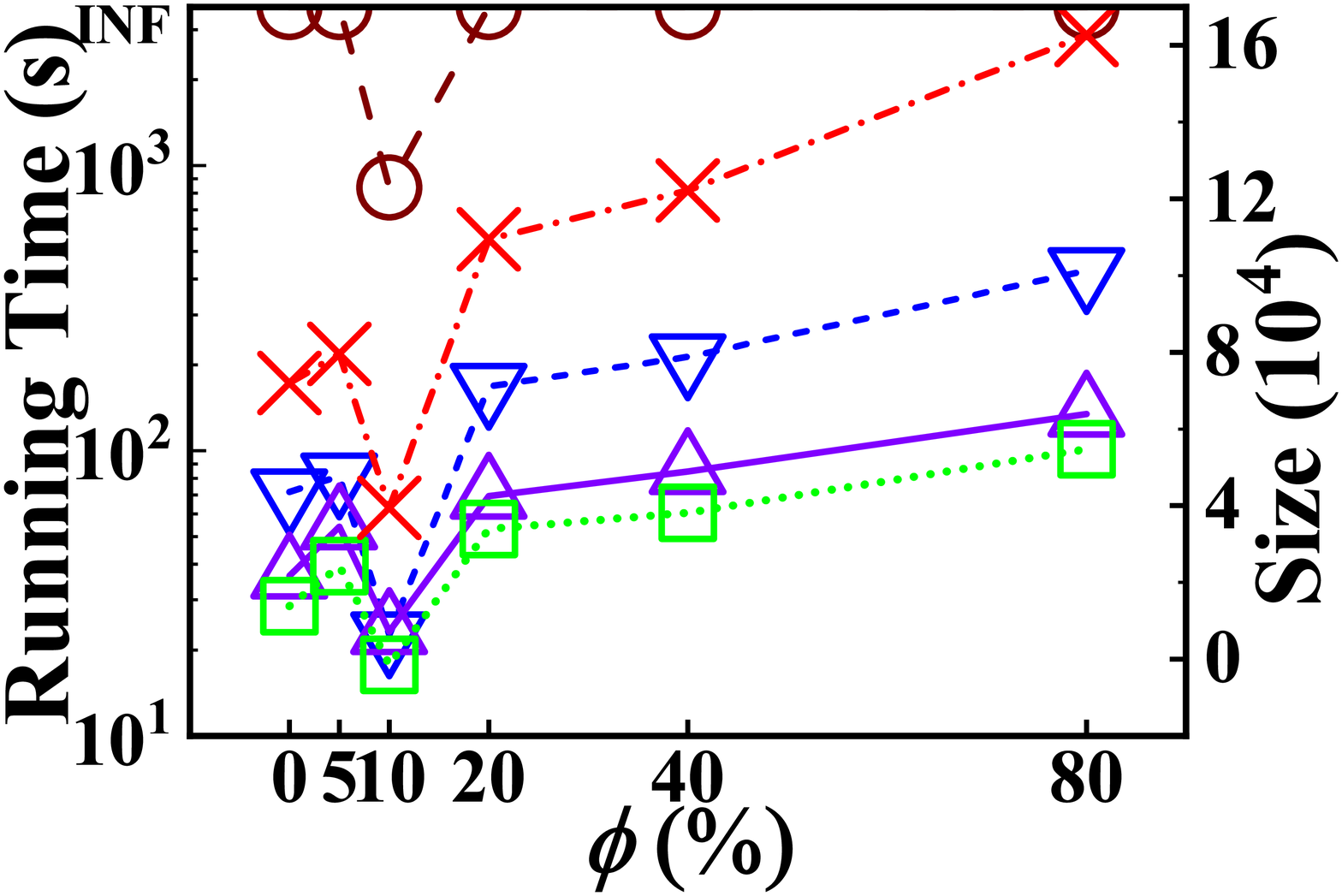}
		\label{fig:corr_phi}
	}
	\includegraphics[height=0.12in]{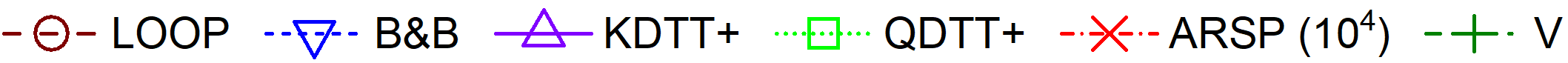}
	\\
	\subfigure[{IND ($d = 6$), vary $c$}]{
		\includegraphics[width=0.18\textwidth]{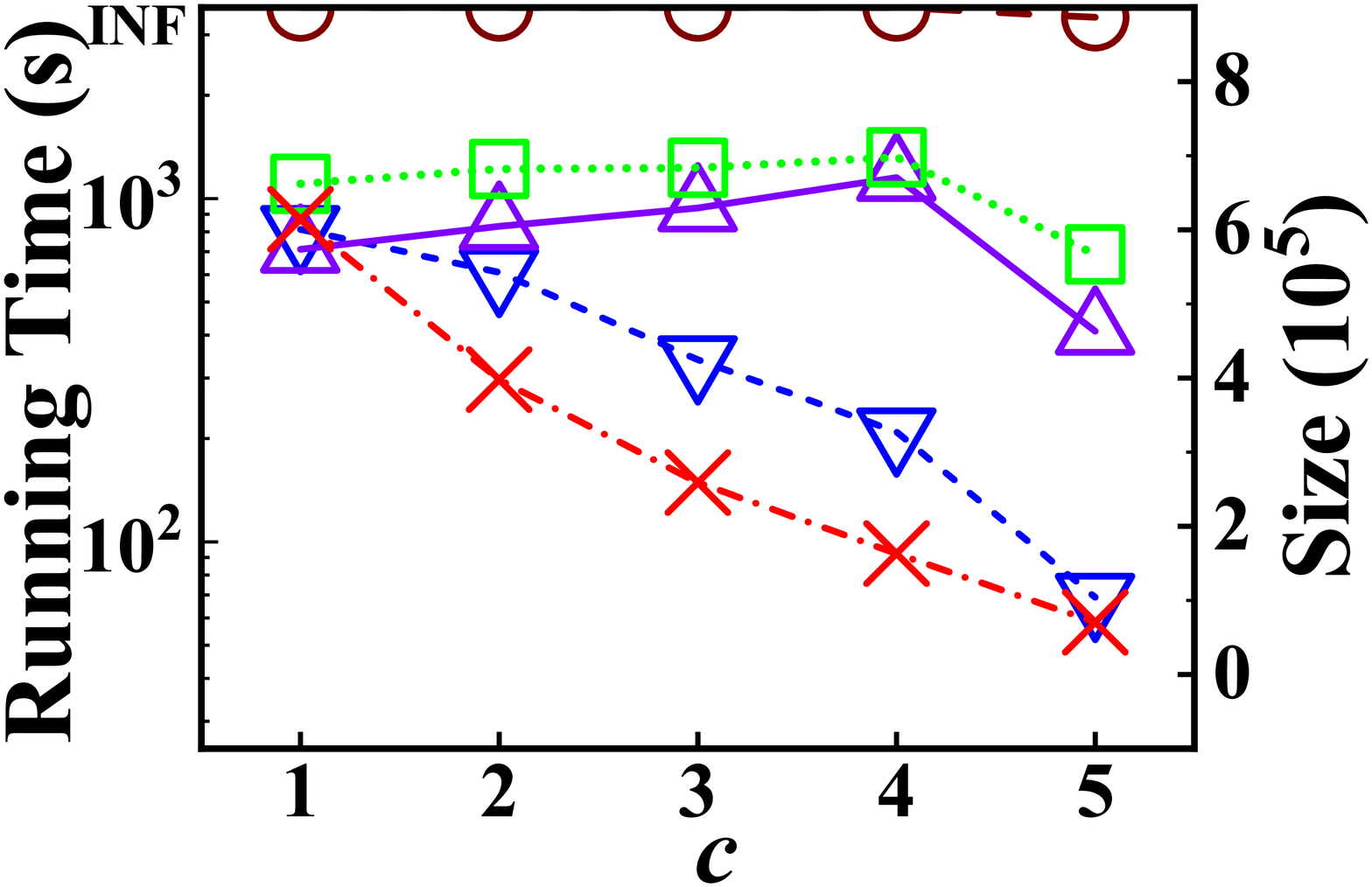}
		\label{fig:ind_wr_c}
	}\hfill
	\subfigure[{ANTI ($d = 6$), vary $c$}]{
		\includegraphics[width=0.18\textwidth]{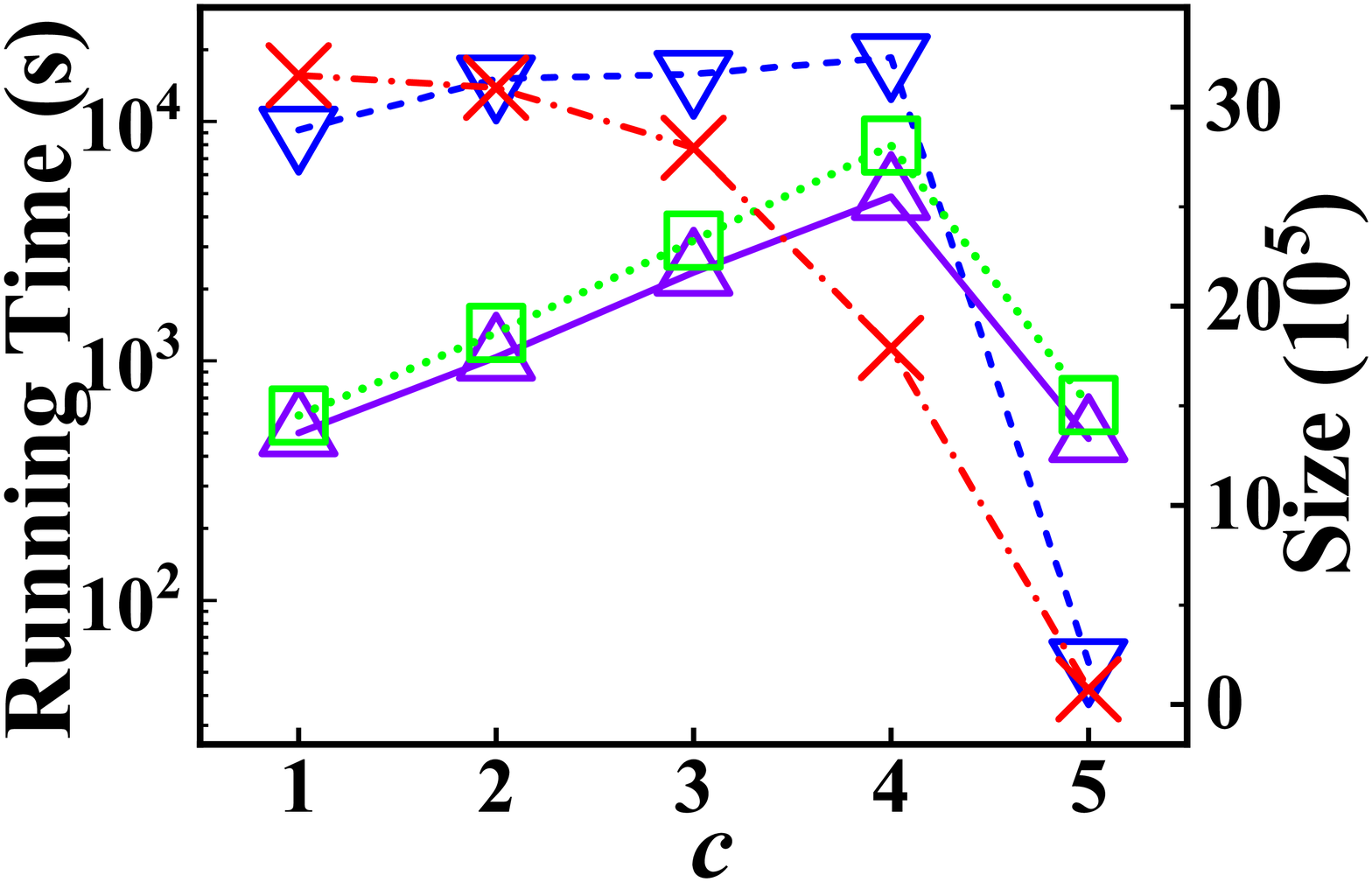}
		\label{fig:anti_wr_c}
	}\hfill
	\subfigure[{IND, IM, vary $m$}]{
		\includegraphics[width=0.18\textwidth]{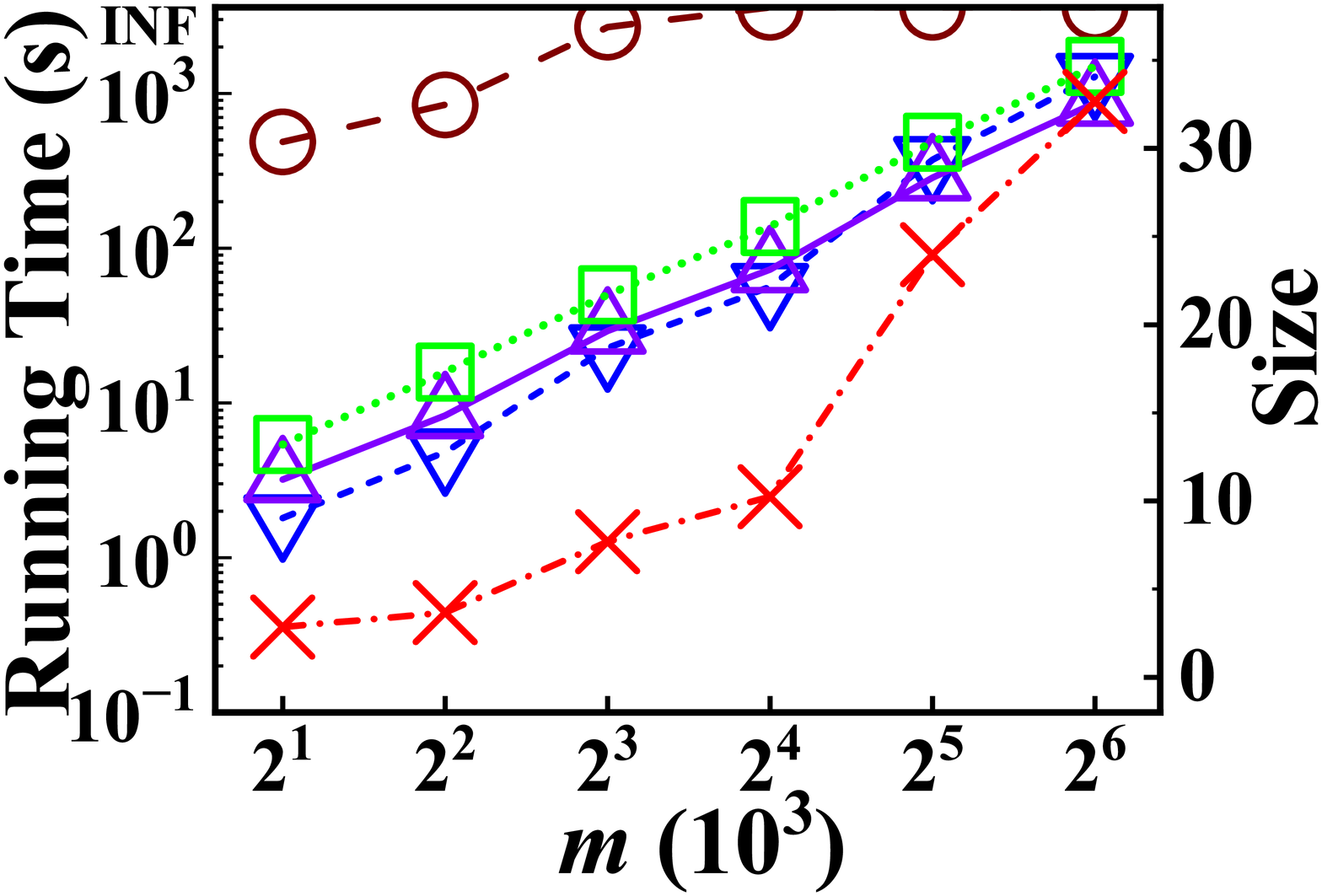}
		\label{fig:ind_im_m}
	}\hfill
	\subfigure[{IND, IM, vary $d$}]{
		\includegraphics[width=0.18\textwidth]{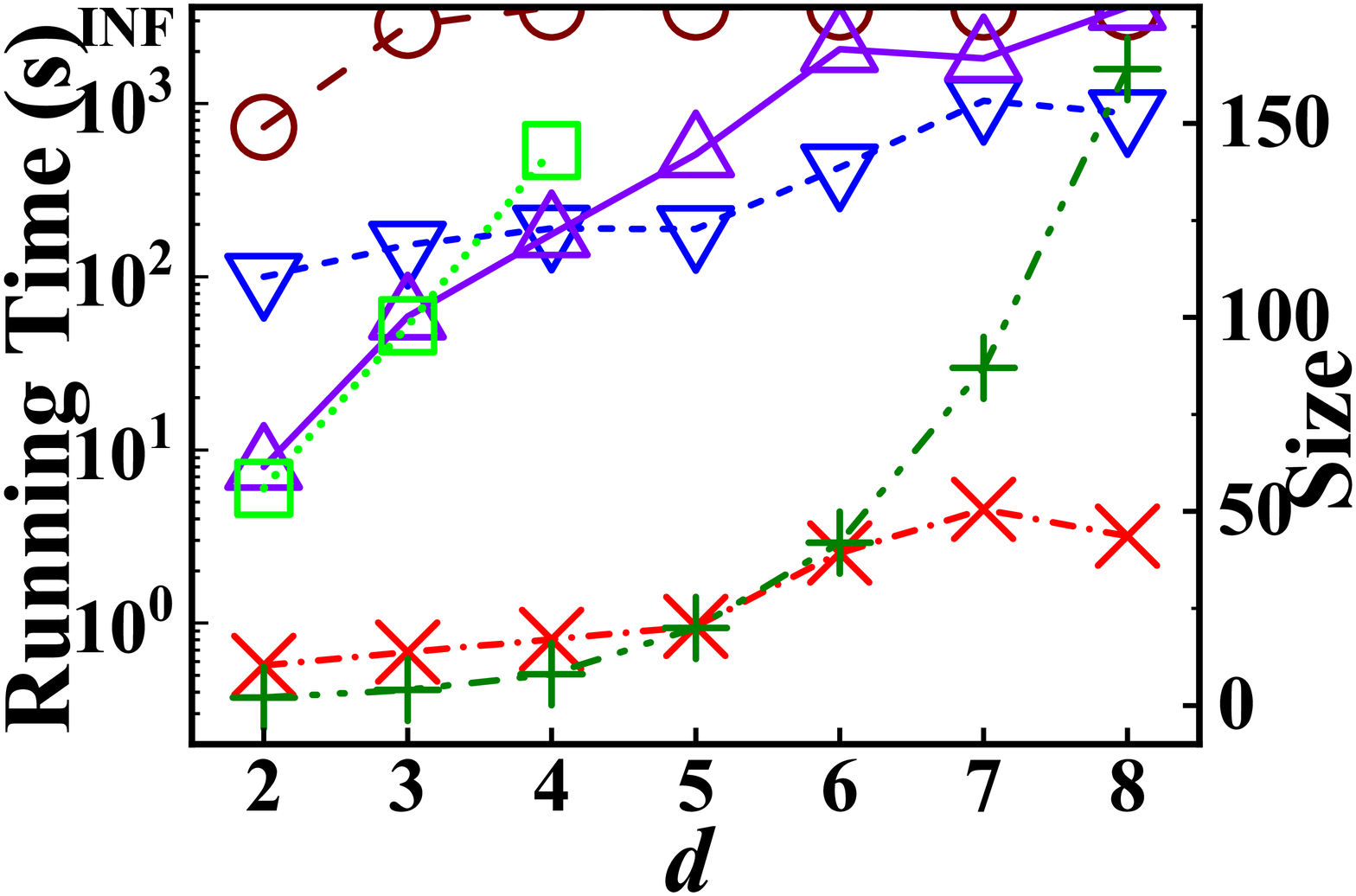}
		\label{fig:ind_im_d}
	}\hfill
	\subfigure[{IND, IM, vary $c$}]{
		\includegraphics[width=0.18\textwidth]{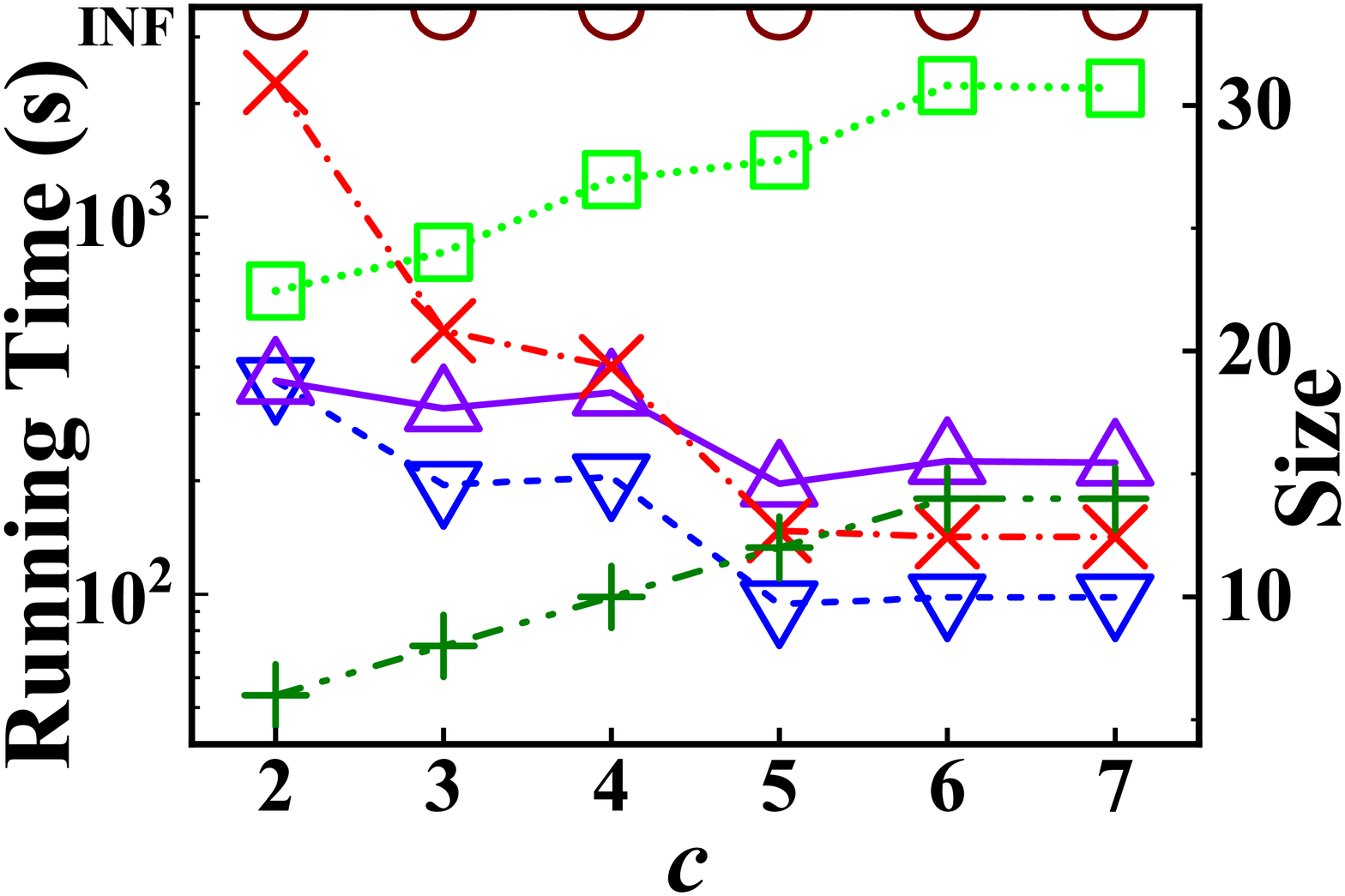}
		\label{fig:ind_im_c}
	}
	\vspace{-2mm}
	\caption{{Running time of different algorithms and the size of ARSP on synthetic datasets.}}
	\label{fig:synthetic-data}	\vspace{-2mm}
\end{figure*}

Fig.~\ref{fig:synthetic-data} and~\ref{fig:real-data} show the running time of different algorithms and the size of ARSP on real and synthetic datasets.
The size of ARSP is the number of instances with none-zero rskyline probabilities.
Following~\cite{DBLP:conf/vldb/PeiJLY07, DBLP:journals/tkde/KimIP12}, the default values for object cardinality $m$, instance count $cnt$, data dimensionality $d$, region length $l$, and percentage $\phi$ of objects with $\sum_{t \in T} p(t) < 1$ of synthetic datasets are set as $m = 16$K, $cnt = 400$,  $d = 4$, $l = 0.2$, and $\phi = 0$.
Unless otherwise stated, WR is used to generate $c = d - 1$ input linear constraints for $\calF$.
All datasets are index by R-trees in main memory.
Since the construction time of the R-tree is a one-time cost for all subsequent queries, it is not included in the query time.
And the query time limit (INF) is set as 3,600 seconds.

Fig.~\ref{fig:synthetic-data} (a)-(c) present the results on synthetic datasets with $m$ varying from $2$K to $64$K.
Based on the generation process, the number of instances $n$ increases as $m$ grows.
Thus, the running time of all algorithms and the size of ARSP increase.
Due to the exponential time complexity, ENUM never finishes within the limited time.
All proposed algorithms outperform LOOP by around an order of magnitude because LOOP always performs a large number of $\calF$-dominance tests and does not include any pruning strategy.
B\&B runs fastest on IND and ANTI with the help of the incremental mapping and pruning strategies.
As $m$ grows, the gap narrows because the more objects, the more aggregated R-trees are queried per instance.
KDTT+ and QDTT+ are more effective on CORR because pruning is triggered earlier by objects near the origin during space partitioning, \eg, when $m = 2$K, QDTT+ prunes 13 child nodes of the root node on CORR, compared with 9 on IND and 5 on ANTI.
Although with similar strategies, {QDTT+} performs better than {KDTT+}.
The reason is that space is recursively divided into $2^d$ regions in QDTT+, which results in a smaller MBR and thus a greater possibility of being pruned.
Results also demonstrate our optimization techniques significantly improve the experimental performance of {KDTT}.
As shown in	Fig.~\ref{fig:synthetic-data} (d)-(f), the relative performance of all algorithms remains basically unchanged with respect to $cnt$.
And the size of ARSP also increases as $cnt$ grows since the larger $cnt$, the more instances in $I$ and the less likelihood of an instance being $\calF$-dominated by all instance of an object.

\begin{figure*}[t]
	\centering
	\includegraphics[height=0.12in]{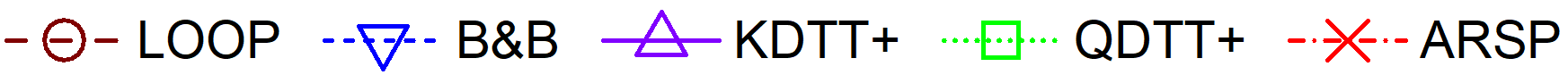}
	\\
	\subfigure[{IIP, vary $m$}]{
		\includegraphics[width=0.186\linewidth]{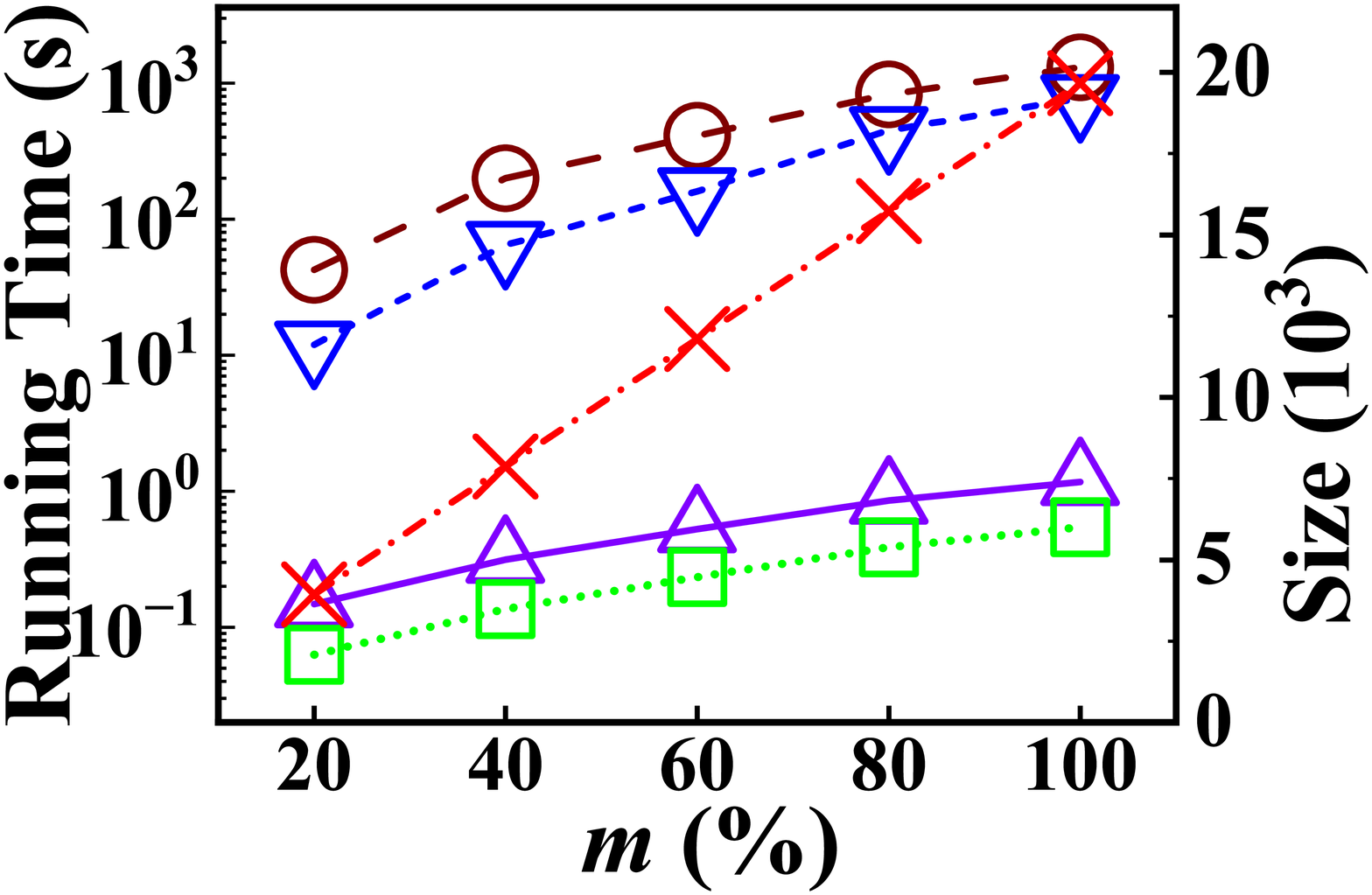}
		\label{fig:iip_m}
	}\hfill
	\subfigure[{CAR, vary $m$}]{
		\includegraphics[width=0.186\linewidth]{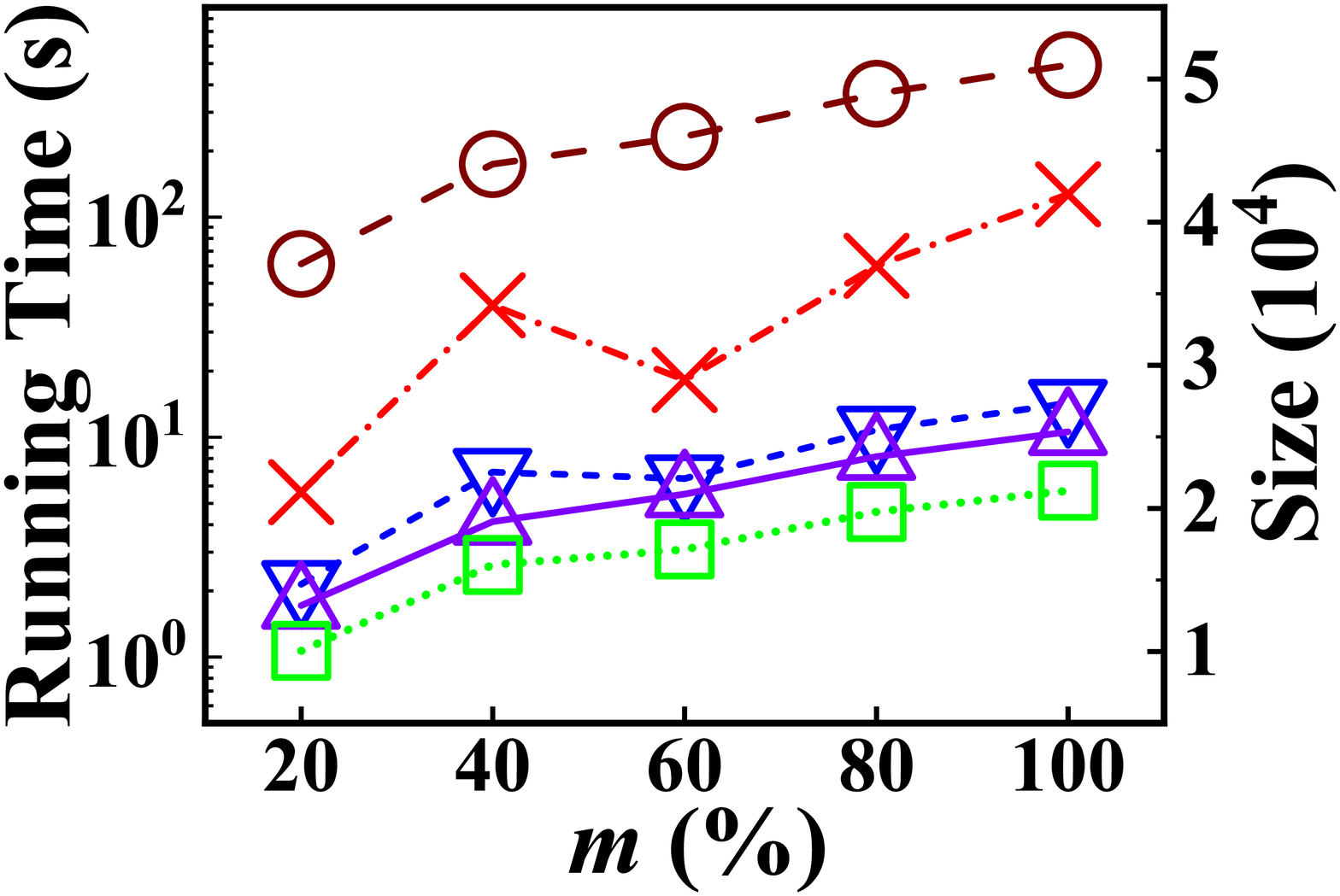}
		\label{fig:car_m}
	}\hfill
	\subfigure[{NBA, vary $m$}]{
		\includegraphics[width=0.186\linewidth]{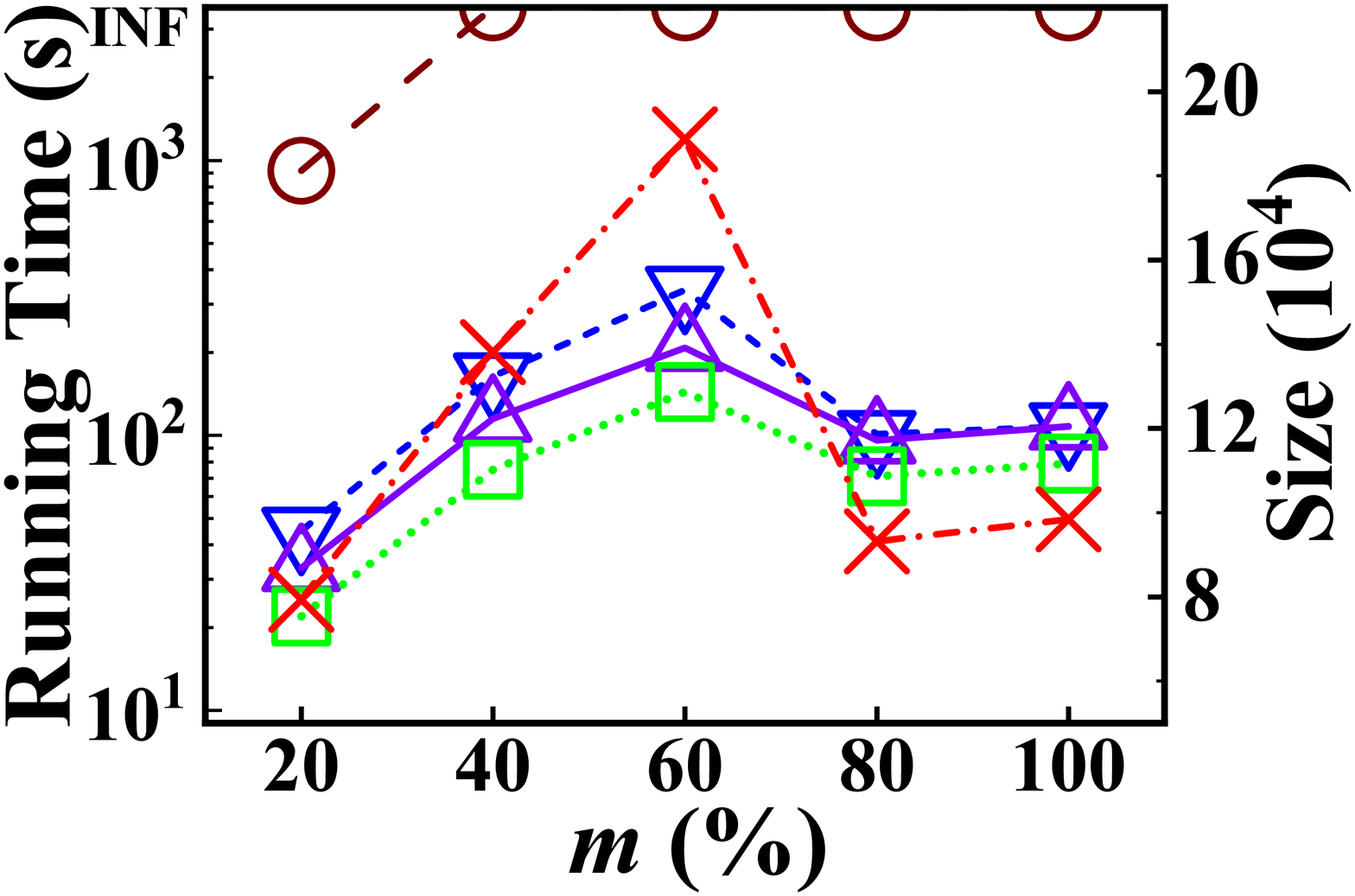}
		\label{fig:nba_m}
	}\hfill
	\subfigure[{NBA, vary $d$}]{
		\includegraphics[width=0.186\linewidth]{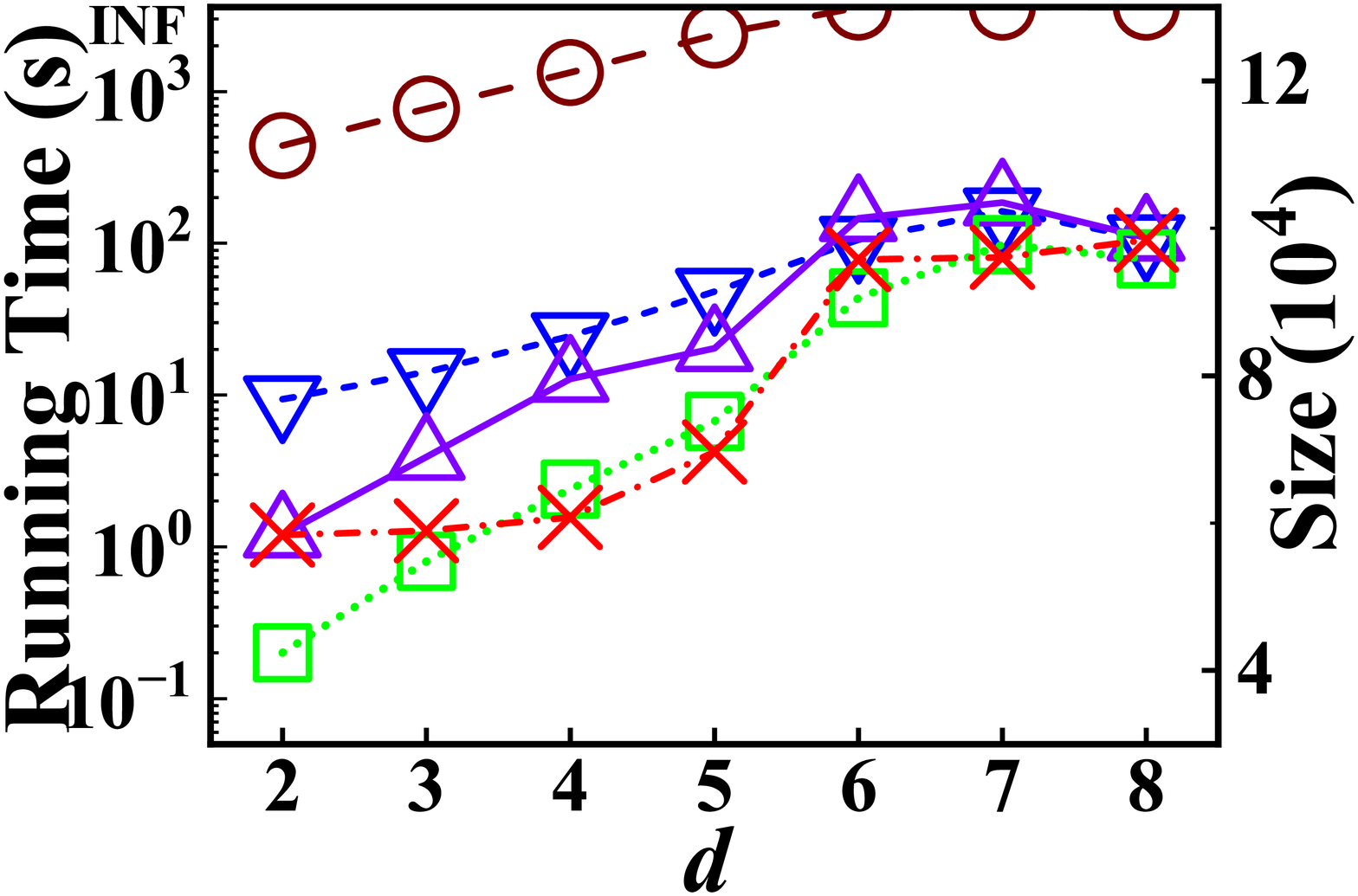}
		\label{fig:nba_d}
	}\hfill
	\subfigure[{NBA, vary $c$}]{
		\includegraphics[width=0.186\linewidth]{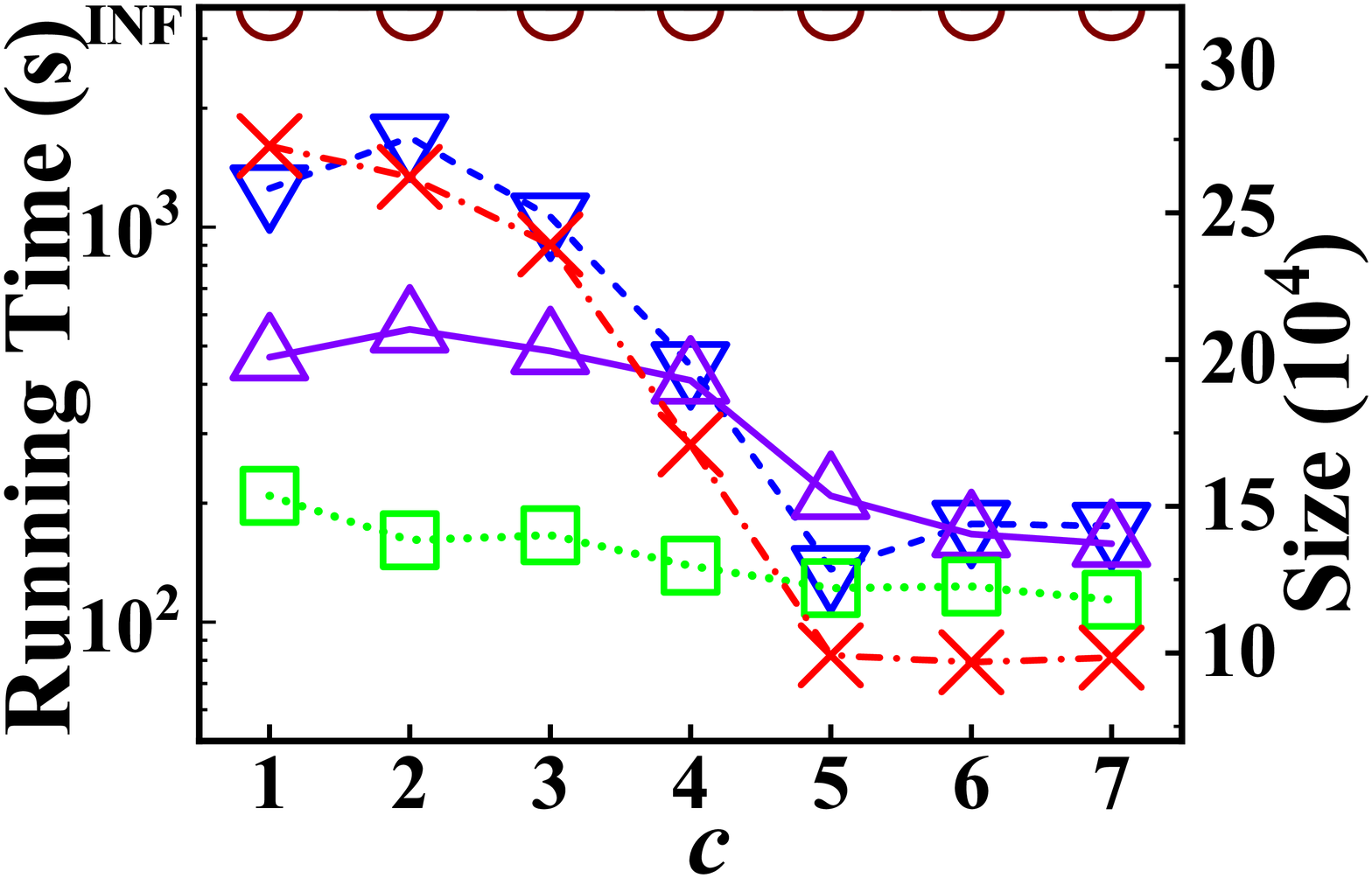}
		\label{fig:nba_c}
	}
	\vspace{-2mm}
	\caption{{Running time of different algorithms and the size of ARSP on real datasets.}}
	\label{fig:real-data}
	\vspace{-2mm}
\end{figure*}

Having established {ENUM} is inefficient to compute ARSP, henceforth it is excluded from the following experiments.
The curve of {KDTT} is also omitted as it is always outperformed by {KDTT+}.
Fig.~\ref{fig:synthetic-data} (g)-(i) plot the results on synthetic datasets with varying dimensionality $d$.
With the increase of $d$, the cost of $\calF$-dominance test increases.
Thus, the running time of all algorithms increases.
QDTT+ and KDTT+ are more efficient than B\&B on low-dimensional datasets, but their scalability is relatively poor.
This is because when $d$ grows, the dataset becomes sparser, causing the subtrees pruned during the preorder traversal get closer to leaf nodes in {KDTT+} and {QDTT+}.
Moreover, the exponential growth in the number of child nodes of {QDTT+} also causes its inefficiency on high-dimensional datasets.
When the dataset becomes sparser, an instance is more likely not to be $\calF$-dominated by others.
Therefore, the size of ARSP increases with higher dimensionality.

Fig.~\ref{fig:synthetic-data} (j)-(l) show the effect of $l$ by varying $l$ from 0.1 to 0.6.
As $l$ increases, the number of instances $\calF$-dominated by all instances of an object decreases.
Thus, the size of ARSP and the running time of all algorithms increase.
Compared to others, {B\&B} is more sensitive to $l$ since it determines not only the number of instances to be processed but also the time consumed in querying aggregated R-trees.

Fig.~\ref{fig:synthetic-data} (m)-(o) show the runtime of different algorithms and the size of ARSP on synthetic datasets with different $\phi$.
According to equation~\ref{eq:rskyprob-def}, the more objects $T$ with $\sum_{t \in T}p(t) < 1$, the less instances with zero rskyline probabilities.
Hence, the running time and the size of ARSP both increase as $\phi$ increases.
Similar to $l$, $\phi$ also affects B\&B greatly since the larger $\phi$, the fewer instances are added to the pruning set $P$.


Fig.~\ref{fig:synthetic-data} (p)-(q) plot the effect of $c$ on IND and ANTI ($d = 6$).
Results on CORR are omitted, in which the running time of all algorithms and the size of ARSP decrease as $c$ grows.
The reason is that the preference region narrows with the growth of $c$, which enhances the $\calF$-dominance ability of each instance.
Therefore, more instances are pruned during the computation.
Whereas, this also results in the need to perform more $\calF$-dominance tests to compute rskyline probabilities of unpruned instances.
The trends of the running time on IND and ANTI reflect the compromise of these two factors.
B\&B perform inconsistently as its pruning strategy is more effective on IND.

Fig.~\ref{fig:synthetic-data} (r)-(t) show the results under linear constraints generated by IM.
Running time of all algorithms and the size of ARSP show similar trends to WR under all parameters except $c$.
As stated above, the $\calF$-dominance ability of each instance improves with the growth of $c$.
Thus, as shown in Fig.~\ref{fig:ind_im_c}, the running time of all algorithms decreases, except QDTT+.
This is because the number of vertices of the preference region generated by IM increases as $c$ grows (see the curve of $V$), thus leading to the dimensional disaster of the quadtree.
This also accounts for the failure of QDTT+ when $d \ge 5$ in Fig.~\ref{fig:ind_im_d}.


Experimental results on real datasets confirm the above ob-servations.
Fig.~\ref{fig:real-data} (a) shows the results on IIP with varying $m$.
As introduced in the datasets' description, each records in IIP is treated as an uncertain object with one instance.
This means $\phi = 1$, \ie, every object $T$ in IIP satisfies $\sum_{t \in T} p(t) < 1$.
Thus, the size of ARSP is the number of input instances.
And B\&B almost degenerates into LOOP, since no instances are pruned and no computations are reused.
Fig.~\ref{fig:real-data} (b)-(c) show the results on CAR and NBA with different $m$.
It is noticed that attribute variance is pretty large in these two datasets, \eg, in NBA, about half of the players got zero points in some games but more than 20 points in other games.
Therefore, relative performance of algorithms are similar to synthetic datasets with large $l$.
This also holds for results on NBA with different $d$ and $c$ which are shown in Fig.~\ref{fig:real-data} (d)-(e).

\subsection{Experimental Results under Weight Ratio Constraints.}

Since the data structure stated in Theorem~\ref{thm:fast-point-location} is theoretical in nature, we introduce a specialized version of {DUAL-MS} for $d = 2$ to avoid this.	
Recall that for each instance $t$, we reduce the computation of $\Pr_{\rm rsky}(t)$ to $2^{d-1}$ half-space reporting problems in section~\ref{sec:reduction}. 
When $d = 2$, we notice that these two half-space queries can be reinterpreted as a continuous range query.
See Fig.~\ref{fig:specilized_version} for an illustration.
When processing $t_{2, 3}$,  we can regard $t_{2, 3}$ as the origin, ray $y = t_{2, 3}[2], x \ge t_{2, 3}[1]$ as the base.
Then each instance can be represented by an angle, \eg, $\theta = \pi + \arctan\frac{12 - 5}{9 - 6}$ for $t_{3,1}$.
In such case, the two query halfspaces $h_{t_{2,3}, 0} : t[2] \le -0.5t[1] + 16.5$ and $h_{t_{2,3},1} : t[2] \le -2t[1] + 30$ can be transformed to the range query $[\pi - \arctan\frac{1}{2}, 2\pi - \arctan2]$ with respect to angle.
After the transformation, we can simply use a binary search tree to organize the instances instead of the point location tree.
We give an implementation of this specialized {DUAL-MS} and evaluate its performance on {IIP} dataset.
As a comparison, we attach a simple preprocessing strategy to {KDTT+}, which removes all instances with zero skyline probability from $I$ in advance.
Fig.~\ref{fig:dual_ms_iip} shows the running time of these two algorithms.
It is noticed that although the query efficiency is improved, the huge preprocessing time and memory consumption prevents its application on big datasets.

\begin{figure}[ht]
	\centering
	\subfigure[Specialized version]{
		\includegraphics[width=.42\linewidth]{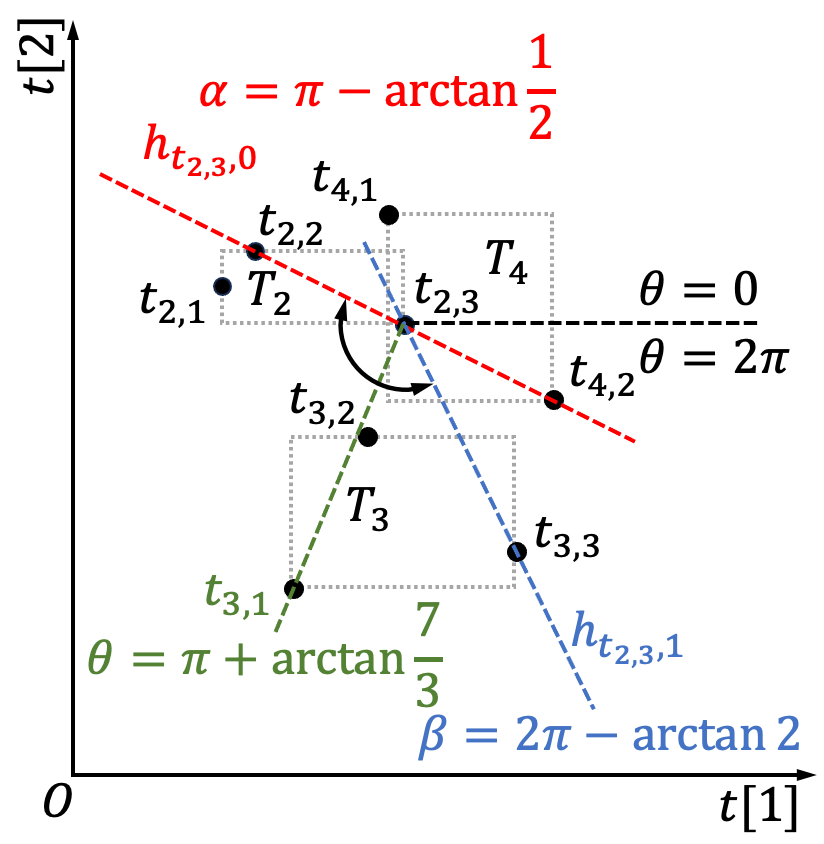}
		\label{fig:specilized_version}
	}\hfill
	\subfigure[Running time on IIP]{
		\includegraphics[width=.47\linewidth]{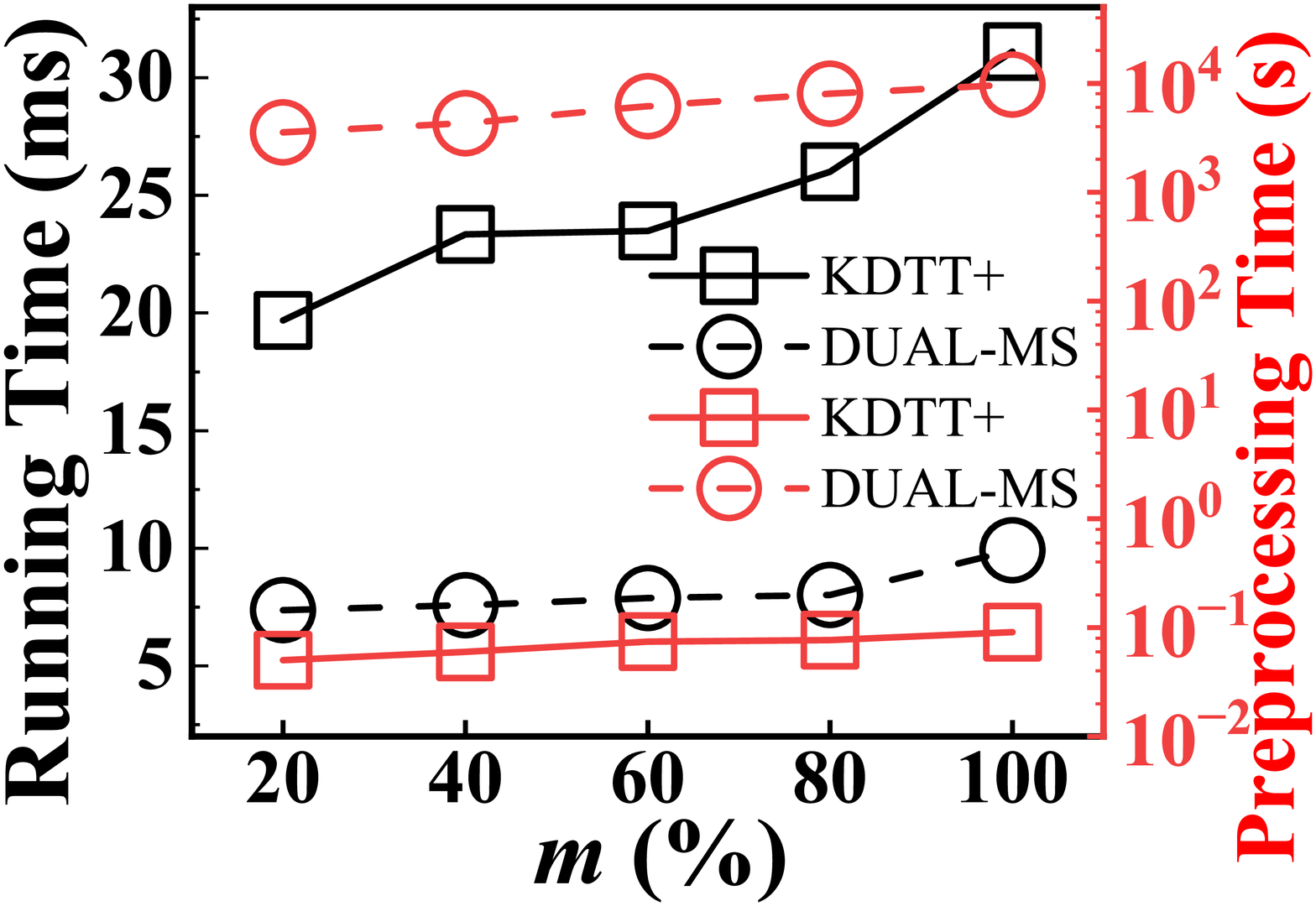}
		\label{fig:dual_ms_iip}
	}\vspace{-2mm}
	\caption{A specialized version of {DUAL-MS} for $d = 2$ and its running time on IIP dataset.}
	\vspace{-2mm}
\end{figure}

The above drawbacks of {DUAL-MS} are alleviated when it comes to process eclipse queries.
This is because eclipse is always a subset of skyline $S$, which has a logarithmic size in expectation.
Meanwhile, the multi-level strategy is no longer needed since for each object $t \in S$, $t$ belongs to the eclipse of $D$ iff all point location queries on $S^*_{t, k}$ ($0 \le k < 2^{d-1}$) return emptiness.
We implement the {DUAL-S} for eclipse query processing, in which we use a $kd$-tree to index the dataset constructed by performing shifted strategy.
For comparison, we also implement the state-of-the-art index-based algorithm {QUAD}~\cite{DBLP:conf/icde/Liu0ZP021} in C++ and compare their efficiency and scalability with respect to data cardinality $n$, data dimensionality $d$, and ratio range $q$.
Similar to~\cite{DBLP:conf/icde/Liu0ZP021}, the defaulted value is set as $n = 2^{14}$, $d = 3$, and $q = [0.36, 2.75]$.

\begin{figure}[ht]
	\centering
	\subfigure[Effect of $n$]{
		\includegraphics[width=.3\linewidth]{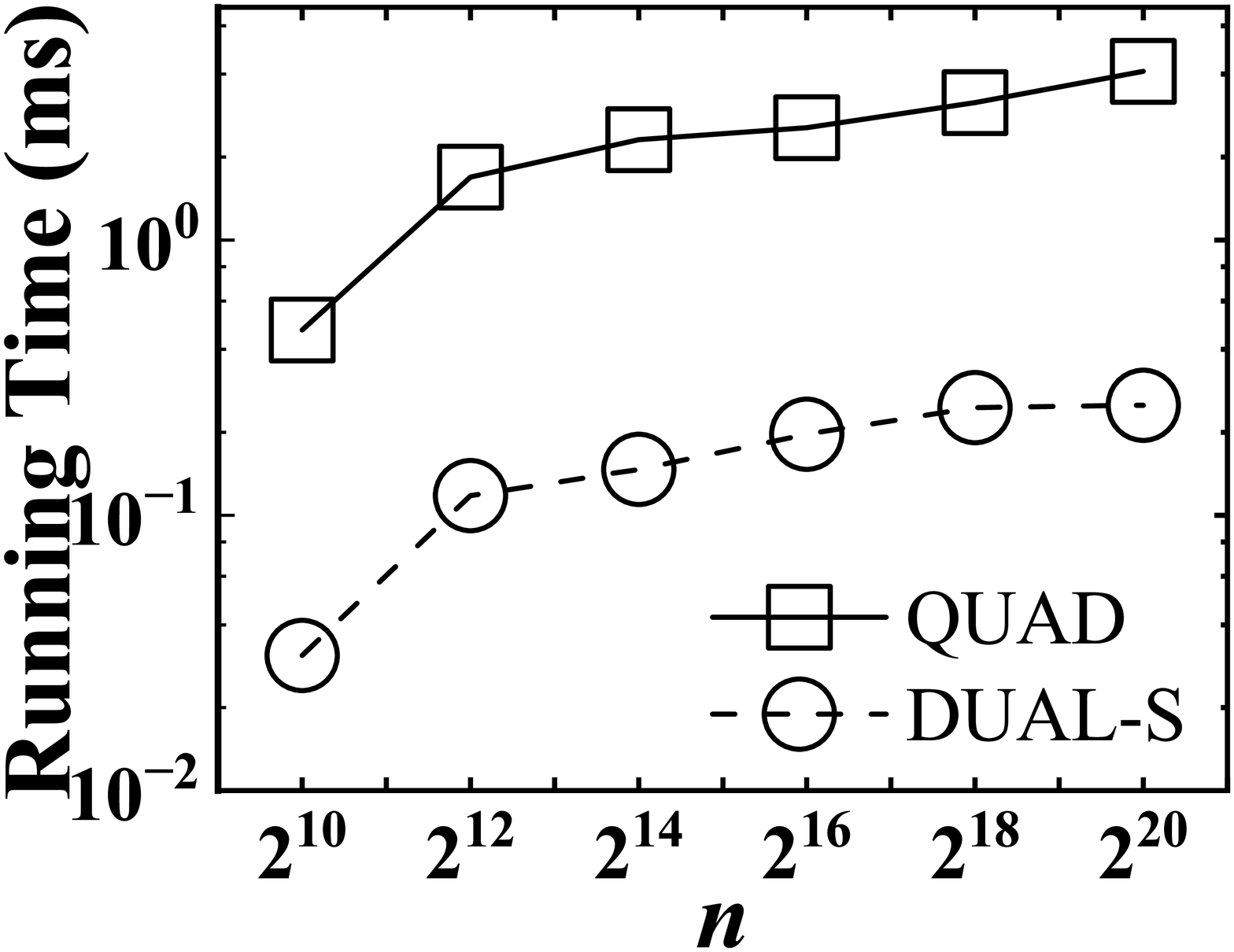}
		\label{fig:eclipse_n}
	}\hspace{-1ex}
	\subfigure[Effect of $d$]{
		\includegraphics[width=.3\linewidth]{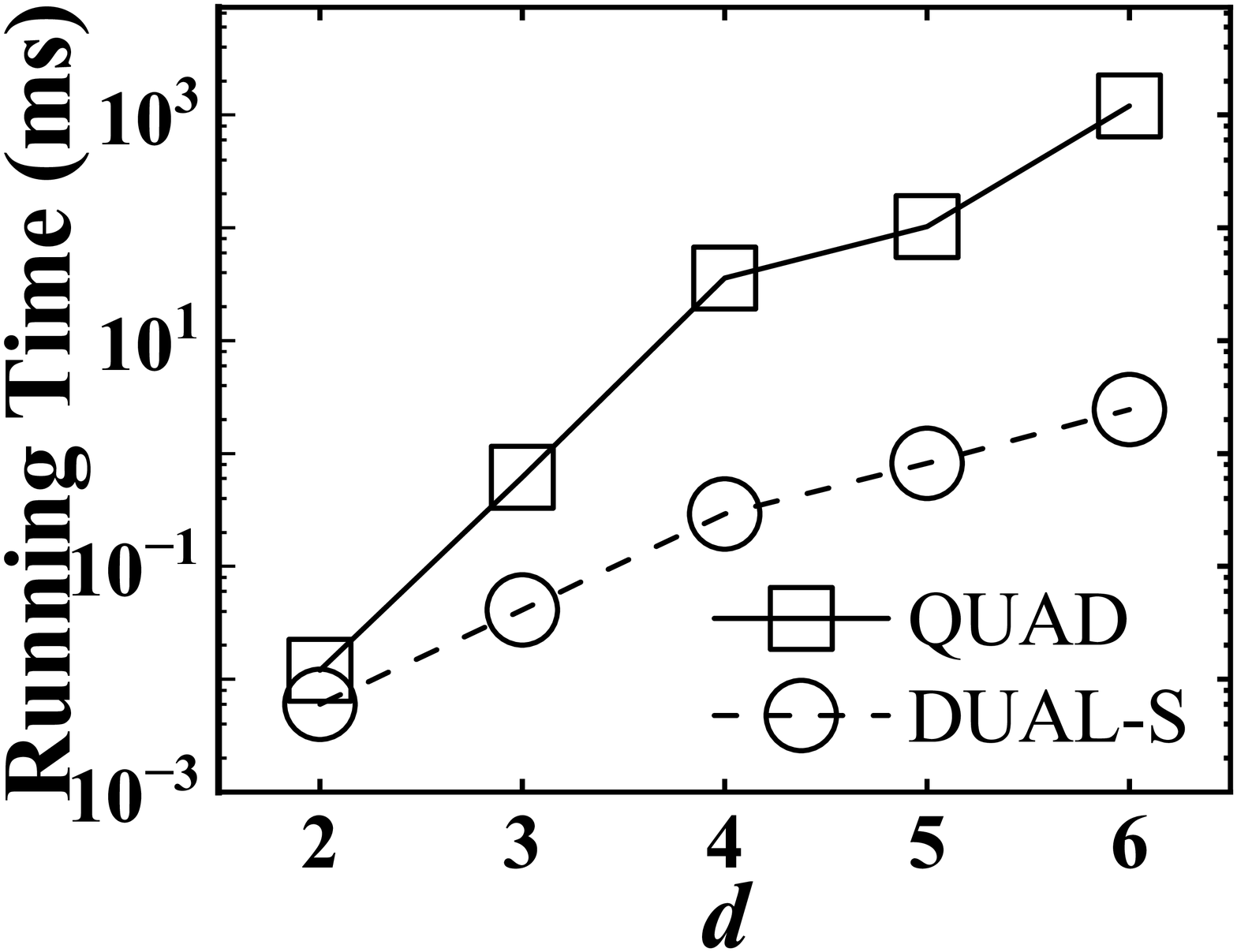}
		\label{fig:eclipse_dim}
	}\hspace{-1ex}
	\subfigure[Effect of $q$]{
		\includegraphics[width=.3\linewidth]{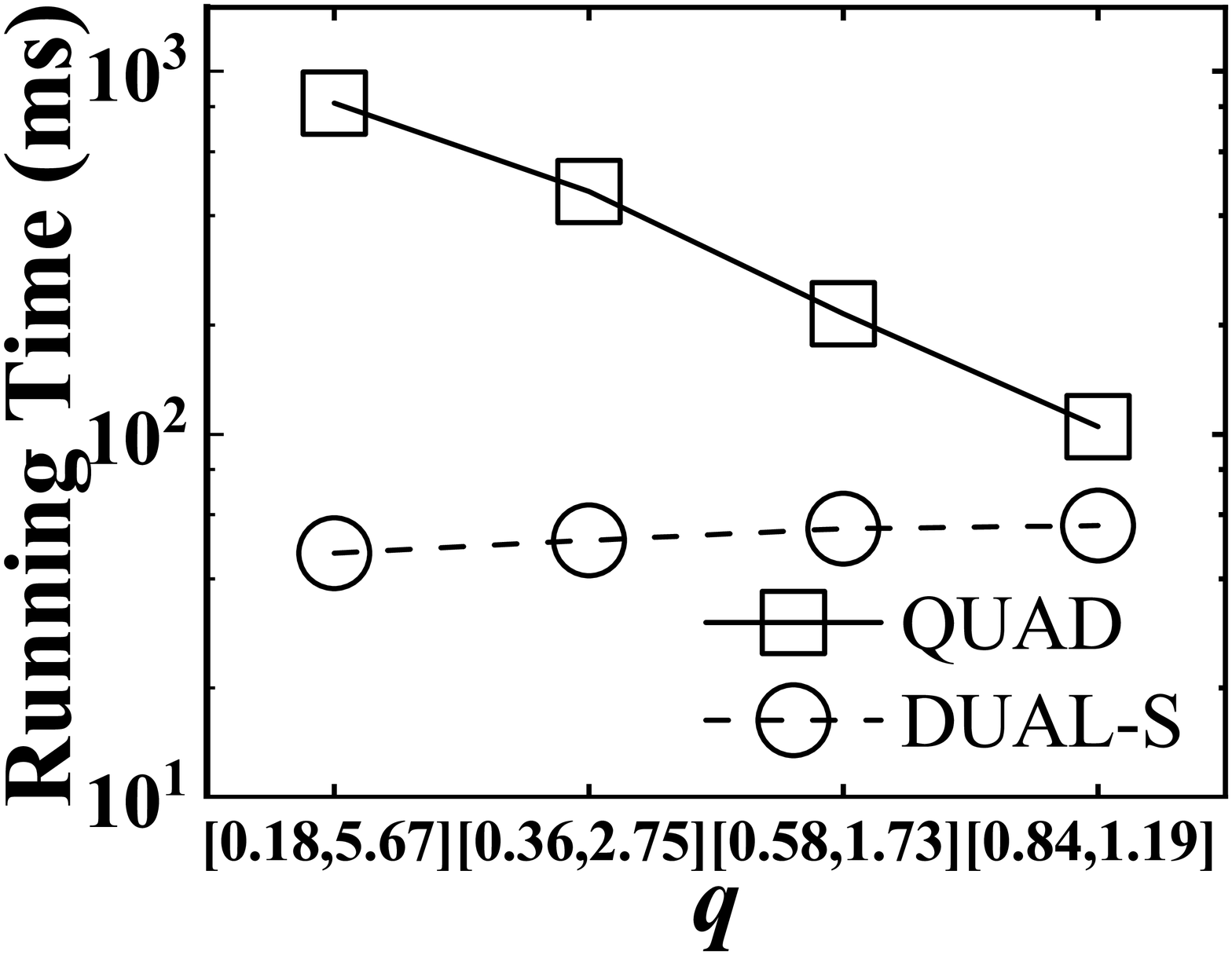}
		\label{fig:eclipse_q}
	}\vspace{-2mm}
	\caption{Running time for eclipse query ({IND}).}
	\label{fig:eclipse}
	\vspace{-2mm}
\end{figure}

As shown in Fig.~\ref{fig:eclipse} (a)-(b), the running time of these two algorithms increases as $n$ and $d$ grows.
{DUAL-S} outperforms {QUAD} by at least an order of magnitude and even more on high-dimensional datasets.
The reason is that {QUAD} needs to iterate over the set of hyperplanes returned by the window query performed on its Intersection Index, and then reports all objects with zero order vector as the final result.
This takes $O(s^2)$ time, where $s$ is the skyline size of the dataset.
But {DUAL-S} excludes an object from the result if there is a query returns non-empty result, which only take $O(s)$ time.
Moreover, the hyperplane quadtree adopted in {QUAD} scales poorly with respect to $d$ for the following two reasons.
On the one hand, the tree index has a large fan-out since it splits all dimensions at each internal node.
On the other hand, the number of intersection hyperplanes of a node decreases slightly relative to its parent, especially on high-dimensional datasets, which results in an unacceptable tree height.
Moreover, as shown in Fig.~\ref{fig:eclipse_q}, {QUAD} is more sensitive to the ratio range than {DUAL-S} because the number of hyperplanes returned by the window query actually determines the running time.

\section{Related Work}\label{sec:relatedwork}

In this section, we elaborate on two pieces of previous work that are most related to ours.

\noindent{\bf Queries on uncertain datasets.}
Pei \etal first studied how to conduct skyline queries on uncertain datasets~\cite{DBLP:conf/vldb/PeiJLY07}.
They proposed two algorithms to identify objects whose skyline probabilities are higher than a threshold $p$.
Considering inherent limitations of threshold queries, Atallah and Qi first addressed the problem of computing skyline probabilities of all objects~\cite{DBLP:conf/pods/AtallahQ09}.
They proposed a $\tilde{O}(n^{2-1/(d+1)})$-time algorithm by using two basic all skyline probabilities computation methods, weighted dominance counting method and grid method, to deal with frequent and infrequent objects, respectively. 
With a more efficient sweeping method for infrequent objects, Atallah~\etal improved the time complexity to $\tilde{O}(n^{2-1/d})$~\cite{DBLP:journals/tods/AtallahQY11}.
However, the utilities of these two algorithms are limited to 2D datasets because of a hidden factor exponential in the dimensionality of the dataset, which came from the high dimensional weighted dominance counting algorithm.
To get rid of this, Afshani~\etal calculated skyline probabilities of all instances by performing a pre-order traversal of a modified KD-tree~\cite{DBLP:journals/mst/AfshaniAALP13}.
With the well-know property of the KD-tree, it is proved that the time complexity of their algorithm is $O(n^{2-1/d})$.
More practically, Kim \etal introduced an in-memory Z-tree structure in all skyline probabilities computation to reduce the number of dominance tests, which has been experimentally demonstrated to be efficient~\cite{DBLP:journals/tkde/KimIP12}.
However, it is non-trivial to revise these algorithms for computing all skyline probabilities to address the problem studied in this paper.
This is because all of them rely on the fact that the dominance region of an instance is a hyper-rectangle, which no longer holds under $\calF$-dominance.

Somehow related to what we study in this paper are those works on top-$k$ queries on uncertain datasets~\cite{DBLP:conf/icde/SolimanIC07, DBLP:journals/dpd/WangSY16, DBLP:conf/icde/HuaPZL08, DBLP:conf/icde/YiLKS08, DBLP:conf/sigmod/GeZM09}.
Under the possible world model, top-$k$ semantics are unclear, which give rise to different definitions, \eg, to compute the most likely top-$k$ set, the object with high probability to rank $i$-th, the objects having a probability greater than a specified threshold to be included in top-$k$, \etc
Our work differs from theirs as an exactly input weight is required in these studies, whereas we focus on finding a set of non-$\calF$-dominated objects where $\calF$ is a set of user-specified scoring functions.
In other word, our work can be regarded as extending theirs by relaxing the input preference into a region.

\noindent{\bf Operators with restricted preference.}
Given a set of monotone scoring functions $\calF$, Ciaccia and Martinenghi defined $\calF$-dominance and  introduced two restricted skyline operators, \textsc{ND} for retrieving the set of non-$\calF$-dominated objects and \textsc{PO} for finding the set of objects that are optimal according to at least one function in $\calF$.
And they designed several linear programming based algorithms for these two queries, respectively.
Mouratidis and Tang extended \textsc{PO} under top-$k$ semantic when $\calF$ is a convex preference polytope $\Omega$, \ie, they studied the problem of identifying all objects that appear in the top-$k$ result for at least one $\omega \in \Omega$~\cite{DBLP:journals/pvldb/MouratidisT18}.
Liu \etal investigated a case of $\calF$-dominance where $\calF$ consists of $d-1$ constraints on the weight ratio of other dimensions to the user-specified reference dimension~\cite{DBLP:conf/icde/Liu0ZP021}.
They defined eclipse query as retrieving the set of all non-eclipse-dominated objects and proposed a series of algorithms.
These works only consider datasets without uncertainty, and we extend above dominance-based operators to uncertain datasets.
Their techniques can not be applied to our problem since the introduction of uncertainty makes the problem challenging as for each instance, we now need to identify all instances that $\calF$-dominate it.

\section{Conclusions}\label{sec:conclusions}

In this paper, we study the problem of computing ARSP to aid multi-criteria decision making on uncertain datasets.
We first prove that no algorithm can compute ARSP in truly subquadratic time without preprocessing, unless the orthogonal vectors conjecture fails.
Then, we propose two efficient algorithms to compute ARSP when $\calF$ is a set of linear scoring functions whose weights are described by linear constraints.
We use preprocessing techniques to further improve the query time under weight ratio constraints.
Our thorough experiments over real and synthetic datasets demonstrate the effectiveness of ARSP and the efficiency of our proposed algorithms.
For future directions, there are two possible ways.
On the one hand, conducting rskyline analysis on datasets with continuous uncertainty remains open, where it becomes expensive to make the integral for computing the dominance probability.
On the other hand, it is still worthwhile to investigate concrete lower bounds of the ARSP problem under some specific dimensions.

\bibliographystyle{IEEEtran}
\bibliography{ARSP_ref.bib}

\begin{thebibliography}{10}
\providecommand{\url}[1]{#1}
\csname url@samestyle\endcsname
\providecommand{\newblock}{\relax}
\providecommand{\bibinfo}[2]{#2}
\providecommand{\BIBentrySTDinterwordspacing}{\spaceskip=0pt\relax}
\providecommand{\BIBentryALTinterwordstretchfactor}{4}
\providecommand{\BIBentryALTinterwordspacing}{\spaceskip=\fontdimen2\font plus
\BIBentryALTinterwordstretchfactor\fontdimen3\font minus
  \fontdimen4\font\relax}
\providecommand{\BIBforeignlanguage}[2]{{%
\expandafter\ifx\csname l@#1\endcsname\relax
\typeout{** WARNING: IEEEtran.bst: No hyphenation pattern has been}%
\typeout{** loaded for the language `#1'. Using the pattern for}%
\typeout{** the default language instead.}%
\else
\language=\csname l@#1\endcsname
\fi
#2}}
\providecommand{\BIBdecl}{\relax}
\BIBdecl

\bibitem{DBLP:journals/pvldb/CiacciaM17}
\BIBentryALTinterwordspacing
P.~Ciaccia and D.~Martinenghi, ``Reconciling skyline and ranking queries,''
  \emph{Proc. {VLDB} Endow.}, vol.~10, no.~11, pp. 1454--1465, 2017. [Online].
  Available: \url{http://www.vldb.org/pvldb/vol10/p1454-martinenghi.pdf}
\BIBentrySTDinterwordspacing

\bibitem{DBLP:conf/icde/Liu0ZP021}
\BIBentryALTinterwordspacing
J.~Liu, L.~Xiong, Q.~Zhang, J.~Pei, and J.~Luo, ``Eclipse: Generalizing knn and
  skyline,'' in \emph{37th {IEEE} International Conference on Data Engineering,
  {ICDE} 2021, Chania, Greece, April 19-22, 2021}.\hskip 1em plus 0.5em minus
  0.4em\relax {IEEE}, 2021, pp. 972--983. [Online]. Available:
  \url{https://doi.org/10.1109/ICDE51399.2021.00089}
\BIBentrySTDinterwordspacing

\bibitem{DBLP:journals/fcsc/LiWLG20}
\BIBentryALTinterwordspacing
L.~Li, H.~Wang, J.~Li, and H.~Gao, ``A survey of uncertain data management,''
  \emph{Frontiers Comput. Sci.}, vol.~14, no.~1, pp. 162--190, 2020. [Online].
  Available: \url{https://doi.org/10.1007/s11704-017-7063-z}
\BIBentrySTDinterwordspacing

\bibitem{DBLP:journals/mktsci/FayX08}
\BIBentryALTinterwordspacing
S.~A. Fay and J.~Xie, ``Probabilistic goods: {A} creative way of selling
  products and services,'' \emph{Mark. Sci.}, vol.~27, no.~4, pp. 674--690,
  2008. [Online]. Available: \url{https://doi.org/10.1287/mksc.1070.0318}
\BIBentrySTDinterwordspacing

\bibitem{DBLP:journals/eswa/ZhangCXLL18}
\BIBentryALTinterwordspacing
J.~Zhang, S.~Cui, Y.~Xu, Q.~Li, and T.~Li, ``A novel data-driven stock price
  trend prediction system,'' \emph{Expert Syst. Appl.}, vol.~97, pp. 60--69,
  2018. [Online]. Available: \url{https://doi.org/10.1016/j.eswa.2017.12.026}
\BIBentrySTDinterwordspacing

\bibitem{zoabi2021machine}
Y.~Zoabi, S.~Deri-Rozov, and N.~Shomron, ``Machine learning-based prediction of
  covid-19 diagnosis based on symptoms,'' \emph{npj digital medicine}, vol.~4,
  no.~1, p.~3, 2021.

\bibitem{mujumdar2019diabetes}
A.~Mujumdar and V.~Vaidehi, ``Diabetes prediction using machine learning
  algorithms,'' \emph{Procedia Computer Science}, vol. 165, pp. 292--299, 2019.

\bibitem{DBLP:conf/sigmod/GeZM09}
\BIBentryALTinterwordspacing
T.~Ge, S.~B. Zdonik, and S.~Madden, ``Top-\emph{k} queries on uncertain data:
  on score distribution and typical answers,'' in \emph{Proceedings of the
  {ACM} {SIGMOD} International Conference on Management of Data, {SIGMOD} 2009,
  Providence, Rhode Island, USA, June 29 - July 2, 2009}, U.~{\c{C}}etintemel,
  S.~B. Zdonik, D.~Kossmann, and N.~Tatbul, Eds.\hskip 1em plus 0.5em minus
  0.4em\relax {ACM}, 2009, pp. 375--388. [Online]. Available:
  \url{https://doi.org/10.1145/1559845.1559886}
\BIBentrySTDinterwordspacing

\bibitem{DBLP:conf/pods/AtallahQ09}
\BIBentryALTinterwordspacing
M.~J. Atallah and Y.~Qi, ``Computing all skyline probabilities for uncertain
  data,'' in \emph{Proceedings of the Twenty-Eigth {ACM} {SIGMOD-SIGACT-SIGART}
  Symposium on Principles of Database Systems, {PODS} 2009, June 19 - July 1,
  2009, Providence, Rhode Island, {USA}}, J.~Paredaens and J.~Su, Eds.\hskip
  1em plus 0.5em minus 0.4em\relax {ACM}, 2009, pp. 279--287. [Online].
  Available: \url{https://doi.org/10.1145/1559795.1559837}
\BIBentrySTDinterwordspacing

\bibitem{DBLP:conf/vldb/PeiJLY07}
\BIBentryALTinterwordspacing
J.~Pei, B.~Jiang, X.~Lin, and Y.~Yuan, ``Probabilistic skylines on uncertain
  data,'' in \emph{Proceedings of the 33rd International Conference on Very
  Large Data Bases, University of Vienna, Austria, September 23-27, 2007},
  C.~Koch, J.~Gehrke, M.~N. Garofalakis, D.~Srivastava, K.~Aberer,
  A.~Deshpande, D.~Florescu, C.~Y. Chan, V.~Ganti, C.~Kanne, W.~Klas, and E.~J.
  Neuhold, Eds.\hskip 1em plus 0.5em minus 0.4em\relax {ACM}, 2007, pp. 15--26.
  [Online]. Available:
  \url{http://www.vldb.org/conf/2007/papers/research/p15-pei.pdf}
\BIBentrySTDinterwordspacing

\bibitem{DBLP:journals/tods/AtallahQY11}
\BIBentryALTinterwordspacing
M.~J. Atallah, Y.~Qi, and H.~Yuan, ``Asymptotically efficient algorithms for
  skyline probabilities of uncertain data,'' \emph{{ACM} Trans. Database
  Syst.}, vol.~36, no.~2, pp. 12:1--12:28, 2011. [Online]. Available:
  \url{https://doi.org/10.1145/1966385.1966390}
\BIBentrySTDinterwordspacing

\bibitem{DBLP:journals/mst/AfshaniAALP13}
\BIBentryALTinterwordspacing
P.~Afshani, P.~K. Agarwal, L.~Arge, K.~G. Larsen, and J.~M. Phillips,
  ``(approximate) uncertain skylines,'' \emph{Theory Comput. Syst.}, vol.~52,
  no.~3, pp. 342--366, 2013. [Online]. Available:
  \url{https://doi.org/10.1007/s00224-012-9382-7}
\BIBentrySTDinterwordspacing

\bibitem{DBLP:journals/tkde/KimIP12}
\BIBentryALTinterwordspacing
D.~Kim, H.~Im, and S.~Park, ``Computing exact skyline probabilities for
  uncertain databases,'' \emph{{IEEE} Trans. Knowl. Data Eng.}, vol.~24,
  no.~12, pp. 2113--2126, 2012. [Online]. Available:
  \url{https://doi.org/10.1109/TKDE.2011.164}
\BIBentrySTDinterwordspacing

\bibitem{DBLP:conf/pakdd/NguyenC15}
\BIBentryALTinterwordspacing
H.~T.~H. Nguyen and J.~Cao, ``Preference-based top-k representative skyline
  queries on uncertain databases,'' in \emph{Advances in Knowledge Discovery
  and Data Mining - 19th Pacific-Asia Conference, {PAKDD} 2015, Ho Chi Minh
  City, Vietnam, May 19-22, 2015, Proceedings, Part {II}}, ser. Lecture Notes
  in Computer Science, T.~H. Cao, E.~Lim, Z.~Zhou, T.~B. Ho, D.~W. Cheung, and
  H.~Motoda, Eds., vol. 9078.\hskip 1em plus 0.5em minus 0.4em\relax Springer,
  2015, pp. 280--292. [Online]. Available:
  \url{https://doi.org/10.1007/978-3-319-18032-8\_22}
\BIBentrySTDinterwordspacing

\bibitem{DBLP:conf/sigmod/AbiteboulKG87}
\BIBentryALTinterwordspacing
S.~Abiteboul, P.~C. Kanellakis, and G.~Grahne, ``On the representation and
  querying of sets of possible worlds,'' in \emph{Proceedings of the
  Association for Computing Machinery Special Interest Group on Management of
  Data 1987 Annual Conference, San Francisco, CA, USA, May 27-29, 1987},
  U.~Dayal and I.~L. Traiger, Eds.\hskip 1em plus 0.5em minus 0.4em\relax {ACM}
  Press, 1987, pp. 34--48. [Online]. Available:
  \url{https://doi.org/10.1145/38713.38724}
\BIBentrySTDinterwordspacing

\bibitem{DBLP:journals/is/ZhangZLJP11}
\BIBentryALTinterwordspacing
Y.~Zhang, W.~Zhang, X.~Lin, B.~Jiang, and J.~Pei, ``Ranking uncertain sky: The
  probabilistic top-k skyline operator,'' \emph{Inf. Syst.}, vol.~36, no.~5,
  pp. 898--915, 2011. [Online]. Available:
  \url{https://doi.org/10.1016/j.is.2011.03.008}
\BIBentrySTDinterwordspacing

\bibitem{DBLP:journals/ijon/YangLZMG18}
\BIBentryALTinterwordspacing
Z.~Yang, K.~Li, X.~Zhou, J.~Mei, and Y.~Gao, ``Top \emph{k} probabilistic
  skyline queries on uncertain data,'' \emph{Neurocomputing}, vol. 317, pp.
  1--14, 2018. [Online]. Available:
  \url{https://doi.org/10.1016/j.neucom.2018.03.052}
\BIBentrySTDinterwordspacing

\bibitem{DBLP:journals/isci/YongLKH14}
\BIBentryALTinterwordspacing
H.~Yong, J.~Lee, J.~Kim, and S.~Hwang, ``Skyline ranking for uncertain
  databases,'' \emph{Inf. Sci.}, vol. 273, pp. 247--262, 2014. [Online].
  Available: \url{https://doi.org/10.1016/j.ins.2014.03.044}
\BIBentrySTDinterwordspacing

\bibitem{DBLP:conf/icde/HuaPZL08}
\BIBentryALTinterwordspacing
M.~Hua, J.~Pei, W.~Zhang, and X.~Lin, ``Efficiently answering probabilistic
  threshold top-k queries on uncertain data,'' in \emph{Proceedings of the 24th
  International Conference on Data Engineering, {ICDE} 2008, April 7-12, 2008,
  Canc{\'{u}}n, Mexico}, G.~Alonso, J.~A. Blakeley, and A.~L.~P. Chen,
  Eds.\hskip 1em plus 0.5em minus 0.4em\relax {IEEE} Computer Society, 2008,
  pp. 1403--1405. [Online]. Available:
  \url{https://doi.org/10.1109/ICDE.2008.4497570}
\BIBentrySTDinterwordspacing

\bibitem{DBLP:conf/icde/SolimanIC07}
\BIBentryALTinterwordspacing
M.~A. Soliman, I.~F. Ilyas, and K.~C. Chang, ``Top-k query processing in
  uncertain databases,'' in \emph{Proceedings of the 23rd International
  Conference on Data Engineering, {ICDE} 2007, The Marmara Hotel, Istanbul,
  Turkey, April 15-20, 2007}, R.~Chirkova, A.~Dogac, M.~T. {\"{O}}zsu, and
  T.~K. Sellis, Eds.\hskip 1em plus 0.5em minus 0.4em\relax {IEEE} Computer
  Society, 2007, pp. 896--905. [Online]. Available:
  \url{https://doi.org/10.1109/ICDE.2007.367935}
\BIBentrySTDinterwordspacing

\bibitem{DBLP:journals/dpd/WangSY16}
\BIBentryALTinterwordspacing
X.~Wang, D.~Shen, and G.~Yu, ``Uncertain top-k query processing in distributed
  environments,'' \emph{Distributed Parallel Databases}, vol.~34, no.~4, pp.
  567--589, 2016. [Online]. Available:
  \url{https://doi.org/10.1007/s10619-015-7188-8}
\BIBentrySTDinterwordspacing

\bibitem{DBLP:conf/icde/YiLKS08}
\BIBentryALTinterwordspacing
K.~Yi, F.~Li, G.~Kollios, and D.~Srivastava, ``Efficient processing of top-k
  queries in uncertain databases,'' in \emph{Proceedings of the 24th
  International Conference on Data Engineering, {ICDE} 2008, April 7-12, 2008,
  Canc{\'{u}}n, Mexico}, G.~Alonso, J.~A. Blakeley, and A.~L.~P. Chen,
  Eds.\hskip 1em plus 0.5em minus 0.4em\relax {IEEE} Computer Society, 2008,
  pp. 1406--1408. [Online]. Available:
  \url{https://doi.org/10.1109/ICDE.2008.4497571}
\BIBentrySTDinterwordspacing

\bibitem{DBLP:journals/pvldb/MouratidisT18}
\BIBentryALTinterwordspacing
K.~Mouratidis and B.~Tang, ``Exact processing of uncertain top-k queries in
  multi-criteria settings,'' \emph{Proc. {VLDB} Endow.}, vol.~11, no.~8, pp.
  866--879, 2018. [Online]. Available:
  \url{http://www.vldb.org/pvldb/vol11/p866-mouratidis.pdf}
\BIBentrySTDinterwordspacing

\bibitem{DBLP:conf/sigmod/MouratidisL021}
\BIBentryALTinterwordspacing
K.~Mouratidis, K.~Li, and B.~Tang, ``Marrying top-k with skyline queries:
  Relaxing the preference input while producing output of controllable size,''
  in \emph{{SIGMOD} '21: International Conference on Management of Data,
  Virtual Event, China, June 20-25, 2021}, G.~Li, Z.~Li, S.~Idreos, and
  D.~Srivastava, Eds.\hskip 1em plus 0.5em minus 0.4em\relax {ACM}, 2021, pp.
  1317--1330. [Online]. Available:
  \url{https://doi.org/10.1145/3448016.3457299}
\BIBentrySTDinterwordspacing

\bibitem{DBLP:journals/pvldb/QianGJ15}
\BIBentryALTinterwordspacing
L.~Qian, J.~Gao, and H.~V. Jagadish, ``Learning user preferences by adaptive
  pairwise comparison,'' \emph{Proc. {VLDB} Endow.}, vol.~8, no.~11, pp.
  1322--1333, 2015. [Online]. Available:
  \url{http://www.vldb.org/pvldb/vol8/p1322-qian.pdf}
\BIBentrySTDinterwordspacing

\bibitem{DBLP:conf/stacs/Bringmann19}
\BIBentryALTinterwordspacing
K.~Bringmann, ``Fine-grained complexity theory (tutorial),'' in \emph{36th
  International Symposium on Theoretical Aspects of Computer Science, {STACS}
  2019, March 13-16, 2019, Berlin, Germany}, ser. LIPIcs, R.~Niedermeier and
  C.~Paul, Eds., vol. 126.\hskip 1em plus 0.5em minus 0.4em\relax Schloss
  Dagstuhl - Leibniz-Zentrum f{\"{u}}r Informatik, 2019, pp. 4:1--4:7.
  [Online]. Available: \url{https://doi.org/10.4230/LIPIcs.STACS.2019.4}
\BIBentrySTDinterwordspacing

\bibitem{agarwal2017simplex}
P.~K. Agarwal, ``Simplex range searching and its variants: A review,'' \emph{A
  Journey Through Discrete Mathematics}, pp. 1--30, 2017.

\bibitem{DBLP:conf/cikm/LiuZXLL15}
\BIBentryALTinterwordspacing
J.~Liu, H.~Zhang, L.~Xiong, H.~Li, and J.~Luo, ``Finding probabilistic
  k-skyline sets on uncertain data,'' in \emph{Proceedings of the 24th {ACM}
  International Conference on Information and Knowledge Management, {CIKM}
  2015, Melbourne, VIC, Australia, October 19 - 23, 2015}, J.~Bailey,
  A.~Moffat, C.~C. Aggarwal, M.~de~Rijke, R.~Kumar, V.~Murdock, T.~K. Sellis,
  and J.~X. Yu, Eds.\hskip 1em plus 0.5em minus 0.4em\relax {ACM}, 2015, pp.
  1511--1520. [Online]. Available:
  \url{https://doi.org/10.1145/2806416.2806452}
\BIBentrySTDinterwordspacing

\bibitem{DBLP:journals/tkde/LiuYYL13}
\BIBentryALTinterwordspacing
X.~Liu, D.~Yang, M.~Ye, and W.~Lee, ``U-skyline: {A} new skyline query for
  uncertain databases,'' \emph{{IEEE} Trans. Knowl. Data Eng.}, vol.~25, no.~4,
  pp. 945--960, 2013. [Online]. Available:
  \url{https://doi.org/10.1109/TKDE.2012.33}
\BIBentrySTDinterwordspacing

\bibitem{gao2023computing}
\BIBentryALTinterwordspacing
X.~Gao, J.~Li, and D.~Miao, ``Computing all restricted skyline probabilities on
  uncertain datasets,'' 2023. [Online]. Available:
  \url{https://github.com/gaoxy914/ARSP/blob/main/ARSP.pdf}
\BIBentrySTDinterwordspacing

\bibitem{DBLP:journals/ior/DyerS79}
\BIBentryALTinterwordspacing
J.~S. Dyer and R.~K. Sarin, ``Measurable multiattribute value functions,''
  \emph{Oper. Res.}, vol.~27, no.~4, pp. 810--822, 1979. [Online]. Available:
  \url{https://doi.org/10.1287/opre.27.4.810}
\BIBentrySTDinterwordspacing

\bibitem{preparata2012computational}
F.~P. Preparata and M.~I. Shamos, \emph{Computational geometry: an
  introduction}.\hskip 1em plus 0.5em minus 0.4em\relax Springer Science \&
  Business Media, 2012.

\bibitem{DBLP:journals/toms/BarberDH96}
\BIBentryALTinterwordspacing
C.~B. Barber, D.~P. Dobkin, and H.~Huhdanpaa, ``The quickhull algorithm for
  convex hulls,'' \emph{{ACM} Trans. Math. Softw.}, vol.~22, no.~4, pp.
  469--483, 1996. [Online]. Available:
  \url{https://doi.org/10.1145/235815.235821}
\BIBentrySTDinterwordspacing

\bibitem{greenfield1990proof}
J.~S. Greenfield, ``A proof for a quickhull algorithm,'' 1990.

\bibitem{henk2017basic}
M.~Henk, J.~Richter-Gebert, and G.~M. Ziegler, ``Basic properties of convex
  polytopes,'' in \emph{Handbook of discrete and computational geometry}.\hskip
  1em plus 0.5em minus 0.4em\relax Chapman and Hall/CRC, 2017, pp. 383--413.

\bibitem{mark2008computational}
d.~B. Mark, C.~Otfried, v.~K. Marc, and O.~Mark, \emph{Computational Geometry
  Algorithms and Applications}.\hskip 1em plus 0.5em minus 0.4em\relax Spinger,
  2008.

\bibitem{DBLP:journals/iandc/Meiser93}
\BIBentryALTinterwordspacing
S.~Meiser, ``Point location in arrangements of hyperplanes,'' \emph{Inf.
  Comput.}, vol. 106, no.~2, pp. 286--303, 1993. [Online]. Available:
  \url{https://doi.org/10.1006/inco.1993.1057}
\BIBentrySTDinterwordspacing

\bibitem{DBLP:journals/ipl/Bentley79}
\BIBentryALTinterwordspacing
J.~L. Bentley, ``Decomposable searching problems,'' \emph{Inf. Process. Lett.},
  vol.~8, no.~5, pp. 244--251, 1979. [Online]. Available:
  \url{https://doi.org/10.1016/0020-0190(79)90117-0}
\BIBentrySTDinterwordspacing

\bibitem{DBLP:journals/vldb/JinYCYL10}
\BIBentryALTinterwordspacing
C.~Jin, K.~Yi, L.~Chen, J.~X. Yu, and X.~Lin, ``Sliding-window top-\emph{k}
  queries on uncertain streams,'' \emph{{VLDB} J.}, vol.~19, no.~3, pp.
  411--435, 2010. [Online]. Available:
  \url{https://doi.org/10.1007/s00778-009-0171-0}
\BIBentrySTDinterwordspacing

\bibitem{DBLP:conf/icde/BorzsonyiKS01}
\BIBentryALTinterwordspacing
S.~B{\"{o}}rzs{\"{o}}nyi, D.~Kossmann, and K.~Stocker, ``The skyline
  operator,'' in \emph{Proceedings of the 17th International Conference on Data
  Engineering, April 2-6, 2001, Heidelberg, Germany}, D.~Georgakopoulos and
  A.~Buchmann, Eds.\hskip 1em plus 0.5em minus 0.4em\relax {IEEE} Computer
  Society, 2001, pp. 421--430. [Online]. Available:
  \url{https://doi.org/10.1109/ICDE.2001.914855}
\BIBentrySTDinterwordspacing

\bibitem{DBLP:journals/cor/EumPK01}
\BIBentryALTinterwordspacing
Y.~S. Eum, K.~S. Park, and S.~H. Kim, ``Establishing dominance and potential
  optimality in multi-criteria analysis with imprecise weight and value,''
  \emph{Comput. Oper. Res.}, vol.~28, no.~5, pp. 397--409, 2001. [Online].
  Available: \url{https://doi.org/10.1016/S0305-0548(99)00124-0}
\BIBentrySTDinterwordspacing

\bibitem{github}
\BIBentryALTinterwordspacing
X.~Gao. (2022, oct) Source code. [Online]. Available:
  \url{https://github.com/gaoxy914/ARSP}
\BIBentrySTDinterwordspacing

\end{thebibliography}
	
\end{document}